\theoremstyle{plain}
\newtheorem{thm}{Theorem}
\newtheorem{lem}[thm]{Lemma}
\newtheorem{prop}[thm]{Proposition}
\newtheorem{cor}[thm]{Corollary}
\theoremstyle{definition}
\newtheorem{defn}[thm]{Definition}
\theoremstyle{remark}
\renewenvironment{proof}[1][Proof]{\noindent\textbf{#1.} }{\ $\Box$}
\definecolor{nblue}{rgb}{0.2,0.2,0.7}
\definecolor{ngreen}{rgb}{0.1,0.5,0.1}
\definecolor{nred}{rgb}{0.8,0.2,0.2}
\definecolor{nblack}{rgb}{0,0,0}
\newcommand{\A}{\mathbb{A}}
\newcommand{\D}{\mathcal{D}}
\newcommand{\G}{\mathcal{G}}
\newcommand{\E}{\mathcal{E}}
\newcommand{\M}{\mathcal{M}}
\newcommand{\R}{\mathbb{R}}
\newcommand{\C}{\mathbb{C}}
\newcommand{\GL}{\mathsf{GL}}
\newcommand{\U}{\mathsf{U}}
\newcommand{\unit}{\mathbbm{1}}
\newcommand{\mc}[1]{\mathcal{#1}}
\newcommand{\mbb}[1]{\mathbb{#1}}
\newcommand{\density}[1]{|#1\rangle\langle #1|}
\newcommand{\inner}[2]{\langle #1,#2\rangle}
\newcommand{\ket}[1]{|#1\rangle}
\newcommand{\bra}[1]{\langle #1|}
\newcommand{\tr}{\mathrm{Tr}}
\newcommand{\nt}[1]{^{\otimes #1}}
\newcommand{\norm}[1]{\left\lVert #1\right\rVert}
\newcommand{\av}[1]{\left\lvert #1\right\rvert}
\newcommand{\ct}{^{\dagger}}
\newcommand{\e}{\textrm{e}}
\newcommand{\pa}[1]{\left(#1\right)}
\newcommand{\bpa}[1]{\bigl(#1\bigr)}
\newcommand{\br}[1]{\left\{#1\right\}}
\newcommand{\sq}[1]{\left[#1\right]}
\newcommand{\bsq}[1]{\bigl[#1\bigr]}
\newcommand{\bgsq}[1]{\biggl[#1\biggr]}
\definecolor{darkblue}{RGB}{0,0,127} 
\definecolor{darkgreen}{RGB}{0,150,0}
\begin{document}
\title{Randomized Benchmarking with Confidence}

\author{Joel J.\ \surname{Wallman}}
\affiliation{Institute for Quantum Computing, University of Waterloo, Waterloo, Canada}
\affiliation{Centre for Engineered Quantum Systems, School of Physics, The University of Sydney, Sydney, NSW 2006, Australia}
\author{Steven T.\ \surname{Flammia}}
\affiliation{Centre for Engineered Quantum Systems, School of Physics, The University of Sydney, Sydney, NSW 2006, Australia}

\date{\today}

\begin{abstract}
Randomized benchmarking is a promising tool for characterizing the noise in experimental implementations of quantum systems. In this paper, we prove that the estimates produced by randomized benchmarking (both standard and interleaved) for arbitrary Markovian noise sources are remarkably precise by showing that the variance due to sampling random gate sequences is small. We discuss how to choose experimental parameters, in particular the number and lengths of random sequences, in order to characterize average gate errors with rigorous confidence bounds. We also show that randomized benchmarking can be used to reliably characterize time-dependent Markovian noise (e.g., when noise is due to a magnetic field with fluctuating strength). Moreover, we identify a necessary property for time-dependent noise that is violated by some sources of non-Markovian noise, which provides a test for non-Markovianity.
\end{abstract}

\maketitle

\section{Introduction}

One of the key obstacles to realizing large-scale quantum computation is the need for error correction and fault tolerance~\cite{Gottesman2009}, which require the coherent implementation of unitary operations to high precision. Characterizing the accuracy of an experimental implementation of a unitary operation is therefore an important prerequisite for constructing a large-scale quantum computer.

It is possible to completely characterize an experimental implementation of a unitary using full quantum process tomography~\cite{Chuang1997, Poyatos1997}. However, this approach has several major deficiencies when applied to large quantum systems. Firstly, it is provably exponential in the number of qubits of the system for any procedure that can identify general noise sources and hence it cannot be performed practically for even intermediate numbers of qubits, despite improvements such as compressed sensing~\cite{Flammia2012, Gross2010a}. Secondly, it is sensitive to state preparation and measurement (SPAM) errors, which create a noise floor below which an accurate process estimation becomes impossible~\cite{Merkel2012}. Finally, it does not capture any notion of systematic, time-dependent errors that can arise from applying many unitaries in sequence.

One can avoid the exponential scaling by accepting a partial characterization of an experimental implementation. A partial characterization of, for example, the average error rate and/or the worst-case error rate compared to a perfect implementation of a target unitary is typically enough to determine whether an experimental implementation of a unitary is sufficient for achieving fault-tolerance in a specific scheme for fault-tolerant quantum computation. Such partial characterizations can be obtained efficiently (in the number of quantum systems) using either randomized benchmarking~\cite{Emerson2005, Knill2008, Magesan2011, Magesan2012a, Gaebler2012, Magesan2012} or direct fidelity estimation~\cite{DaSilva2011, Flammia2011}. 

While direct fidelity estimation gives an unconditional and assumption-free estimate of the average gate fidelity, it is prone to state preparation and measurement (SPAM) errors, which leads to conflation of noise sources. Thus, a key advantage of randomized benchmarking is that it is not sensitive to SPAM errors. Unfortunately, however, current proposals for randomized benchmarking assume that the noise is time-independent, although time-dependence can be partially characterized by a deviation from the expected fidelity decay curve~\cite{Magesan2011, Magesan2012a}. Furthermore, experimental implementations of randomized benchmarking typically use on the order of 100 random sequences of Clifford gates, which is three orders of magnitude smaller than the number of sequences suggested by the rigorous bounds in Ref.~\cite{Magesan2012a} to obtain an accuracy comparable to the claimed experimental accuracies~\cite{Magesan2012, Brown2011}. Numerical investigations of a variety of noise models have shown that between 10--100 random sequences for each length are sufficient to provide a tight estimate of the average gate fidelity~\cite{Epstein2014}. Ideally, one would like to combine the advantages of both randomized benchmarking and direct fidelity estimation to achieve a method that is insensitive to SPAM, requires few measurements, is nearly assumption-free (i.e., does not assume a specific noise model), and comes with rigorous guarantees on the errors involved. 

In this paper, we provide a new analysis of randomized benchmarking which brings it closer in line with this ideal. We first show that the standard protocol can be modified to provide a means of estimating the time-dependent average gate fidelity (which characterizes the average error rate), provided that the gate-dependent fluctuations at each time step are sufficiently small. Under the assumption that the noise is Markovian (that is, that the noise can be written as a sequence of noisy channels acting on the system of interest), all the time-dependent parameters that are estimated by our procedure are upper-bounded by 1, so if some of the parameters are observed to be greater than 1, the experimental noise must be non-Markovian.

We then provide a rigorous justification for taking a small number of random sequences at each length that is on the same order as used in practice by obtaining bounds on the variance due to sampling gate sequences. Our work complements the approach of Ref.~\cite{Epstein2014}, where it was shown that the width of the confidence interval for the parameters extracted from randomized benchmarking is on the order of the square root of the variance. Our work therefore proves that this confidence interval is generally very narrow, that is, the parameters extracted from randomized benchmarking are determined with high precision. 

Numerically, we observe that our bounds (at least for qubits) are saturated and so cannot be improved without further assumptions on the noise (e.g., that the noise is diagonal in the Pauli basis). Therefore any experiments using fewer random sequences than justified by our analysis (unless there is solid evidence that the noise has a specific structure) will potentially underestimate the error due to sampling random sequences.

As a particular example, our results provide a rigorous proof that for single-qubit noise with an average error rate of $10^{-4}$, the error for randomized benchmarking with 100 random sequences of 100 random gates will be less than $0.9\%$ with $99\%$ confidence. If we use the parameters estimated in the experiment of Ref.~\cite{Brown2011}, with 100 random sequences of length 987 at an average error rate of $2\times 10^{-5}$, we find the error is less than $.8\%$ with $99\%$ confidence.

We emphasize that our results are solely in terms of the number of random gate \emph{sequences}, and a given sequence must still be repeated many times to gather statistics about expectation values of an observable. This is of course an unavoidable consequence of quantum mechanics. However, these statistical fluctuations in the estimates of expectation values can be analyzed separately with standard statistical tools for binomial distributions or with the recent Bayesian methods introduced in~\cite{Granade2014} and combined seamlessly with our results.

In order to give a rigorous statement of results, we will first review the randomized benchmarking protocol.

\section{The randomized benchmarking protocol}\label{sec:protocol}

The goal of randomized benchmarking is to efficiently but partially characterize the average noise in an experimental implementation of a group $\G=\{g_1,\ldots,g_{\av{\G}}\}\subset \U(d)$ of operations acting on a $d$-dimensional quantum system. In order to characterize the average noise in an implementation of $\G$ using randomized benchmarking, we require $\G$ to be a unitary 2-design (e.g., the Clifford group on $n$ qubits for $d=2^n$), meaning that sampling over $\G$ reproduces the second moments of the Haar measure~\cite{Dankert2009,Gross2007a}. To accomplish this, the following protocol is implemented.
\begin{itemize}
	\item Choose a random sequence $s=s_1\ldots s_m\in\mbb{N}_{\av{G}}^m$ of $m$ integers chosen uniformly at random from $\mbb{N}_{\av{G}}=\br{1,\ldots,\av{\G}}$.
	\item Prepare a $d$-dimensional system in some state $\rho$ (usually taken to be the pure state $\ket{0}$). 
	\item At each time step $t=0,\ldots,m$, apply $g_t$ where $g_t = g_{s_t}$ and $g_0 := \prod_{t=1}^{m} g_t^{-1}$. Alternatively, to perform interleaved randomized benchmarking for the gate $g_{\rm int}\in\G$, apply $g_{t,{\rm int}}$ where $g_{t,{\rm int}} = g_{\rm int} g_t$ for $t\neq 0$ and, as before, $g_{0,{\rm int}} = \prod_{t=1}^{m} g_{t,{\rm int}}^{-1}$. (In general, each gate must be compiled into a sequence of elementary gates as well.)
	\item Perform a POVM $\br{E,\unit - E}$ for some $E$ (usually taken to be $|0\rangle\!\langle0|$) and repeat with the sequence $s$ sufficiently many times to obtain an estimate of the probability $F_{m,s} = p(E|s,\rho)$ to a suitable precision.
\end{itemize}

We can regard the probability $F_{m,s}$ as a realization of a random variable $F_m$. We will denote the variance of the distribution $\{F_{m,s}:s\in\mbb{N}_{\av{G}}\}$ for a fixed $m$ by $\sigma_m^2$. Averaging $F_{m,s}$ over a number of random sequences will give an estimate $\hat{F}_m$ of $\bar{F}_m$, the average of $F_{m,s}$ over all sequences $s$ of fixed length $m$ (that is, $\bar{F}_m$ is the expectation of the random variable $F_m$). The accuracy of this estimate will be a function of the number of random sequences and $\sigma_m^2$. 

Obtaining estimates $\hat{F}_m$ for multiple $m$ and fitting to the model
\begin{align}
	\bar{F}_m = A + B f^m
\end{align}
will give an estimate of $f$ provided that the noise does not depend too strongly on the target gate~\cite{Magesan2012a}, where~\cite{Nielsen2002}
\begin{align}
	f = \frac{d\mc{F}_{\rm avg}(\mc{E}) - 1}{d-1}
\end{align}
and
\begin{align}
	\mc{F}_{\rm avg}(\mc{E}) &= \int \mathrm{d}\psi \tr\bsq{\psi\mc{E}(\psi)}
\end{align}
is the average gate fidelity of a noise channel $\mc{E}$ with respect to the identity channel and $\mathrm{d}\psi$ is the uniform Haar measure over all pure states. The average gate fidelity of $\mc{E}$ gives the average probability that preparing a state $\psi$, applying $\mc{E}$ and then measuring $\{\psi,\unit-\psi\}$ will give the outcome $\psi$, averaged over all pure states $\psi$.

For standard randomized benchmarking, $\mc{E}$ is the error channel per operation, averaged over all operations in $\mc{G}$. For interleaved benchmarking, $\mc{E}$ is the error channel on a composite channel, namely, the interleaved channel composed with an element of $\G$, averaged over all $\G$. We note in passing that separating the error in the interleaved channel from the error in the composite channel is one of the key difficulties in obtaining meaningful results from interleaved benchmarking~\cite{Kimmel2013}, though we do not address this issue here.

\section{Statement of Results and Paper Outline}\label{sec:results}

The first principal contribution of this paper is to show that the number of random sequences that need to be averaged is comparable to the number actually used in contemporary experiments (compared to previous best estimates, which require 3 orders of magnitude more random sequences than currently used). The second principal contribution is to show that randomized benchmarking can be used to characterize time-dependent fluctuations in the noise strength.

In more detail, and in order of appearance, we show the following.
\begin{itemize}
	\item We use the results derived later in the paper to obtain explicit confidence intervals for the estimates $\hat{F}_m$ when $mr\ll 1$, where $r = 1-\mc{F}_{\rm avg}(\mc{E})$ is the average gate infidelity (Sec.~\ref{sec:confidence}).
	\item Again, using results derived later, we show that a more thorough analysis of randomized benchmarking data can be used to characterize time-dependent Markovian noise, and consequently as a sufficient condition for the presence of non-Markovian noise in a system (Sec.~\ref{sec:time_dependent}).
	\item We review representation theory and the Liouville representation of quantum channels and prove some elementary results (Sec.~\ref{sec:preliminaries}). We give an explicit proof of bounds on the diamond norm (which characterizes the worst-case error rate) in terms of the average gate fidelity (which characterizes the average error rate). These give slight improvements over previously stated (but unproven) bounds (Sec.~\ref{sec:fidelities}).
	\item We derive an expression for the mean of the 
	randomized benchmarking distribution with time-dependent noise (Sec.~\ref{sec:mean}).
	\item  We show that the variance for randomized benchmarking $d$-level systems with average gate infidelity and sequences of length $m$ satisfies
	\begin{align}
		\sigma_m^2 \leq 4d(d+1)mr + O(m^2 r^2 d^4)\,.
	\end{align}
	Furthermore, we provide an argument that suggests that this bound can be improved to
	\begin{align}
		\sigma_m^2 \leq mr + O(m^2 r^2 d^4)\,.
	\end{align}
	
	\item For qubits, we improve the upper bound to
	\begin{align}
		\sigma_m^2 \leq m^2 r^2 + \frac{7 mr^2}{4} + 6\delta mr +  O(m^2 r^3) + O(\delta m^2 r^2)\,,
	\end{align}
	where $\delta$ quantifies the deviation from preparations and measurements in a Pauli eigenstate. We use this improved bound to derive confidence intervals that rigorously justify the use of a small number of random sequences for qubits in the regime $mr\ll 1$.
	
	\item For the special case of single-qubit noise that is diagonal in the Pauli basis, we further improve the upper bound to
	\begin{align}
		\sigma_m^2 \leq \frac{11 mr^2}{4} + O(m^2 r^3)	\,,
	\end{align}
	which is independent of preparations and measurements.
	
	\item We show that the variance for unital (but nonunitary) channels decays exponentially to zero asymptotically, 
	while the variance for nonunital noise converges exponentially to a 
	positive constant proportional to the degree of nonunitality (as suitably quantified).
	
	\item We prove that our results are robust under gate-dependent noise, which is one of the key assumptions under which randomized benchmarking produces a meaningful result. Furthermore, since our results apply to interleaved randomized benchmarking, gate dependence can be experimentally tested and used to bound the contribution from gate-dependent terms.
\end{itemize}

\section{Analyzing data from randomized benchmarking with finite sampling}\label{sec:analysis}

In this section, we summarize the implications of our results for analyzing the data obtained from randomized benchmarking experiments. In particular, we derive confidence intervals for the estimates $\hat{F}_m$ of $\bar{F}_m$ and show how randomized benchmarking can be used to characterize time-dependent noise.

\subsection{Confidence interval for randomized benchmarking}\label{sec:confidence}

For a fixed sequence length $m$, randomized benchmarking provides an estimate $\hat{F}_m$ of $\bar{F}_m$, which is exact in the limit when all random sequences are sampled. We will only consider the variance $\sigma_m^2$ due to sampling a finite number $K_m$ of random sequences of length $m$, and we ignore the random fluctuations resulting from the use of a finite number of measurements to estimate a probability.

In Ref.~\cite{Magesan2012a}, the variance-independent form of Hoeffding's inequality was used to estimate the number of sequences $K_m$ required to obtain a given level of accuracy. The estimate in Ref.~\cite{Magesan2012a} erroneously restricted the range of the random variable in Hoeffding's inequality. That is, they assumed that all the probabilities $F_{m,s}$ lay in a strict subset of $[0,1]$. This assumption, while valid for depolarizing noise, is not valid in general. A simple counterexample is where the noise is a single-qubit preparation channel into the $\density{0}$ state and $\rho = E = \density{0}$. Then any sequence of $m$ gates ending in an identity gate or a $z$-axis rotation has $F_{m,s} = 1$, while any sequence ending in an $X$ gate gives $F_{m,s}=0$. Correcting for this (which does not change any of the conclusions of Ref.~\cite{Magesan2012a}), the variance-independent form of Hoeffding's inequality requires $10^5$ samples to ensure that the estimate $\bar{F}_m$ is within $5\times 10^{-3}$ of the true mean $\bar{F}_m$ with $99\%$ probability. However, many experimental implementations of randomized benchmarking only use 30--100 sequences for each value of $m$~\cite{Brown2011, Gambetta2012, Magesan2012}. 

One of the principal contributions of this paper is to provide a theoretical justification for choosing a relatively small number of sequences by showing that the variance is small for the short sequences that are of practical relevance. For the special case of qubits, we show that even for small $m$ (e.g.\ $m\approx 100$) the variance is at most $4\times 10^{-4}$ for currently achievable gate infidelities $r\approx 10^{-4}$, which is comparable to the numerical estimates presented in Fig.~\ref{fig:numerical}. Utilizing this very small variance gives substantial improvements over the previous rigorous bounds obtained in Refs.~\cite{Magesan2012a, Kimmel2013}. However, our bound on the variance (which is numerically almost optimal for qubits) implies that $K_m$ should scale \emph{quadratically} with $m$ to make the variance is independent of $m$.

Our upper bound $\sigma_m^2 \leq m^2 r^2 + \tfrac{7}{4} m r^2 + O(m^2 r^3)$ (for qubits, neglecting the negligible $\delta r$ terms) can be used together with a stronger version of Hoeffding's inequality~\cite{Hoeffding1963} to obtain a rigorous confidence interval comparable to the standard errors of the mean reported in current experiments~\cite{Brown2011}. The stronger version of Hoeffding's inequality implies that
\begin{align}
	\Pr\biggl(\av{\hat{F}_m - \bar{F}_m} > \epsilon\biggr) \leq 2 \Bigl[H(\epsilon,\sigma_m^2)\Bigr]^{K}\,,
\end{align}
where $K$ is the number of randomly sampled sequences of length $m$ and 
\begin{align}
	H(\epsilon,v) = \Bigl(\frac{1}{1-\epsilon}\Bigr)^{\frac{1-\epsilon}{v+1}} \Bigl(\frac{v}{v+\epsilon}\Bigr)^{\frac{v+\epsilon}{v+1}}\,.
\end{align}
Consequently, sampling
\begin{align}
	K = -\frac{\log\bigl(2/\delta\bigr)}{\log\bigl(H(\epsilon,\sigma_m^2)\bigr)}
\end{align}
random sequences is sufficient to obtain an absolute precision of $\epsilon$ with probability $1-\delta$. Since $r$ is determined by the fitting procedure, which in turn depends on the uncertainties, this procedure would be applied recursively with an initial upper bound on $r$. Similarly, for qudits, a straightforward generalization of the above argument can be used, but with $\sigma_m^2 \leq 4d(d+1)mr + O(d^4 m^2 r^2)$. (There are various inefficiencies in this estimate which mean that it does not reduce to the same answer as above for $d=2$; see Theorem~\ref{thm:qudit_bound} for more details.)

To get a feel for the sort of estimates that this bound provides, consider the following parameters for a single-qubit benchmarking experiment: $m=100, r=10^{-4}, \epsilon = 1\%, \delta = 1\%$, and use our upper bound of $\sigma^2_m = m^2 r^2 + \tfrac{7}{4} m r^2$ (ignoring the higher-order terms). Then our bound shows that $K = 145$ random sequences suffices. This is an improvement by \textit{orders of magnitude} over the previous best rigorously justifiable upper bound of $10^5$ using the variance-independent Hoeffding inequality~\cite{Magesan2012a}. 

Importantly, however, we note that the quadratic scaling with $m$ in the regime $m r \ll 1$ seems to be necessary (see Fig.~\ref{fig:numerical}). Even in the optimal case of noise that is diagonal in the Pauli basis, $K_m$ would still need to scale linearly with $m$ to make the variance independent of $m$ (where having the variance depend on $m$ would generally cause less weight to be assigned to larger $m$ when fitting). The linear scaling can be understood intuitively as following from the fact that there are $m$ places for an error in a sequence of length $m$, and the errors could add up in the worst case. Therefore, a corollary of our result is that longer sequence lengths should be averaged over more random sequences in this regime.

Furthermore, we prove in Sec.~\ref{sec:asymptotic} that there are noise sources such that the variance due to sampling random sequences is constant (or decays on an arbitrarily long timescale). If such noise sources (including nonunital noise and any unitary noise, such as over- and under-rotations) are believed to be present, substantially more random sequences need to be sampled. As such, randomized benchmarking is most reliable in the regime $mr\ll 1$, although, since the next lowest order terms in our bound are $\delta m^2 r^2$ and $m^2 r^3$, the lowest order bounds on the variance should be approximately valid for $mr\approx 0.1$.

\begin{figure}[tb]
	\includegraphics{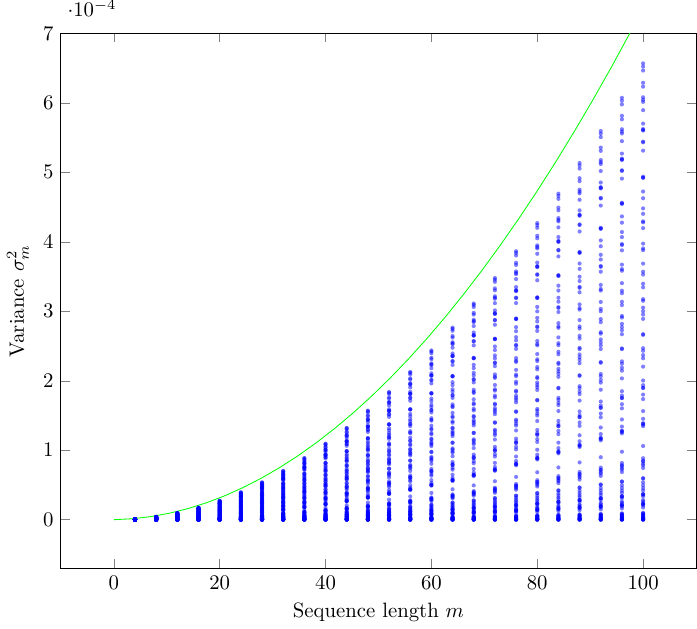}
	\caption{\label{fig:numerical} Plot of a random sampling of the exact variance $\sigma_m^2$ as a function of the sequence length $m$ for randomized benchmarking with 100 randomly generated, time-independent noisy qubit channels. The noise was sampled from the set of extremal qubit channels, characterized in Ref.~\cite{Ruskai2002}, with average gate infidelity $r \lesssim 2.69\times 10^{-4}$. Each channel was evolved for increasing sequence lengths to track the behavior of the variance as a function of $m$, which is why the data points track parabolic curves (furthermore, the spread in the parabolic curves is generated by the spread in the infidelity of the samples). The green curve is the upper bound $\sigma_m^2 = m^2 r^2 + \tfrac{7 m r^2}{4}$, where we neglect the corrections to the bound at order $O(m^2 r^3)$ and corrections due to measurement imprecision. Note that our bound is almost optimal. Our analytic results show that the variance $\sigma_m^2$ will \emph{increase} with $m$, at least until some threshold sequence length where the exponential decay for generic channels proven in Theorem~\ref{thm:asymptotic} begins to dominate.}
\end{figure}

\subsection{Characterizing time-dependent noise}\label{sec:time_dependent}

The original presentation of randomized benchmarking assumed that the noise was approximately time independent (i.e., independent of the time step at which the gate is applied), with any Markovian time-dependence being partially characterized by deviations from the time-independent case~\cite{Magesan2012a}. However, in many practical applications there may be a nonnegligible time dependence, which it would be desirable to characterize more fully.

We show that randomized benchmarking can also be used to characterize time-dependent noise, provided the gate-dependence is negligible (in the sense established in Theorem~\ref{thm:gate-perturb}) and that the time-dependent noise is identically distributed between different experiments. However, the number of random sequences of length $m$ will typically need to be increased relative to the number required for time-independent noise. In particular, we will show in Theorem~\ref{thm:fidelity_curve} that
\begin{align}\label{eq:fidelity_curve}
	\bar{F}_m =  A + B\prod_{t=1}^{m} f_t \,,
\end{align}
where $A$ and $B$ are constants that depend only upon the preparation and measurement procedures (and so account for SPAM) and the average gate fidelity at time $t$ is $F_t = f_t + (1-f_t)/d$, where $d$ is the dimensionality of the system being benchmarked. In the case of time-independent noise, $f_t$ is a constant and Eq.~\eqref{eq:fidelity_curve} reduces to the standard equation for the fidelity decay curve.

By performing randomized benchmarking for a set of sequence lengths $m_1$ and $m_2$, we can estimate $\hat{F}_{m_j}$ with associated uncertainties $\delta_j$. Combining these estimates with a procedure for obtaining an estimate $\hat{A}$ of $A$ with associated uncertainty $\delta_A$~\cite{Kimmel_in_prep}, we can estimate the ratio
\begin{align}\label{eq:estimate_product}
	\frac{\bar{F}_{m_2} - \bar{A}}{\bar{F}_{m_1} - \bar{A}} = \prod_{t=m_1 + 1}^{m_2}f_t
\end{align}
with uncertainty on the order of
\begin{align}
	\delta_{1,2,A} \approx \sqrt{\pa{\delta_1+\delta_A}^2 + \pa{\delta_2+\delta_A}^2}	\,.
\end{align}

Therefore we can estimate the average gate infidelity $r$ over the time interval $[m_1 + 1,m_2]$. From our rigorous analysis, we can infer that $\delta_1$ and $\delta_2$ will be small for small $m_j r$, while $\delta_A$ will be determined only by finite measurement statistics. Furthermore, when $m_2\approx m_1$, $\prod_{t=m_1 + 1}^{m_2}f_t\approx 1 - r(m_2-m_1)$ and so the above method gives a reliable method of characterizing the time-dependent gate fidelity.

We also note that if there are no temporal correlations in the noise, than all of the parameters $r_t$ (where $r_t$ is the average gate infidelity at time $r$) are lower-bounded by zero. Therefore any negative values (or average values) of $r_t$ are an indicator of temporal correlations in the noise, that is, of non-Markovian behavior.

\section{Mathematical preliminaries}\label{sec:preliminaries}

Randomized benchmarking involves composing random sequences of quantum channels that are 
sampled in a way which approximates a group average. For this reason, it is 
natural to consider both the representation theory of groups and the structure 
of quantum channels, especially the composition of channels. In this section 
we collect several mathematical results in this vein that we will need to 
prove our main results. We begin by considering group representation theory, 
and in particular prove a proposition showing how the tensor product of certain 
representations couple together. Most of this material is standard and can be found in any textbook on the subject, e.g.~\cite{Goodman2009}.

\subsection{Representation Theory and Some Useful Lemmas}\label{sec:rep_theory}

A \emph{representation} (rep) of a group $\G$ is a pair $(\phi,V)$, where $V$ 
is vector space known as the representation space (which we always take to be 
$\mbb{R}^d$ or $\mbb{C}^d$ for different values of $d$) and 
$\phi:\G\to \GL(V)$---where $\GL(V)$ is the general linear group over $V$---is
a homomorphism. A rep is \emph{faithful} if $\phi$ is injective and 
\emph{unitary} (resp. \emph{orthogonal}) if $\phi(g)$ is a unitary (resp. 
orthogonal) operator for all $g\in\G$. The dimension of a rep is the 
dimension of $V$. A \emph{subrepresentation} (subrep) is a pair $(\phi_W,W)$ 
such that $\phi(g)W\subseteq W$ for all $g\in\G$ and $\phi_W$ denotes the 
restriction of $\phi$ to the subspace $W$. We sometimes refer to a space, 
subspace, or homomorphism as being a rep or subrep, with the complementary 
ingredients understood from the context. 

A rep is called \emph{irreducible} or an \emph{irrep} if the only subreps are
$\emptyset$ and $V$. Since the reps we consider are unitary reps of compact
groups, if $W$ is a subrep of $V$ then the orthogonal complement $W^{\bot}$ is 
a subrep as well. Therefore any rep can be decomposed into a direct sum of
irreps, which may occur with some multiplicity. Any basis that decomposes a rep
into a direct sum of irreps is called a Schur basis.

The simplest rep is the trivial rep $(1,\C)$, which is also an irrep.
The trivial rep is defined for any group $\G$ and take any element of $\G$ to 
1. While the trivial rep deserves its name, it frequently appears as a subrep 
of tensor powers of other reps and so will appear throughout this paper.

The randomized benchmarking protocol is designed so that the sequence of 
operators applied to a system correspond to noise channels conjugated by 
uniformly random elements of a group $\G$. Given a rep $(\phi,V)$ of a group $\G$, a 
matrix $A\in\GL(V)$ and an element $g\in\G$, we define $A^g = 
\phi(g)A\phi(g^{-1})$. The uniform average of this action on $A$ is called the \emph{$\G$-twirl} of $A$, and is given by $A^\G = \av{\G}^{-1}\sum_{g\in\G} A^g$.

Note that, for notational convenience, the map $\phi$ is left implicit but will always be obvious given the dimensionality of the matrix being twirled. An important property of $A^\G$ is that it commutes with the action of $\G$ for \textit{any} rep $(\phi,V)$ (reducible or not) 
since $\phi$ is a homomorphism and $\G$ is a group. That is, $A^\G = (A^g)^\G = (A^\G)^g$ for all $g \in \G$.  

Expressions for the expected value $\bar{F}_m$ and variance $\sigma_m^2$ for the randomized benchmarking protocol for a fixed value of $m$ will be obtained using the following propositions.

\begin{prop}\label{prop:twirling}
	Let $(\phi,\C^d)$ be a nontrivial $d$-dimensional irreducible representation of a group $\G$ and $A\in\GL(\C^d),B\in\GL(\C^{d+1})$. Then
	\begin{itemize}
		\item $A^\G  = a\unit_d$;
		\item $B^\G = B_{11}\oplus b\unit_d$ [where the representation of $\G$ is $(1\oplus\phi,\C^{d+1})$]; and
		\item $\sum_{g\in\G}\phi(g)=0$,
	\end{itemize}
	where $a=\tr A/d$ and $b=(\tr B-B_{11})/d$.
\end{prop}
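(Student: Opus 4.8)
The plan is to apply Schur's lemma in its two standard incarnations. For the first item, observe that $A^\G$ commutes with $\phi(g)$ for every $g\in\G$, as noted just before the proposition. Since $(\phi,\C^d)$ is irreducible, Schur's lemma forces $A^\G = a\unit_d$ for some scalar $a\in\C$. To identify $a$, take the trace of both sides: the $\G$-twirl preserves the trace (each summand $\phi(g)A\phi(g^{-1})$ has the same trace as $A$), so $\tr A^\G = \tr A$, while $\tr(a\unit_d) = ad$, giving $a = \tr A/d$.

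For the third item, consider the operator $S = \sum_{g\in\G}\phi(g)$. For any $h\in\G$ we have $\phi(h)S = \sum_{g}\phi(hg) = S$ by the rearrangement of the group sum, and likewise $S\phi(h) = S$. Hence $S$ intertwines $\phi$ with itself, so by Schur's lemma $S = c\unit_d$ for some scalar $c$; but then $\phi(h)S = S$ reads $cS = S$... more directly, $\phi(h)\cdot c\unit_d = c\phi(h)$ must equal $c\unit_d$ for all $h$, which is impossible for nonzero $c$ since $\phi$ is nontrivial (so some $\phi(h)\neq\unit_d$). Therefore $c=0$ and $S=0$. (Alternatively: $S = \av{\G}\, P$ where $P$ is the projector onto the $\G$-invariant subspace of $\C^d$; since $\phi$ is a nontrivial irrep, that subspace is $\{0\}$.)

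For the second item, write the representation on $\C^{d+1}$ as $1\oplus\phi$ in block form, with the $1$-dimensional block spanned by $e_1$ and the $d$-dimensional block carrying $\phi$. Decompose $B$ into blocks $B = \begin{pmatrix} B_{11} & v^T \\ w & C\end{pmatrix}$ with $C\in\GL(\C^d)$, $v,w\in\C^d$. The twirl acts blockwise: the $(1,1)$ entry is fixed since the trivial rep acts as $1$, giving $B_{11}$; the lower-right block becomes $C^\G = (\tr C/d)\unit_d = b\,\unit_d$ with $b = \tr C/d = (\tr B - B_{11})/d$ by the first item applied to $C$; and the off-diagonal blocks become $(\av{\G}^{-1}\sum_g \phi(g))w$ and $v^T(\av{\G}^{-1}\sum_g \phi(g^{-1}))$, both of which vanish by the third item. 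Hence $B^\G = B_{11}\oplus b\unit_d$, as claimed.

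The only mild subtlety — hardly an obstacle — is bookkeeping the off-diagonal blocks in the second item and making sure the $\sum_g\phi(g)=0$ fact is what kills them; everything else is a direct invocation of Schur's lemma plus a trace computation. No step should require any nontrivial calculation.
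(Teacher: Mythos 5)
Your proof is correct and takes essentially the same approach as the paper, whose entire proof is the single sentence that all three statements follow from Schur's Lemma; you have simply filled in the standard details (commutation of the twirl with the irrep plus a trace computation, the rearrangement trick forcing $\sum_{g\in\G}\phi(g)=0$ for a nontrivial irrep, and blockwise twirling of $1\oplus\phi$ with the off-diagonal blocks killed by the third item). The only blemish is the momentarily garbled clause where $\phi(h)S=S$ is said to read $cS=S$, but you immediately restate the argument correctly, so nothing is missing.
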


\begin{proof}
	All three statements follow directly from Schur's Lemma~\cite{Goodman2009}.
\end{proof}

\begin{prop}\label{lem:trivial_subrep}
	If $(\phi,V)$ is an irreducible representation of a finite group $\G$ with a real-valued character $\chi_\phi$, then the trivial representation is a subrepresentation of $(\phi,V)\nt{2}$ with multiplicity 1.
\end{prop}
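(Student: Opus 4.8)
The plan is to invoke the standard fact from character theory that, for a representation $\psi$ of a finite group $\G$, the multiplicity of the trivial representation in $\psi$ equals the average of its character, $\av{\G}^{-1}\sum_{g\in\G}\chi_\psi(g)$ --- equivalently, the inner product $\inner{\chi_\psi}{1}$ of $\chi_\psi$ with the constant character of the trivial rep. Applying this with $\psi=(\phi,V)\nt{2}$, whose character is $\chi_\phi^2$, reduces the claim to the evaluation $\av{\G}^{-1}\sum_{g\in\G}\chi_\phi(g)^2 = 1$.

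The key step is then to use the hypothesis: since $\chi_\phi$ is real-valued, $\chi_\phi(g)^2 = \chi_\phi(g)\overline{\chi_\phi(g)} = \av{\chi_\phi(g)}^2$ for every $g\in\G$, so the sum above equals $\av{\G}^{-1}\sum_{g\in\G}\av{\chi_\phi(g)}^2 = \inner{\chi_\phi}{\chi_\phi}$. By the first orthogonality relation for characters (a consequence of Schur's Lemma, already cited in the excerpt), $\inner{\chi_\phi}{\chi_\phi}=1$ precisely because $(\phi,V)$ is irreducible. Hence the trivial representation occurs in $(\phi,V)\nt{2}$ with multiplicity exactly $1$.

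There is no real analytic obstacle here; the only point requiring care is the reading of the hypothesis. The condition "$\chi_\phi$ is real-valued" must be interpreted pointwise, $\chi_\phi(g)\in\mbb{R}$ for all $g$, and it is exactly this that licenses the identity $\chi_\phi(g)^2=\av{\chi_\phi(g)}^2$; without it one would only get $\inner{\chi_\phi}{\overline{\chi_\phi}}$, which need not be $1$. Equivalently, one could phrase the same computation structurally: a real character forces $(\phi,V)\cong(\phi,V)^\ast$, so the space of $\G$-invariant vectors in $V\nt{2}$ is isomorphic to $\mathrm{Hom}_\G(V^\ast,V)\cong\mathrm{Hom}_\G(V,V)$, which is one-dimensional by Schur's Lemma. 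For completeness I would also briefly justify why the multiplicity of the trivial rep in $\psi$ equals $\inner{\chi_\psi}{1}$: decompose $\psi$ into irreps, and use that characters of inequivalent irreps are orthonormal, so that $\inner{\chi_\psi}{\chi_{\mathrm{triv}}}$ simply counts the copies of the trivial rep.
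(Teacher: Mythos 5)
Your proof is correct and is essentially the same argument as the paper's: both reduce the multiplicity of the trivial representation in $(\phi,V)\nt{2}$ to the sum $\av{\G}^{-1}\sum_{g\in\G}\chi_\phi(g)^2$, use the real-valuedness of $\chi_\phi$ to identify this with $\inner{\chi_\phi}{\chi_\phi}$, and conclude by Schur's orthogonality relations for the irreducible $\phi$. Your version merely spells out the intermediate justifications (and notes the self-duality reformulation) that the paper leaves implicit.
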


\begin{proof}
	As the rep is irreducible, Schur's orthogonality relations~\cite{Goodman2009} give
	\begin{align}
		\av{\G} = \sum_{g\in\G} \chi_{\phi}(g)^*\chi_{\phi}(g) = \sum_{g\in\G} 
		\chi_{\phi}(g)^2 = \sum_{g\in\G} \chi_{\phi\nt{2}}(g)\chi_{1}(g)	\,,
	\end{align}
	where we have used $\chi_{\phi\nt{2}}(g) = \sq{\chi_{\phi}(g)}^2$ and that the 
	character for the trivial representation is $\chi_1(g) = 1$ for all $g\in\G$.
\end{proof}

\subsection{The Liouville Representation of Quantum Channels}\label{sec:channels}

A \textit{quantum channel} is a linear map $\E:\D_{d_1}\to\D_{d_2}$, where 
$\D_d$ is the set of $d$-dimensional density operators. 
Quantum channels can be represented in a variety of equivalent 
ways, with different representations naturally suited to particular 
applications.

In this paper, we will primarily use the Liouville representation 
because it is defined so that quantum channels compose under matrix 
multiplication. We occasionally also use the Choi representation in order to 
apply results from the literature, but we will introduce it only as required.

\subsubsection{States and measurements}

We begin by introducing the Liouville representation (also called the transfer matrix representation) of quantum states 
and measurements. States and measurement effects (i.e., elements of a positive-operator valued measure, or POVM) can be viewed as channels from 
$\E:\R\to\D_d$ and $\E:\D_d\to\R$ respectively, hence they can be treated on 
the same footing as any other quantum channel. However, we introduce them 
separately for pedagogical and notational clarity.

In the standard formulation of quantum mechanics in terms of density operators 
and POVMs, a quantum state $\rho\in\D_d$ 
is any Hermitian, positive semi-definite operator such that $\tr\rho=1$. 
In addition, we always have
$\tr\rho^2\in \sq{0,1}$. We can always choose a basis 
$\A=\br{A_0,A_1,\ldots,A_{d^2-1}}$ of orthonormal operators for $\GL(\C^d)$, 
where orthonormality is according to the Hilbert-Schmidt inner product 
$\inner{A}{B} = \tr\pa{A\ct B}$. We can expand any density operator relative 
to such a basis as $\rho = \sum_j \rho_j A_j$, where 
$\rho_j = \inner{A_j}{\rho}$. Throughout this paper we set 
$A_0 = \unit/\sqrt{d}$, which fixes $\rho_0 = \tr\rho/\sqrt{d}$, and makes all other $A_j$ for $j\not=0$ traceless. 

We can then identify a density operator $\rho$ with a corresponding column vector 
\begin{align}\label{eq:bloch_vector}
	|\rho) = \pa{\begin{array}{c}
		\rho_0 \\
		\vec{\rho}
	\end{array}}\in\C^{d^2}
\end{align} 
such that $\vec{\rho}_j = \rho_j$ for $j=1,\ldots,d^2-1$. Here we make the 
important distinction between the density operator itself, $\rho$, and the 
representation of $\rho$ in terms of the column vector $|\rho)$. Note that 
$|\rho)$ is just a generalized version of a Bloch vector (with a different 
normalization) for $d\ge2$. 

The conditions for $\rho$ to correspond to a density operator now translate 
into geometric conditions on $|\rho)$. In particular, we will use the fact that 
$\norm{|\rho)}_2^2 = \tr \rho^2 \in\sq{0,1}$, where $\norm{v}_2$ for 
$v\in\C^{d^2}$ is the standard isotropic Euclidean norm. 

Measurements in the standard formulation correspond to POVMs, that is, to sets of Hermitian, positive semidefinite operators $\br{E_j}$ such that $\sum_j E_j = \unit$. As with quantum states, we can expand an element $E$ of a POVM (an effect) as $E = \sum E_j A_j\ct$, where $E_j = \inner{E}{A_j}$. We then identify an effect $E$ with a row vector 
\begin{align*}
	(E| = \bpa{\begin{array}{cc}
		E_0 & \vec{E}
	\end{array}}\in {\C^*}^{d^2}	\,,
\end{align*}
which must satisfy similar conditions to $|\rho)$. 

In this formalism, the probability of observing an effect $E$ given that the quantum state $\rho$ was prepared is $p(E|\rho) = \tr E\rho = (E|\rho)$.

\subsubsection{Transformations}

For simplicity, we will only consider quantum channels that are either states, 
measurements or completely positive and trace-preserving (CPTP) maps 
$\E:\D_d\to\D_d$. We do not consider channels that reduce the trace or change the dimension because, while conceptually no more difficult, they require cumbersome additional notation and we do not use any such channels.

A quantum channel $\E$ maps a density operator $\rho$ to another density 
operator $\E(\rho)$. We want to determine the map $\E$ between the 
corresponding vectors $|\rho)$ and $(\E(\rho)|$. Since quantum channels are 
linear,
\begin{align*}
	\E(\rho) = \sum_j \E(A_j)\rho_j \,,
\end{align*}
which implies that
\begin{align*}
	\bigl\lvert\E(\rho)\bigr)_k = \sum_j \inner{A_k}{\E(A_j)}\bigl\lvert\rho_j\bigr) \,.
\end{align*}
That is, $\bigl\lvert\E(\rho)\bigr) = \E |\rho)$ where we abuse notation slightly and 
define $\E$ as the matrix such that $\E_{j,k} = \inner{A_k}{\E(A_j)}$. That 
is, we use $\E$ to denote both the abstract operator as well as its 
representation as a matrix acting on vectors $|\rho)$. In this 
representation, the identity channel is represented by $\unit_{d^2}$, the 
composition of two channels is given by matrix multiplication, and furthermore, 
the conjugate channel of a unitary channel $\E$ is given by $\E\ct$. 
In particular, these properties imply that the Liouville representation 
of the unitary channels is a faithful and unitary representation of $\U(d)$ (though technically, it is a projective representation since a global phase is lost). 
Given our choice of $\A$ (recall that we have fixed $A_0$) and the fact that we consider only trace-preserving channels, 
we will always write the matrix representation of a quantum channel as
\begin{align}
	\E = \pa{\begin{array}{cc} 1 & 0 \\ \alpha(\E) & \varphi(\E)\end{array}}	\,.
\end{align}
A channel $\E$ is unital (i.e., the identity is a fixed point of the channel) iff $\alpha(\E)=0$. Therefore we can regard $\norm{\alpha(\E)}$ as quantifying the nonunitality of $\E$.

The representation $\bigl(\varphi,\C^{d^2-1}\bigr)$ of $\U(d)$ is irreducible~\cite{Gross2007a}, 
which will play a crucial role in our analysis of randomized benchmarking 
since it allows us to use tools from representation theory such as Schur's 
Lemma. Note also that the representation $(\varphi,\C^{d^2-1})$ of any subgroup $\G\subseteq 
\U(d)$ that is a unitary 2-design is also irreducible by the same argument 
(which can be regarded as a defining property of a unitary 
2-design~\cite{Gross2007a}). Therefore we can also use tools from 
representation theory when considering channels twirled over a unitary 
2-design. This fact allows randomized benchmarking to be performed efficiently
because unitary 2-designs can be efficiently sampled while the full 
unitary group cannot~\cite{Emerson2005, Dankert2009}.

The representation $\varphi(g)$ of $\G$ will be one of the basic tools we use in this paper. As such, whenever $g$ appears in a matrix multiplication, it will refer to $\varphi(g)$.

Randomized benchmarking will allow for the estimation of
\begin{align}
	f(\E):=\tfrac{1}{d^2-1}\tr\varphi(\E)\,,
\end{align}
which corresponds to the average gate fidelity of $\E$ with the identity 
channel. We will sometimes omit the argument of $\alpha$, $\varphi$ and $f$, 
or indicate the argument via a subscript. However, in all cases the argument 
will be clear from the context.

\subsubsection{Properties of channels in the Liouville representation}

Since the Liouville representation associates a unique matrix to each channel, we 
can characterize properties of quantum channels by properties of the 
corresponding matrix. In particular, we will consider the spectral radius, 
\begin{align}
	\varrho(M) = \max_j \av{\eta_j(M)}\,,
\end{align}
and the spectral norm, $\norm{M}_{\infty} = \max \sigma_j(M)$, where $\br{\eta_j(M)}$ and $\br{\sigma_j(M)}$ are the eigenvalues and singular values of a matrix $M$ respectively. These norms satisfy 
$\varrho(M)\leq \norm{\M}_{\infty}$, which we will use to obtain bounds on valid quantum channels.

\begin{prop}
	Let $\E$ be a completely positive map. Then the adjoint channel $\E\ct$ is also completely positive.
\end{prop}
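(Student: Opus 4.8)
The plan is to reduce complete positivity of $\E\ct$ to that of $\E$ via the operator-sum (Kraus) representation. First I would invoke the standard characterization of complete positivity (Choi's theorem): a linear map $\E:\D_d\to\D_d$ is completely positive if and only if it admits a decomposition $\E(X) = \sum_k K_k X K_k\ct$ for some finite family of operators $\br{K_k}$. Since $\E$ is assumed completely positive, fix such a decomposition.

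Next I would compute $\E\ct$ explicitly. The adjoint is taken with respect to the Hilbert--Schmidt inner product $\inner{A}{B} = \tr\pa{A\ct B}$ used throughout the paper (so that in the Liouville representation $\E\ct$ is literally the conjugate transpose of the matrix $\E$), i.e.\ $\E\ct$ is defined by $\inner{\E\ct(Y)}{X} = \inner{Y}{\E(X)}$ for all operators $X,Y$. Substituting the Kraus form and using linearity together with cyclicity of the trace termwise,
\begin{align*}
	\inner{Y}{\E(X)} = \sum_k \tr\bpa{Y\ct K_k X K_k\ct} = \sum_k \tr\bpa{(K_k\ct Y K_k)\ct X} = \inner{\sum_k K_k\ct Y K_k}{X}\,,
\end{align*}
from which one reads off $\E\ct(Y) = \sum_k K_k\ct Y K_k$.

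This last expression is itself an operator-sum representation, with Kraus operators $\br{K_k\ct}$, so the converse direction of Choi's theorem yields that $\E\ct$ is completely positive, completing the proof. (Equivalently, one could argue at the level of Choi matrices: $J(\E\ct)$ is obtained from $J(\E)$ by entrywise complex conjugation together with conjugation by the swap operator, both of which preserve positive semidefiniteness, so $J(\E\ct)\geq 0$ follows from $J(\E)\geq 0$.) The argument is entirely routine and no genuine obstacle arises; the only point that requires care is keeping the adjoint convention straight, so that the map shown to be completely positive is indeed the ``adjoint channel'' $\E\ct$ appearing in the statement and used elsewhere in the paper.
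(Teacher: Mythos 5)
Your proposal is correct and takes essentially the same route as the paper: both invoke Choi's theorem to get a Kraus decomposition $\E(X)=\sum_k K_k X K_k\ct$ and observe that the adjoint then has the Kraus form $\E\ct(Y)=\sum_k K_k\ct Y K_k$, hence is completely positive. The only cosmetic difference is that the paper reads off the adjoint from the matrix representation $\E=\sum_j K_j\otimes L_j^T$ while you compute it directly from the Hilbert--Schmidt inner product.
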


\begin{proof}
	Any map can be written as
	\begin{align}
		\E = \sum K_j\otimes L_j^T	\,,
	\end{align}
	where the superscript $T$ denotes the transpose and the $K_j$ and $L_j$ are Kraus operators for $\mc{E}$. We then have
	\begin{align}
		\E\ct = \sum K_j\ct \otimes L_j^*\,,
	\end{align}
	where the $*$ denotes complex conjugation.
	
	By Choi's theorem on completely positive maps, $\mc{E}$ is completely positive if and only if Kraus operators can be chosen so that $L_j = K_j\ct$. Therefore Kraus operators $K_j\ct$ and $L_j\ct$ for $\E\ct$ can be chosen so that $L_j\ct = (K_j\ct)\ct$.
\end{proof}

\begin{cor}\label{cor:adjoint}
	The adjoint channel of a unital, completely-positive and trace-preserving channel is also a unital, completely-positive and trace-preserving channel.
\end{cor}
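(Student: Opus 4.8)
The plan is to use the preceding Proposition to dispose of complete positivity, and then reduce trace preservation and unitality to conditions on Kraus operators: since the adjoint channel $\E\ct$ has Kraus operators $K_j\ct$ whenever $\E$ has Kraus operators $K_j$, and since passing to $\E\ct$ interchanges the "left" and "right" Kraus sums, the trace-preservation condition for $\E$ becomes the unitality condition for $\E\ct$ and vice versa. Because the hypothesis is that $\E$ is \emph{both} unital and trace-preserving, both conditions survive.

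Concretely, I would proceed as follows. First, invoke the previous Proposition: $\E$ completely positive implies $\E\ct$ completely positive, so it remains only to verify that $\E\ct$ is trace-preserving and unital. Next, write $\E$ in terms of Kraus operators $\br{K_j}$ with $L_j = K_j\ct$, exactly as in the proof of the previous Proposition, so that $\E\ct$ has Kraus operators $\br{K_j\ct}$. Then translate the two remaining properties into Kraus form: $\E$ is trace-preserving iff $\sum_j K_j\ct K_j = \unit$, and $\E$ is unital iff $\sum_j K_j K_j\ct = \unit$. Applying these same criteria to $\E\ct$: it is trace-preserving iff $\sum_j (K_j\ct)\ct (K_j\ct) = \sum_j K_j K_j\ct = \unit$, which holds because $\E$ is unital, and it is unital iff $\sum_j (K_j\ct)(K_j\ct)\ct = \sum_j K_j\ct K_j = \unit$, which holds because $\E$ is trace-preserving. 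Hence $\E\ct$ is unital, completely positive, and trace-preserving.

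I do not expect a genuine obstacle; the only point requiring care is the bookkeeping that the adjoint with respect to the Hilbert--Schmidt inner product swaps the two Kraus sums, equivalently that in the Liouville picture $\E\ct$ is represented by the conjugate transpose of the block form $\E = \bpa{\begin{array}{cc} 1 & 0 \\ \alpha(\E) & \varphi(\E)\end{array}}$, so that "top row equals $(1,0)$" (trace preservation) and "$\alpha(\E)=0$" (unitality) are exchanged. If one prefers, the whole argument can be phrased in the Liouville representation: $\E$ is unital and trace-preserving iff its matrix is $1\oplus\varphi(\E)$, whose conjugate transpose is $1\oplus\varphi(\E)\ct$, again of that form, and complete positivity is inherited from the previous Proposition.
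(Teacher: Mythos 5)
Your proposal is correct and follows exactly the route the paper intends: the paper states this corollary without a written proof, as an immediate consequence of the preceding proposition (complete positivity of $\E\ct$) together with the standard fact that taking the Hilbert--Schmidt adjoint swaps the Kraus conditions $\sum_j K_j\ct K_j = \unit$ (trace preservation) and $\sum_j K_j K_j\ct = \unit$ (unitality), so that a channel satisfying both yields an adjoint satisfying both. Your Kraus-operator bookkeeping and the equivalent Liouville-matrix observation are both accurate fillings-in of this argument.
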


\begin{prop}\label{prop:channel_norm}
	Any completely positive and trace-preserving channel $\E:\D_d\to \D_d$ satisfies the following relations: 
	\begin{alignat}{3}
		& \textrm{(i)} \quad& \det \E & \leq 1\,, \notag\\
		& \textrm{(ii)} \quad& \norm{\E}_{\infty} & \leq \sqrt{d}\,, \notag\\
		& \textrm{(iii)} \quad& \varrho(\E) & \leq 1\,, \notag\\
		& \textrm{(iv)} \quad& \norm{\alpha(\E)}_2 & \leq \sqrt{d-1}\,.
	\end{alignat}
	Inequality (i) is saturated if and only if $\E$ is unitary.
	
	Furthermore, if $\E$ is unital, then (ii) can be improved to $\norm{\E}_{\infty} = 1$.
\end{prop}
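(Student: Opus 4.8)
The plan is to treat the four inequalities largely independently, using the Liouville matrix structure $\E=\begin{pmatrix}1&0\\\alpha&\varphi\end{pmatrix}$ together with complete positivity via the Choi matrix. For (iii), the spectral radius bound, I would argue that if $\E$ had an eigenvalue $\eta$ with $|\eta|>1$, then $\E^n$ would have entries growing without bound; but $\E^n$ is again CPTP (composition of CPTP maps is CPTP), and every CPTP map has uniformly bounded matrix entries (e.g.\ because it maps the bounded set of density operators into itself, so $\|\E^n\alpha(\rho)\|$ stays bounded for all Bloch vectors $\rho$), giving a contradiction. Equivalently, one can observe that $1$ is always an eigenvalue (the maximally mixed state is a fixed point of the unital part, or more directly the transpose has fixed point coming from trace preservation) and that no eigenvalue can exceed it in modulus by the same boundedness-of-iterates argument.

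For (ii), $\norm{\E}_\infty\le\sqrt d$: since $|\rho)$ has $\norm{|\rho)}_2^2=\tr\rho^2\le 1$ for any density operator and $\norm{\,|\E(\rho))\,}_2^2=\tr\E(\rho)^2\le 1$ as well, the operator $\E$ maps the (real) unit ball intersected with the affine hyperplane $\rho_0=1/\sqrt d$ into itself. The worst case for the operator norm comes from the constant first column: writing a general unit vector as a density-operator direction plus a component along the maximally mixed direction, the $\sqrt d$ arises because the ``input'' difference of two normalized states can have Euclidean norm up to $\sqrt 2$ while shifting to include the $1$ in the first coordinate costs a factor; I expect the clean way is to bound $\norm{\varphi(\E)}_\infty$ and $\norm{\alpha(\E)}_2$ and assemble. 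In the unital case $\alpha=0$ and $\varphi$ maps the ball $\{\vec\rho:\norm{\vec\rho}_2^2\le 1-1/d\}$ into itself (since $\rho_0=1/\sqrt d$ is fixed), so $\norm{\varphi}_\infty\le 1$; since $\varphi$ also maps the maximally mixed direction's orthogonal complement onto itself and the adjoint is unital (Corollary~\ref{cor:adjoint}), $\varphi$ is a contraction with a contraction adjoint, forcing $\norm{\varphi}_\infty=1$ (it can't be strictly less since $\det$ considerations or the fixed point of trace preservation give eigenvalue $1$).

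For (iv), $\norm{\alpha(\E)}_2\le\sqrt{d-1}$: note $\alpha(\E)=\vec{v}$ where $|\E(\unit/\sqrt d))=(1/\sqrt d,\vec v/\sqrt d\cdot\sqrt d\,)$—more precisely, $\E$ applied to the maximally mixed state $|\unit/d)=(1/\sqrt d,\vec 0)$ gives $(1/\sqrt d,\,\alpha(\E)/\sqrt d)$ after accounting for normalization, hence $\alpha(\E)/\sqrt d$ is the Bloch part of $\E(\unit/d)$, a genuine density operator, so $\norm{\alpha(\E)/\sqrt d}_2^2\le 1-1/d$, i.e.\ $\norm{\alpha(\E)}_2^2\le d-1$. (I should double-check the normalization constant; this is the one routine computation.) For (i), $\det\E\le 1$: the determinant is the product of eigenvalues, all of modulus $\le 1$ by (iii); but I need the determinant to be real and nonnegative to conclude $\det\E\le 1$ from $\varrho(\E)\le 1$. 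Real follows because $\E$ is a real matrix in the Hermitian basis $\A$; nonnegativity of $\det\E$ for CPTP maps is a known fact (e.g.\ the Choi matrix being PSD constrains the sign), and then $\det\E=\det\varphi(\E)\le\prod|\eta_j(\varphi)|\le 1$. For the saturation claim, $\det\E=1$ forces every eigenvalue to have modulus exactly $1$; combined with $\norm{\E}_\infty\le\sqrt d$ actually one wants $\norm{\varphi}_\infty\le 1$ in the unital case—so first show $\det\E=1\Rightarrow\alpha=0$ (unital) via the block structure and the fact that a nonunital channel strictly contracts some direction, then $\varphi$ unitary follows from all singular values being $\le 1$ with determinant of modulus $1$, and a unitary Liouville matrix corresponds to a unitary channel.

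The main obstacle I anticipate is pinning down the exact constants and normalizations in (ii) and (iv)—in particular getting $\norm{\E}_\infty\le\sqrt d$ rather than a weaker constant like $\sqrt d+1$, which requires carefully exploiting that the first column is exactly $(1,\alpha)^T$ and not arbitrary—and proving the forward direction of the saturation statement rigorously, i.e.\ that $\det\E=1$ genuinely forces unitarity rather than merely forcing all eigenvalues onto the unit circle (a nonunital channel can in principle have $\varphi$ with unit-modulus eigenvalues). This last point is where I'd need to invoke the CP constraint most seriously, likely through the Choi matrix or through the fact that a trace-preserving CP map with $\det=1$ must be an extreme point that is also invertible, hence unitary by a known characterization; alternatively, polar-decompose $\varphi$ and argue the positive part must be the identity.
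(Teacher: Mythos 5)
Your treatment of (iii) (boundedness of the iterates $\E^n$, which are again CPTP, forces $\varrho(\E)\le 1$) and of (iv) (apply $\E$ to the maximally mixed state and use $\tr\,\E(\unit/d)^2\le 1$) is sound; (iv) is in fact exactly the paper's argument, and (iii) is a self-contained alternative to the paper's citation of Evans. For (i), note that you do not need nonnegativity of the determinant at all: $\det\E$ is real and $\det\E\le|\det\E|=\prod_j|\eta_j(\E)|\le\varrho(\E)^{d^2}\le 1$, so that worry is moot.

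The genuine gaps are in (ii), the unital refinement, and the saturation claim. For (ii), your central geometric step --- that $\E$ ``maps the unit ball intersected with the hyperplane $\rho_0=1/\sqrt d$ into itself'' --- does not hold: for $d>2$ the set of density operators is a \emph{proper} subset of that Hilbert--Schmidt ball (positivity is a stronger constraint than $\tr\rho^2\le1$), so knowing $\E$ preserves the state set tells you nothing about its action on arbitrary vectors in the ball, and the naive decomposition of a traceless operator into positive parts only yields a constant like $2\sqrt d$. The sharp bound $\norm{\E}_{2\to2}\le\sqrt d$ is a nontrivial theorem (the paper cites P\'erez-Garc\'ia, Wolf, Petz and Ruskai), and you correctly anticipate but do not resolve this. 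The same flaw infects your unital-case argument ($\varphi$ preserving the Bloch ball is false for $d>2$); the correct route, which the paper uses and which you gesture at via Corollary~\ref{cor:adjoint} without assembling, is that $\E\ct\E$ is itself a unital CPTP map, so $\norm{\E}_\infty^2=\varrho(\E\ct\E)\le1$ by (iii), with equality from the trivial eigenvalue. Finally, for the forward direction of the saturation of (i), both of your key steps --- that $\det\E=1$ forces $\alpha=0$, and that a CPTP map with orthogonal Liouville matrix is a unitary conjugation --- are true but left unproven (the paper simply cites Wolf and Cirac for the whole equivalence); as stated, ``a nonunital channel strictly contracts some direction'' is an assertion, not an argument, and you also implicitly use $\norm{\varphi}_\infty\le1$ before having established unitality, which is unavailable for general $d$.
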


\begin{proof}
	(i): See Ref.~\cite[Thm 2]{Wolf2008}.
	
	(ii): See~\cite[Thm. II.1]{Perez-Garcia2006}, noting that $\|\E\|_\infty = \|\E\|_{2 \to 2}$. 
	
	(iii): See~\cite{Evans1978}.
	
	(iv): For any density operator $\rho$, we have $\tr\rho^2 \in \sq{0,1}$. In 
	particular consider $\E(\unit/d)$, which must be a density operator since 
	$\unit/d$ is a density operator and $\E$ is a quantum channel. Then
	\begin{align}
		\tr\,\E(\unit/d)^2 = \norm{\E B(\unit/d)}_2^2 = \frac{1 + \norm{\alpha(\E)}_2^2}{d} \le 1
	\end{align}
	gives the desired result.
	
	Now let $\E$ be a unital channel. Then $\E\ct$ and thus $\E\ct\E$ are also channels by Corollary~\ref{cor:adjoint}. Substituting (i) into the equality $\varrho(\E\ct \E) = \norm{\E}_{\infty}$ gives the improved bound.
\end{proof}

\subsection{Representing noisy channels}

An attempt to physically implement a quantum channel $\E$ will generally result in some other channel $\E'$, with the aim being, loosely speaking, to make $\E'$ as close to $\E$ as possible. We will now outline how noisy channels can be related to the intended channel in the linear representation.

Consider an attempt to implement a target unitary channel $\mc{U}$ that results in some (noisy) channel $\E$. Then since $\E$ is a real square matrix, it can be written as $\E = LQ$ where $L$ is a lower triangular matrix and $Q$ is an orthogonal matrix. Since $\mc{U}$ is an orthogonal matrix, we can always write $\E = \Lambda^{\rm post}(\mc{U}) \mc{U}$, where $\Lambda^{\rm post}(\mc{U}) = LQ\mc{U}^T$. Similarly, we can always write $\E = \mc{U}\Lambda^{\rm pre}(\mc{U})$.

While the difference between these expressions is trivial for any \emph{single} 
channel, it can cause confusion when comparing channels. Since $\Lambda^{\rm 
	pre}(\mc{U}) = \mc{U}^T\Lambda^{\rm post}(\mc{U}) \mc{U}$, the notion of ``the'' 
noise in an implementation of $\mc{U}$ depends on which expression is used. (We 
will fix a representation below to avoid this ambiguity.)

The convergence of randomized benchmarking depends crucially upon the assumption that the noise is approximately independent of the target. However, in a general scenario, at most one of $\Lambda^{\rm post}$ or $\Lambda^{\rm pre}$ will be approximately independent of the target, with the specific choice depending upon the physical implementation. As a specific example, consider amplitude damping for a single qubit, which can be written as
\begin{align}
	\Delta = \pa{\begin{array}{cccc}
		1 & 0 & 0 & 0	\\
		0 & \sqrt{g} & 0 & 0	\\
		0 & 0 & \sqrt{g} & 0	\\
		1-g & 0 & 0 & g	\\
	\end{array}}
\end{align}
in the Pauli basis $\tfrac{1}{\sqrt{2}}\pa{\unit,X,Y,Z}$, where $g\in\sq{0,1}$ determines the strength of the damping. Assume that this noise is applied independently from the left (i.e., $\Lambda^{\rm post} = \Delta$), independently of the target. Then for $X$ and $Z$,
\begin{align}
	\Lambda^{\rm pre}(Z) = \pa{\begin{array}{cccc}
	1 & 0 & 0 & 0	\\
	0 & \sqrt{g} & 0 & 0	\\
	0 & 0 & \sqrt{g} & 0	\\
	1-g & 0 & 0 & g	\\
	\end{array}}
	\quad & \quad
	\Lambda^{\rm pre}(X) = \pa{\begin{array}{cccc}
			1 & 0 & 0 & 0	\\
			0 & \sqrt{g} & 0 & 0	\\
			0 & 0 & \sqrt{g} & 0	\\
			-(1-g) & 0 & 0 & g	\\
		\end{array}}
\end{align}
which is only independent of the target when $g=1$ (i.e., when there is no noise).

In this work, we write noise operators as \emph{pre-multiplying the target} rather than post-multiplying the target as in Ref.~\cite{Magesan2011}. The reason for this change is so that the residual noise term that is not averaged is in the first time step rather than the last and so is independent of the sequence length. While this simplifies the analysis, all the results of this paper can be derived for the other form with small modifications.
			
\subsection{Measures of noise}\label{sec:fidelities}

The fidelity and trace distance between two quantum states are defined as
\begin{align}
	F(\rho,\sigma) = \bigl\lVert\sqrt\rho\sqrt\sigma\bigr\rVert_1^2 \,,\nonumber\\
	D(\rho,\sigma) = \frac{1}{2}\lVert\rho-\sigma\rVert_1\,,
\end{align}
respectively\footnote{Note that some authors define fidelity to be the square root of the fidelity defined here.}, where the 1-norm (or trace norm) is given by $\lVert X\rVert_1 = \tr\sqrt{X\ct X}$. These two quantities are related by the Fuchs-van~de~Graaf inequalities~\cite{Fuchs1999},
\begin{align}\label{eq:Fuchs}
	1-\sqrt{F(\rho,\sigma)} \le D(\rho,\sigma) \le \sqrt{1-F(\rho,\sigma)}\,,
\end{align}
where the right-hand inequality is always saturated when both states are pure. When one of the states is a pure state $\psi$, the left-hand inequality in Eq.~\eqref{eq:Fuchs} can be sharpened to
\begin{align}\label{eq:Fuchs_pure}
	1-F(\psi,\sigma) \le D(\psi,\sigma)\,.
\end{align}

Both of these quantities for quantum states can be promoted to distance measures for quantum channels~\cite{Gilchrist2005}. Two such measures are the average gate fidelity and the diamond distance.

The average gate fidelity between a channel $\E$ and a unitary $\mc{U}$ is defined to be
\begin{align}\label{eq:avg_fidelity}
	F_{\rm avg}(\E,\mc{U}) &= \int \mathrm{d}\psi F\sq{\E(\psi),\mc{U}(\psi)}	\,,
\end{align}
where $\mathrm{d}\psi$ is the unitarily invariant Haar measure. For convenience, it is typical to define a single argument version,
\begin{align}
	F_{\rm avg}(\mc{U}\ct\E) = F_{\rm avg}(\E,\mc{U})	\,,
\end{align}
which is technically the average gate fidelity between $\mc{U}\ct\E$ and $\unit$.

The \emph{diamond distance} between two quantum channels $\E_1$ and $\E_2$ with $\E_j : \D_d \to \D_d$ is defined in terms of a norm of their difference $\Delta = \E_1 - \E_2$ as follows:
\begin{align}
	\frac{1}{2} \norm{\Delta}_\diamond = \frac{1}{2} \sup_\psi \norm{\unit_d\otimes\Delta(\psi)}_1 \,.
\end{align}
The norm in the above definition is indeed a valid norm, called the diamond norm, and it extends naturally to any Hermiticity-preserving linear map between operators. The factor of $1/2$ is to ensure that the diamond distance between two channels is bounded between $0$ and $1$.

The diamond distance is useful for several reasons. Firstly, allowing for larger entangled inputs does not change the value of the diamond distance, hence it is \emph{stable}. Secondly, it has an operational meaning as determining the optimal success probability for distinguishing two unknown quantum channels $\E_1$ and $\E_2$~\cite{Kitaev1997}. Equivalently, the diamond distance gives the worst-case error rate between the pair of channels. Although we will not be able to measure the diamond distance directly, we will be able to bound it in terms of measurable quantities obtainable via randomized benchmarking.

To obtain upper and lower bounds on the diamond norm, we will use the following two lemmas to relate the average gate fidelity to the trace norm of the corresponding Choi matrix and then to the diamond norm. Recall that the Choi matrix of a linear map $\Delta$ is given by $J(\Delta) = \Delta\otimes\unit_d(\Phi)$, where $\Phi = \sum_{j,k\in\mbb{Z}_d} \ket{jj}\!\bra{kk}/d$ is the maximally entangled state. The first of these lemmas was proven in Refs.~\cite{Horodecki1999, Nielsen2002}.

\begin{lem}[\cite{Horodecki1999, Nielsen2002}]\label{lem:av_f}
	The average fidelity of a CPTP map $\E$ is related to its Choi matrix $J(\E)$ by
	\begin{align}
		(d+1)F_{\mathrm{avg}}(\E) = d F\bigl[\Phi,J(\E)\bigr] + 1 \,.
	\end{align}
\end{lem}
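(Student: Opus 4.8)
The plan is to compute both sides of the claimed identity directly in terms of the Choi matrix $J(\E)$ and the maximally entangled state $\Phi$, using the standard correspondence between the average gate fidelity and the \emph{entanglement fidelity}. The key intermediate object is the entanglement fidelity $F_e(\E) = \bra{\Phi} (\E\otimes\unit_d)(\Phi) \ket{\Phi} = \bra{\Phi} J(\E) \ket{\Phi}$, which is exactly $F\bigl[\Phi, J(\E)\bigr]$ because $\Phi$ is pure (so the state fidelity reduces to the expectation value of $J(\E)$ in $\ket\Phi$, using that $F(\psi,\sigma)=\bra\psi\sigma\ket\psi$ when $\psi$ is pure, which follows from Eq.~\eqref{eq:Fuchs_pure}'s underlying definitions). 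So the right-hand side is $d\, F_e(\E) + 1$, and the statement reduces to the well-known relation $(d+1) F_{\mathrm{avg}}(\E) = d\, F_e(\E) + 1$.

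First I would write $F_{\mathrm{avg}}(\E) = \int \mathrm{d}\psi \bra{\psi}\E(\density{\psi})\ket{\psi}$ (again using purity of $\psi$ to collapse the fidelity). Then I would expand $\E$ in a Kraus decomposition $\E(\rho) = \sum_j K_j \rho K_j\ct$, giving $F_{\mathrm{avg}}(\E) = \sum_j \int \mathrm{d}\psi\, \av{\bra{\psi} K_j \ket{\psi}}^2$. The main computational step is the Haar integral $\int \mathrm{d}\psi\, \bra{\psi}K_j\ket{\psi}\bra{\psi}K_j\ct\ket{\psi}$, which is a standard second-moment (2-design) integral over pure states: it evaluates to $\bigl(\av{\tr K_j}^2 + \tr(K_j K_j\ct)\bigr)/\bigl(d(d+1)\bigr)$. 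Summing over $j$ and using $\sum_j \tr(K_j K_j\ct) = \tr\bigl(\sum_j K_j\ct K_j\bigr) = \tr\unit_d = d$ (trace preservation) yields
\begin{align}
	F_{\mathrm{avg}}(\E) = \frac{\bigl|\sum_j \tr K_j\bigr|^2 + d}{d(d+1)}\,.
\end{align}
Separately, I would compute the entanglement fidelity: $F_e(\E) = \bra{\Phi}(\E\otimes\unit_d)(\density{\Phi})\ket{\Phi} = \sum_j \av{\bra{\Phi}(K_j\otimes\unit_d)\ket{\Phi}}^2 = \sum_j \av{\tr K_j / d}^2 = \bigl|\sum_j \tr K_j\bigr|^2 / d^2$, using the identity $(\unit\otimes M)\ket{\Phi} = (M^T\otimes\unit)\ket{\Phi}$ and $\bra{\Phi}(A\otimes\unit)\ket{\Phi} = \tr A / d$.

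Combining the two displays, $d(d+1) F_{\mathrm{avg}}(\E) = \bigl|\sum_j \tr K_j\bigr|^2 + d = d^2 F_e(\E) + d$, so dividing by $d$ gives $(d+1) F_{\mathrm{avg}}(\E) = d\, F_e(\E) + 1 = d\, F\bigl[\Phi, J(\E)\bigr] + 1$, as claimed. Alternatively, since this lemma is attributed to Refs.~\cite{Horodecki1999, Nielsen2002}, one could simply cite it; but the self-contained argument above is short. The main obstacle is getting the pure-state 2-design integral constant right and correctly tracking the two normalization conventions in play — the $1/\sqrt{d}$ in $\Phi$ and the factor relating $F_e$ to $\bra\Phi J(\E)\ket\Phi$ — since an off-by-$d$ error there would break the identity; everything else is bookkeeping with Kraus operators.
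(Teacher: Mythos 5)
The paper gives no proof of this lemma at all --- it simply cites Refs.~\cite{Horodecki1999, Nielsen2002} --- so there is nothing to compare against except the standard literature argument, which is exactly what you reproduce: identify $F\bigl[\Phi,J(\E)\bigr]$ with the entanglement fidelity $\bra{\Phi}J(\E)\ket{\Phi}$ (valid under the paper's squared-fidelity convention, since $F(\psi,\sigma)=\bra{\psi}\sigma\ket{\psi}$ for pure $\psi$), compute both $F_{\mathrm{avg}}$ and $F_e$ in a Kraus decomposition via the second-moment Haar integral, and cancel. Your 2-design constant $\bigl(\av{\tr K_j}^2+\tr(K_jK_j\ct)\bigr)/\bigl(d(d+1)\bigr)$ and the identity $\bra{\Phi}(K_j\otimes\unit_d)\ket{\Phi}=\tr K_j/d$ are both correct, as is the use of trace preservation to get $\sum_j\tr(K_jK_j\ct)=d$. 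The one defect is a consistent notational error: in both displayed intermediate formulas you write $\bigl\lvert\sum_j\tr K_j\bigr\rvert^2$ where the computation actually produces $\sum_j\av{\tr K_j}^2$; the equality $\sum_j\av{\tr K_j/d}^2=\bigl\lvert\sum_j\tr K_j\bigr\rvert^2/d^2$ is false for more than one Kraus operator. Because the same wrong expression appears in both the $F_{\mathrm{avg}}$ and the $F_e$ computations, it cancels and the final identity $(d+1)F_{\mathrm{avg}}(\E)=dF_e(\E)+1$ is still correctly derived, but as written two of your displayed equations are not true statements about general channels; replacing $\bigl\lvert\sum_j\tr K_j\bigr\rvert^2$ by $\sum_j\av{\tr K_j}^2$ throughout fixes the argument completely.
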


\begin{lem}\label{lem:diamond_bounds}
	Let $\Delta$ be a Hermiticity-preserving linear map between $d$-dimensional operators. Then the following inequalities bound the diamond norm and are saturated:
	\begin{align}
		\norm{J(\Delta)}_1 \leq \norm{\Delta}_{\diamond} \leq d\norm{J(\Delta)}_1 \,.
	\end{align}
\end{lem}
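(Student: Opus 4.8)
The plan is to establish the two inequalities separately, in each case reducing to a known variational characterization of the relevant norm and then exhibiting a saturating instance.

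For the lower bound $\norm{J(\Delta)}_1 \leq \norm{\Delta}_\diamond$, I would use the fact that the diamond norm is at least as large as the induced trace norm on any \emph{fixed} input, and in particular on the maximally entangled state: taking $\psi = \Phi$ in the definition $\norm{\Delta}_\diamond = \sup_\psi \norm{\unit_d \otimes \Delta(\psi)}_1$ (up to swapping which tensor factor carries the map, which is immaterial) gives immediately $\norm{\Delta}_\diamond \geq \norm{\unit_d \otimes \Delta(\Phi)}_1 = \norm{J(\Delta)}_1$. The more substantive direction is the upper bound $\norm{\Delta}_\diamond \leq d\,\norm{J(\Delta)}_1$. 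Here I would write the optimal input state in the diamond-norm supremum as $\psi = (A \otimes \unit_d)\Phi$ for some operator $A$ with $\tr(A A\ct) \leq 1$ — every pure bipartite state arises this way — so that $\unit_d \otimes \Delta(\psi) = (A \otimes \unit_d)\, J(\Delta)\, (A\ct \otimes \unit_d)$ after identifying the tensor legs appropriately. Then $\norm{\unit_d \otimes \Delta(\psi)}_1 \leq \norm{A \otimes \unit_d}_\infty^2 \,\norm{J(\Delta)}_1 = \norm{A}_\infty^2 \,\norm{J(\Delta)}_1$ by submultiplicativity of the trace norm under multiplication by bounded operators, and the constraint $\tr(AA\ct)\leq 1$ forces $\norm{A}_\infty^2 \leq \tr(AA\ct) \leq 1$, which would only give the factor $1$, not $d$. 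The correct bookkeeping is that the normalization of $\Phi$ carries a factor $1/\sqrt{d}$ per leg, so writing $\psi = (A\otimes\unit_d)\Phi$ with $\psi$ normalized means $\tr(AA\ct) = d$, hence $\norm{A}_\infty^2 \leq d$, which yields exactly $\norm{\unit_d\otimes\Delta(\psi)}_1 \leq d\,\norm{J(\Delta)}_1$. Taking the supremum over $\psi$ (equivalently over admissible $A$) gives the claimed bound.

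For saturation of the lower bound, note that it is already an equality whenever the optimal diamond-norm input \emph{is} the maximally entangled state — for instance when $\Delta$ is the difference of two unitary channels, or more simply when $\Delta$ is itself proportional to a channel whose Choi matrix is rank-one; a clean example is $\Delta = \E - \unit$ for a channel $\E$ close to the identity, but the simplest is to take $\Delta$ such that $J(\Delta) \geq 0$, where $\norm{J(\Delta)}_1 = \tr J(\Delta)$ and one checks the diamond norm equals the same trace. For saturation of the upper bound, I would take $\Delta$ to be a \emph{replacement map} composed with a rank deficiency, e.g.\ $\Delta(X) = \tr(X)\,\omega$ for a fixed traceless Hermitian $\omega$ with $\norm{\omega}_1 = 1$: then $J(\Delta) = \omega \otimes \unit_d / d$ up to normalization, so $\norm{J(\Delta)}_1 = 1/d$ (tracking the $\Phi$ normalization), while $\norm{\Delta}_\diamond = \sup_\psi \norm{\tr_1(\psi)\otimes\omega}_1 = \norm{\omega}_1 = 1$, giving the ratio $d$ exactly.

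The main obstacle is purely one of \emph{normalization bookkeeping}: the factor of $d$ lives entirely in the $1/\sqrt d$ per-leg normalization of $\Phi = \sum_j \ket{jj}/\sqrt d$ and in the constraint $\tr(AA\ct) = d$ (not $1$) for the Stinespring-style parametrization $\psi = (A\otimes\unit_d)\Phi$ of normalized pure states. If one instead used the unnormalized maximally entangled vector $\sum_j \ket{jj}$, or defined $J(\Delta)$ without the $1/d$, the constant would shift and the inequalities would look different; so the proof must fix these conventions at the outset and carry them consistently. The operator-norm/trace-norm submultiplicativity step ($\norm{B M B\ct}_1 \leq \norm{B}_\infty^2 \norm{M}_1$) and the "every pure bipartite state is $(A\otimes\unit)\Phi$" fact are both standard and would be invoked without proof. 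Finally, I would remark that these bounds, together with Lemma~\ref{lem:av_f}, are what let us convert the randomized-benchmarking estimate of $f$ (hence of $F_{\rm avg}$, hence of $F[\Phi,J(\E)]$) into two-sided bounds on $\norm{\Delta}_\diamond$ via the Fuchs--van~de~Graaf inequalities~\eqref{eq:Fuchs} and~\eqref{eq:Fuchs_pure}.
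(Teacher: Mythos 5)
Your derivation of the two inequalities themselves is correct, and for the upper bound you take a genuinely different and arguably cleaner route than the paper: you parametrize every normalized pure input as $(A\otimes\unit_d)\Phi$ with $\tr(AA\ct)=d$, commute $A$ past the map to act directly on $J(\Delta)$, and use $\norm{BMB\ct}_1\le\norm{B}_\infty^2\norm{M}_1$ together with $\norm{A}_\infty^2\le\tr(AA\ct)=d$. The paper instead invokes Watrous's semidefinite characterization $\norm{\Delta}_\diamond=\sup\bigl\{\norm{(\unit_d\otimes\sqrt{\rho_0})J(\Delta)(\unit_d\otimes\sqrt{\rho_1})}_1\bigr\}$ and then an operator inequality of Bhatia plus submultiplicativity. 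Your version is more self-contained and makes the origin of the factor $d$ (the $1/\sqrt d$ per leg in $\Phi$) completely transparent; the paper's version leans on external results but avoids the tensor-leg bookkeeping. The lower bound is identical in both.

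However, both of your saturation examples are flawed, and since the lemma explicitly asserts saturation this is a real gap. For the upper bound, you take the replacement map $\Delta(X)=\tr(X)\,\omega$ and claim $\norm{J(\Delta)}_1=1/d$; but $J(\Delta)=\omega\otimes\unit_d/d$ has $\norm{J(\Delta)}_1=\norm{\omega}_1\cdot\norm{\unit_d/d}_1=1\cdot\tr(\unit_d/d)=1$, not $1/d$. So this map has $\norm{\Delta}_\diamond=\norm{J(\Delta)}_1=1$ and in fact saturates the \emph{lower} bound, not the upper one. The correct extremal example (the paper's) is a map that compresses onto a single vector, e.g.\ $\Delta(X)=\langle0|X|0\rangle\,\density{0}$: its Choi matrix $\density{0}\otimes\density{0}/d$ has trace norm $1/d$ because the input marginal is rank one rather than maximally mixed, while feeding in $\density{0}$ directly still gives $\norm{\Delta}_\diamond=1$. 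For the lower bound, your claim that $J(\Delta)\ge0$ implies $\norm{\Delta}_\diamond=\tr J(\Delta)$ is false: the compression map just described has $J(\Delta)\ge0$ with $\tr J(\Delta)=1/d$ yet $\norm{\Delta}_\diamond=1$. (Completely positive does not suffice; you need trace preservation as well, or simply take $\Delta=\unit_d$ as the paper does, which also falls under your correct rank-one-Choi remark.) Your "difference of two unitary channels" suggestion also fails in general for $d\ge3$, where a maximally entangled input need not be optimal.
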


\begin{proof}
	We first prove the lower bound and show that it is saturated. We have 
	\begin{align}
		\norm{\Delta}_{\diamond} = \sup_{\psi}\norm{\Delta\otimes\unit_d(\psi)}_1
		\geq \norm{\Delta\otimes\unit_d(\Phi)}_1 = \norm{J(\Delta)}_1	\,.
	\end{align}
	To see that the above inequality is saturated, simply let $\Delta =\unit_d$.
	
	To prove the upper bound, we write
	\begin{align}
		\norm{\Delta}_{\diamond} = d \sup\Bigl\{\norm{\pa{\unit_d\otimes \sqrt{\rho_0}}J(\Delta)\pa{\unit_d\otimes \sqrt{\rho_1}} }_1:\rho_0,\rho_1\in \mc{D}_d\Bigr\}
	\end{align}
	which follows from Theorem 6 of Ref.~\cite{Watrous2012}, while being careful to note that our convention for $J(\Delta)$ differs from Ref.~\cite{Watrous2012} by a factor of $d$. Using~\cite[Prop.~IV.2.4]{Bhatia1997}, the inequality $\norm{ABC}_1 \le \norm{A}_\infty \norm{B}_1 \norm{C}_\infty$ together with $\norm{\rho}_\infty \le1$ for any state $\rho$ implies that
	\begin{align}
		\norm{\Delta}_{\diamond} &\le d \sup\Bigl\{\norm{\pa{\unit_d\otimes \sqrt{\rho_0}}}_\infty \norm{J(\Delta)}_1\norm{\pa{\unit_d\otimes \sqrt{\rho_1}}}_\infty:\rho_0,\rho_1\in \mc{D}_d\Bigr\} \le d\norm{J(\Delta)}_1	\,.
	\end{align}
	To see that this bound is saturated, let $\Delta$ be the projector onto $\density{0}$.
\end{proof}

We note that it would be interesting to see if the previous bounds are still saturated when restricting the input $\Delta$ to be a difference of channels.

\section{Time-dependent gate-independent errors in randomized benchmarking}

We consider the ideal case in which the noise depends only upon the time step. 
For such types of noise, we derive expressions for the mean $\bar{F}_m$ and 
variance $\sigma_m^2$ of the randomized benchmarking distribution 
$\br{F_{m,k}}$ for fixed $m$. In particular, we will show that for 
unital but nonunitary noise, $\sigma_m^2$ decreases exponentially with $m$, 
while for non-unital noise, $\sigma_m^2$ converges to a constant dependent on
the strength of the non-unitality. We will also upper-bound the variance for small 
$m$, which enables the derivation of rigorous confidence intervals for the 
estimate of the average gate infidelity in Sec.~\ref{sec:confidence}. We will also 
show that our results are stable under gate-dependent perturbations in the noise in Sec.~\ref{sec:perturbations}. 

In order to present our results in as clear a form as possible, we will only explicitly consider the original proposal for randomized benchmarking. Interleaved benchmarking can also be treated in an almost identical manner, except that the noise is conjugated by the interleaved gate and is redefined to absorb the noise term for the interleaved gate.

Denoting the noise at time step $t$ by $\Lambda_t$, the sequence of operations 
applied to the system in the randomized benchmarking experiment with 
sequence $s\in\mbb{N}_{\av{\G}}^m$ is
\begin{align}\label{eq:true_ini}
	\mc{S}_{s} = \prod_{t=m}^0 g_t\Lambda_t \,.
\end{align}
Here $g_t$ are the ideal unitary gates which are sampled from any unitary 2-design $\G$. 

To make it easier to analyze the above expression, we define $h_t = \prod_{b=m}^t g_b$, so that $h_0 = \unit$, $h_{m}=g_{m}$ and $g_t = h_{t+1}\ct h_t$ for all $t\in(0,m)$. Uniformly sampling the $g_t$ is equivalent to uniformly sampling the $h_t$ since $\G$ is a group; the exception is $h_0$ and $g_0$, which are chosen so that the product of all the gates is the identity (c.f.\ Sec.~\ref{sec:protocol}). We can then rewrite Eq.~\eqref{eq:true_ini} as
\begin{align}\label{eq:true_fin}
	\mc{S}_{s} = h_m\Lambda_m\ldots h_2\ct h_1\Lambda_1 h_1 \ct\Lambda_0= \prod_{t=m}^1 \Lambda_t^{h_t} \,,
\end{align}
where we incorporate the first noise term into the preparation by setting $\rho \leftarrow \Lambda_0\rho$. This redefinition of $\rho$ is independent of the sequence length because we write the noise as pre- rather than post-multiplying the target. (Note that if the noise post-multiplied the target, then incorporating the final noise term in $E$ would make $E$ depend on the sequence length $m$.)

The probability of observing the outcome $E$ for the sequence $S_{s}$ is 
$F_{m,k} = (E| \mc{S}_{s} |\rho)$. We regard the set 
$\br{F_{m,k}}$ as the realizations of a random variable with mean $\bar{F}_m$ and
variance $\sigma_m^2$. Randomized benchmarking then corresponds to randomly
sampling from the distribution $\br{F_{m,s}}$ (which we henceforth refer 
to as the randomized benchmarking distribution) to approximate the mean 
$\bar{F}_m$.

\subsection{Mean of the benchmarking distribution}\label{sec:mean}

We now derive an expression for $\bar{F}_m$ for general CPTP maps with 
time-dependent noise. A similar expression was derived for 
time-\emph{independent} noise in Ref.~\cite{Magesan2011}. We will then show 
how $\bar{F}_m$ can be used to approximate quantities of experimental 
interest, namely, the SPAM error, average time-dependent gate fidelity and the 
worst-case error due to the noise.

\begin{thm}\label{thm:fidelity_curve}
	The mean of the distribution $\br{F_{m,s}}$ for fixed $m$ is 
	\begin{align}
		\bar{F}_m =  E_0\rho_0 + \vec{E}\cdot\vec{\rho}\prod_{t=1}^{m} f_t	\,.
	\end{align}
\end{thm}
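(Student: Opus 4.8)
The plan is to compute the expectation $\bar F_m = \avg_s (E|\mc{S}_s|\rho)$ directly from the factored form $\mc{S}_s = \prod_{t=m}^{1}\Lambda_t^{h_t}$ derived in Eq.~\eqref{eq:true_fin}, using the fact that the $h_1,\ldots,h_m$ are independent and each uniformly distributed over $\G$ (which is a unitary 2-design, so the relevant representation $(\varphi,\C^{d^2-1})$ is irreducible). First I would write $\bar F_m = (E|\,\avg_{h_1,\ldots,h_m}\bigl[\prod_{t=m}^1 \Lambda_t^{h_t}\bigr]\,|\rho)$ and observe that, because the $h_t$ are independent, the average factorizes into a nested sequence of single-variable twirls. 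Concretely, averaging first over $h_1$ turns $\Lambda_1^{h_1}$ into the $\G$-twirl $\Lambda_1^{\G}$, but one must be slightly careful: the twirl acts on $\Lambda_1$ alone, with the remaining factors $\prod_{t=m}^{2}\Lambda_t^{h_t}$ held fixed, so the innermost step gives $\bigl(\prod_{t=m}^{2}\Lambda_t^{h_t}\bigr)\,\Lambda_1^{\G}$; one then proceeds outward, averaging over $h_2$, etc. Each average over $h_j$ replaces the ``bare'' $\Lambda_j$ sitting in its slot by its twirl $\Lambda_j^{\G}$, and after all $m$ averages we get $\bar F_m = (E|\,\Lambda_m^{\G}\cdots\Lambda_1^{\G}\,|\rho)$. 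The one subtlety to flag is that after twirling $\Lambda_j$ we are not simply composing independent twirls — the conjugating group elements for later factors still appear — but since $A^{\G}$ commutes with $\varphi(g)$ for every $g$ (as noted in the text after the definition of the $\G$-twirl), $(\Lambda_j^{\G})^{h_k} = \Lambda_j^{\G}$, so the conjugations on already-twirled factors are harmless and the factorization goes through cleanly by induction on the number of averages performed.

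Next I would evaluate each twirled channel using Proposition~\ref{prop:twirling}. Writing each $\Lambda_t$ in block form $\Lambda_t = \left(\begin{smallmatrix}1 & 0\\ \alpha(\Lambda_t) & \varphi(\Lambda_t)\end{smallmatrix}\right)$ and recalling that $\G$ acts as $1\oplus\varphi$ on $\C^{d^2}$ with $(\varphi,\C^{d^2-1})$ irreducible and nontrivial, the second bullet of Proposition~\ref{prop:twirling} gives $\Lambda_t^{\G} = 1 \oplus f_t\,\unit_{d^2-1}$, where $f_t = \tfrac{1}{d^2-1}\tr\varphi(\Lambda_t) = f(\Lambda_t)$ — i.e.\ exactly the quantity identified earlier as corresponding to the average gate fidelity at time $t$. (Note the $\G$-twirl of the full $d^2\times d^2$ matrix kills the $\alpha$-block because $\sum_g \varphi(g) = 0$ by the third bullet of Proposition~\ref{prop:twirling}, applied to the off-diagonal piece; this is where nonunitality drops out of the mean.) Hence $\Lambda_m^{\G}\cdots\Lambda_1^{\G} = 1 \oplus \bigl(\prod_{t=1}^m f_t\bigr)\unit_{d^2-1}$.

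Finally I would plug this into $\bar F_m = (E|\bigl[1\oplus(\prod_t f_t)\unit_{d^2-1}\bigr]|\rho)$. Writing $|\rho) = (\rho_0, \vec\rho)^T$ and $(E| = (E_0, \vec E)$ in the chosen basis with $A_0 = \unit/\sqrt d$, the block-diagonal action gives $\bar F_m = E_0\rho_0 + \vec E\cdot\vec\rho\,\prod_{t=1}^m f_t$, which is the claimed formula. The $E_0\rho_0$ term is the SPAM-dependent constant ($A$ in Eq.~\eqref{eq:fidelity_curve}) and $\vec E\cdot\vec\rho$ is the constant $B$. I expect the main obstacle to be the bookkeeping in the first paragraph: making the inductive factorization of the iterated expectation rigorous — in particular stating precisely that at stage $j$ one averages over $h_j$ with everything else fixed, using independence to pull the expectation inside, and using the commutation property $(\Lambda_i^{\G})^{h_j} = \Lambda_i^{\G}$ to see that earlier-twirled factors are unaffected by the remaining conjugations. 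Everything after that is a direct application of Proposition~\ref{prop:twirling} and a two-line matrix computation.
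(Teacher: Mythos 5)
Your proposal is correct and follows essentially the same route as the paper: the paper's proof likewise writes $\bar F_m = (E|\prod_{t=m}^1 \Lambda_t^\G|\rho)$ (the factorization over independent $h_t$ that you spell out) and then applies Proposition~\ref{prop:twirling} to reduce each twirled channel to $1\oplus f_t\unit_{d^2-1}$. The only remark worth making is that the ``subtlety'' you flag is vacuous: in the form $\prod_{t}\Lambda_t^{h_t}$ each factor depends on exactly one of the independent variables $h_t$, so the expectation of the product factorizes immediately and the commutation property $(\Lambda_j^\G)^{h_k}=\Lambda_j^\G$ is never needed.
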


\begin{proof}
	By definition, the mean is
	\begin{align}\label{eq:time_dependent_mean}
		\bar{F}_m = \av{\mc{G}}^{-m} \sum_{s\in\mbb{N}_{\av{\G}}^m} (E|\mc{S}_{s}|\rho) =  (E| \prod_{t=m}^1 \Lambda^\G |\rho) \,.
	\end{align}
	Using Proposition \ref{prop:twirling} gives
	\begin{align}
		\bar{F}_m = \pa{\begin{array}{c c} E_0 & \vec{E} \end{array}}
		\pa{\begin{array}{c c} 1 & 0 \\ 0 & \prod_{t=1}^{m} f_t\unit_{d^2 - 1}\end{array}}
		\pa{\begin{array}{c} \rho_0 \\ \vec{\rho} \end{array}}\,.
	\end{align}
\end{proof}

The parameters $E_0\rho_0$ and $\vec{E}\vec{\rho}$ directly characterize the quality of the state and measurement procedure (with the caveat that $\rho$ has been redefined to include a noise term), since $\tr E\rho = E_0\rho_0 + \vec{E}\vec{\rho}$. This can be viewed as an instance of gate set tomography using a limited number of combinations of gates~\cite{gatesettomography}.

The parameters $f_t$ that give the mean of a randomized benchmarking distribution are closely related to an operational characterization of the amount of noise, namely, the average gate infidelity~\cite{Magesan2012, Magesan2012a} (which gives the average error rate), as
\begin{align}\label{eq:infidelity}
	r_t = 1 - F_{\mathrm{avg}}(\Lambda_t) = \frac{d-1}{d}(1-f_t)	\,.
\end{align}
The randomized benchmarking protocol will enable the estimation of $\prod_t f_t$, which can then be used to estimate the average gate infidelity averaged over arbitrary time intervals (as shown in Sec.~\ref{sec:time_dependent}). 

We now show that the average gate infidelity provides an upper and a lower bound on $\tfrac{1}{2}\norm{\Lambda-\unit}_{\diamond}$, which gives the worst-case error introduced by using $\Lambda$ instead of $\unit$. An upper bound of the same form was stated without proof in Ref.~\cite{Gambetta2012}, however, the bound here is a factor of two better. The following relation between the diamond distance and the average gate fidelity can also be applied at each time step to relate the time-averaged average gate fidelity to the average diamond distance from the identity channel. Note that the following bound is very loose in the regime $mr\ll 1$ (since in that regime, $r\ll\sqrt{r}$), which is also the regime in which we will typically use it.

\begin{prop}\label{prop:bounds}
	Let $r=1-F_{\rm avg}(\Lambda)$ be the average error rate for $\Lambda$. Then
	\begin{align}
		r(d+1)/d \leq \tfrac{1}{2}\norm{\Lambda-\unit}_{\diamond} \leq \sqrt{d(d+1)r}\,.
	\end{align}
\end{prop}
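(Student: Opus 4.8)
The plan is to chain together the two lemmas already established, Lemma \ref{lem:av_f} and Lemma \ref{lem:diamond_bounds}, with the Fuchs--van~de~Graaf inequalities \eqref{eq:Fuchs} and \eqref{eq:Fuchs_pure} applied to the Choi states. Write $\Delta = \Lambda - \unit$, so that $J(\Delta) = J(\Lambda) - \Phi$ and $\tfrac12\norm{\Delta}_\diamond$ is exactly the quantity to be bounded. Lemma \ref{lem:diamond_bounds} gives $\norm{J(\Delta)}_1 \le \norm{\Delta}_\diamond \le d\,\norm{J(\Delta)}_1$, so everything reduces to controlling $\norm{J(\Delta)}_1 = \norm{J(\Lambda) - \Phi}_1 = 2 D\bigl(\Phi, J(\Lambda)\bigr)$, the trace distance between the maximally entangled state $\Phi$ and the (normalized, since $\Lambda$ is trace-preserving) Choi state of $\Lambda$. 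Meanwhile Lemma \ref{lem:av_f} rewrites the average gate fidelity as $(d+1)F_{\rm avg}(\Lambda) = d\,F\bigl[\Phi, J(\Lambda)\bigr] + 1$, i.e. $1 - F\bigl[\Phi,J(\Lambda)\bigr] = \tfrac{d+1}{d} r$. So the whole problem is converted into relating $D(\Phi,J(\Lambda))$ to $1 - F(\Phi,J(\Lambda))$, which is precisely what the Fuchs--van~de~Graaf inequalities do, with the bonus that $\Phi$ is pure so the sharpened left inequality \eqref{eq:Fuchs_pure} applies.

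Concretely, for the lower bound: using the left inequality of Lemma \ref{lem:diamond_bounds} and then \eqref{eq:Fuchs_pure},
\begin{align}
	\tfrac12\norm{\Lambda-\unit}_\diamond \ge \tfrac12\norm{J(\Delta)}_1 = D\bigl(\Phi,J(\Lambda)\bigr) \ge 1 - F\bigl[\Phi,J(\Lambda)\bigr] = \frac{(d+1)r}{d}\,,
\end{align}
which is the claimed lower bound. For the upper bound: using the right inequality of Lemma \ref{lem:diamond_bounds} and then the right-hand Fuchs--van~de~Graaf inequality in \eqref{eq:Fuchs},
\begin{align}
	\tfrac12\norm{\Lambda-\unit}_\diamond \le \tfrac{d}{2}\norm{J(\Delta)}_1 = d\, D\bigl(\Phi,J(\Lambda)\bigr) \le d\sqrt{1 - F\bigl[\Phi,J(\Lambda)\bigr]} = d\sqrt{\frac{(d+1)r}{d}} = \sqrt{d(d+1)r}\,,
\end{align}
which is the claimed upper bound.

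The only genuine subtlety — the step I would be most careful about — is making sure the normalization conventions line up: that $F$ in Lemma \ref{lem:av_f} is the squared-overlap version of fidelity used throughout this paper (it is, per the footnote), so that $1 - F$ rather than $1 - \sqrt F$ is what appears, and that $J(\Lambda)$ is a genuine density operator (trace one) so that $\norm{J(\Lambda) - \Phi}_1 = 2D(\Phi, J(\Lambda))$ with the standard factor of $2$; this uses trace-preservation of $\Lambda$, and is why the hypothesis that $\Lambda$ is CPTP is needed. Everything else is a direct substitution, so no residual obstacle remains once the conventions are pinned down. Note that the saturation clauses in Lemmas \ref{lem:av_f}–\ref{lem:diamond_bounds} are not needed here; we only need the inequalities themselves.
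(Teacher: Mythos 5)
Your proposal is correct and follows essentially the same route as the paper's own proof: apply Lemma~\ref{lem:diamond_bounds} to $\Delta=\Lambda-\unit$, convert $\norm{J(\Lambda)-\Phi}_1$ into the trace distance $D(\Phi,J(\Lambda))$, bound it via the Fuchs--van~de~Graaf inequalities (using the sharpened pure-state version for the lower bound), and substitute $1-F[\Phi,J(\Lambda)]=(d+1)r/d$ from Lemma~\ref{lem:av_f}. The arithmetic and normalization conventions all check out, so no further changes are needed.
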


\begin{proof}
	Applying Lemma \ref{lem:diamond_bounds} to $\Delta = \Lambda-\unit$ gives
	\begin{align}
		D\sq{\Phi,J(\Lambda)}=\tfrac{1}{2}\norm{J(\Lambda) - \Phi}_1 \leq \tfrac{1}{2}\norm{\Lambda-\unit}_{\diamond} \leq \tfrac{d}{2}\norm{J(\Lambda) - \Phi}_1 = d D\sq{\Phi,J(\Lambda)}	\,.
	\end{align}
	Recalling that $\Phi$, the maximally entangled state, is a pure state and using Eq.~\eqref{eq:Fuchs} and \eqref{eq:Fuchs_pure} gives
	\begin{align}
		1- F\sq{\Phi,J(\Lambda)}	\leq D\sq{\Phi,J(\Lambda)}  \leq \sqrt{1-F\sq{\Phi,J(\Lambda)}} 	\,.
	\end{align}
	From Lemma~\ref{lem:av_f}, $1-F\sq{\Phi,J(\Lambda)} = d^{-1}(d+1)r$. Substituting this into the above expression and combining the inequalities completes the proof.
\end{proof}

\subsection{Upper bounds on the variance}\label{sec:variance}

We now consider the variance $\sigma_m^2$ of the distribution $\br{F_{m,k}}$ for fixed $m$. It has been observed that the standard error of the mean (and hence the \textit{sample} variance) can be remarkably small in experimental applications of randomized benchmarking using relatively few random sequences~\cite{Gambetta2012, Magesan2012}. In this section, we will prove that the variance due to sampling random sequences is indeed small in scenarios of practical interest (i.e., $mr\ll 1$) by obtaining an upper bound on $\sigma_m^2$ in terms of $mr$. For the special case of a qubit, we will also obtain a significantly improved upper bound in terms of $m$ and $r$. 

We begin by obtaining a general bound on $\sigma_m^2$ that depends only on $m$, $r$ and the dimension $d$ of the system being benchmarked. In order to present results in a simple form, we assume that the noise is time- and gate-independent, however, the results in this section can readily be generalized to time-dependent noise. 

As a first attempt at obtaining a good bound on the variance, we use only the fact that when $mr\ll 1$, we have $\bar{F}_m \approx A+B$, where $A = E_0\rho_0$ and $B = \vec{E}\cdot\vec{\rho}$. Expanding the expression from Theorem~\ref{thm:fidelity_curve} to first order in $r$ using Eq.~\eqref{eq:infidelity} gives
\begin{align}
	\bar{F}_m = A + B - \frac{Bmdr}{d-1}	\,.
\end{align}
The value of all realizations of $\bar{F}_m$ (i.e., the probabilities $F_{m,s}$) are all in the unit interval. Since the distribution with the largest variance that has mean $\bar{F}_m$ and takes values in the unit interval is the binomial distribution with that mean, we then have
\begin{align}
	\sigma_m^2\leq \bar{F}_m(1-\bar{F}_m) = (A + B)(1 - A - B) + \frac{mdBr}{d-1}	\,.
\end{align}
While simple to obtain, this bound has a constant off-set term that depends upon the SPAM which seems to be unavoidable. This term would be zero in the absence of SPAM, and could even be eliminated if the probabilities $F_{m,s}$ could be restricted to the interval $[1-A-B,A+B]$. However, as illustrated in Sec.~\ref{sec:confidence}, this cannot be done in general. Moreover, we expect that the above argument substantially overestimates the variance because it ignores the possibility that many sequences may have $F_{m,s}$ closer to $\bar{F}_m$.

We now obtain an alternative bound that has a larger coefficient for $r$, but no constant term. We note from the outset that the following bound is not tight in general (and the previous bound suggests that the dimensional factor is an artifact of the proof technique), though by improving one of the steps we will be able to obtain a tight bound for qubits. To facilitate our analysis, we use the identity $(E|\E |\rho)^2 = (E\nt{2}| \E\nt{2}|\rho\nt{2})$ to write the variance as
\begin{align}\label{eq:variance}
	\sigma_m^2 = \av{\G}^{-m} \sum_{k} F_{m,k}^2 - \bar{F}_m^2= (E\nt{2}|\Bigl(\sq{(\Lambda\nt{2})^\G}^m - 
	\sq{\pa{\Lambda^\G}\nt{2}}^m\Bigr)|\rho\nt{2}) \,.
\end{align}

\begin{thm}\label{thm:qudit_bound}
	The variance for time- and gate-independent randomized benchmarking of $d$-level systems with time- and gate-independent noise satisfies
	\begin{align}
		\sigma_m^2 \leq 4d(d+1)mr + O(m^2 r^2 d^4)\,.
	\end{align}
\end{thm}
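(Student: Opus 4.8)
The plan is to bound the variance expression \eqref{eq:variance}, namely $\sigma_m^2 = (E\nt{2}|\big(\sq{(\Lambda\nt{2})^\G}^m - \sq{(\Lambda^\G)\nt{2}}^m\big)|\rho\nt{2})$, by controlling the difference of the two $m$-th powers appearing inside. First I would write $P := (\Lambda\nt{2})^\G$ and $Q := (\Lambda^\G)\nt{2}$, noting that both are contractions in an appropriate sense (since $\Lambda$ is CPTP, $\norm{\Lambda}_\infty\le\sqrt d$ by Proposition~\ref{prop:channel_norm}, and the twirl is an average of conjugations by unitaries so does not increase the relevant norm). The key algebraic identity is the telescoping sum $P^m - Q^m = \sum_{j=0}^{m-1} P^{j}(P-Q)Q^{m-1-j}$, so that $\norm{P^m-Q^m}_\infty \le m \max_j \norm{P^j}_\infty \norm{Q^{m-1-j}}_\infty \norm{P-Q}_\infty$. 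Then $\sigma_m^2 \le \norm{(E\nt 2|}_2 \, \norm{P^m-Q^m}_\infty \, \norm{|\rho\nt 2)}_2 \le m\,C(d)\,\norm{P-Q}_\infty$ where $C(d)$ collects the norm of the input/output vectors (each $\norm{|\rho)}_2\le 1$ so $\norm{|\rho\nt 2)}_2\le 1$, similarly for $E$) together with the contraction bounds on the intermediate powers.

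Second, I would estimate $\norm{P-Q}_\infty = \norm{(\Lambda\nt 2)^\G - (\Lambda^\G)\nt 2}_\infty$ in terms of $r$. The point is that for the \emph{identity} channel $\Lambda = \unit$ one has $(\unit\nt 2)^\G = (\unit^\G)\nt 2 = \unit\nt 2$, so $P - Q$ vanishes at zeroth order; writing $\Lambda = \unit + \delta\Lambda$ with $\norm{\delta\Lambda}$ controlled by $r$ (via Proposition~\ref{prop:channel_norm} and Proposition~\ref{prop:bounds}, or more directly: $\varphi(\Lambda)$ has trace $(d^2-1)f = (d^2-1)(1 - \tfrac{d}{d-1}r)$ and eigenvalues bounded in modulus by $1$, forcing $\norm{\varphi(\Lambda) - \unit}$ to be $O(dr)$ in an averaged sense, and $\norm{\alpha(\Lambda)}_2 \le \sqrt{d-1}$ but actually $O(\sqrt r)$ near the identity), one expands $(\Lambda\nt 2)^\G - (\Lambda^\G)\nt 2$ to first order in $\delta\Lambda$. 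Both the twirl-of-tensor-square and the tensor-square-of-twirl are computed using Proposition~\ref{prop:twirling} and Proposition~\ref{lem:trivial_subrep} (the representation $\varphi$ is orthogonal hence has real character, so the trivial rep sits inside $\varphi\nt 2$ with multiplicity one), and the leading terms should cancel, leaving a term linear in $\delta\Lambda$ whose norm is $O(dr)$ — with the extra $d$ coming from the dimension of the invariant subspaces and the crude vector-norm bounds. Carrying this through yields the $4d(d+1)$ coefficient on $mr$, with the neglected second-order terms in $\delta\Lambda$ (hence in $r$), compounded over the telescoping sum and the intermediate power bounds, collected into $O(m^2 r^2 d^4)$.

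The main obstacle I anticipate is getting the dimensional bookkeeping in $\norm{P-Q}_\infty$ clean enough to land exactly the $4d(d+1)$ constant rather than something worse, because one must simultaneously (i) decompose $\varphi\nt 2$ into its irreducible pieces and identify on which pieces $P$ and $Q$ act differently, (ii) control the perturbation $\delta\Lambda$ by $r$ rather than by the weaker $O(\sqrt r)$ one gets from the naive Fuchs–van~de~Graaf style bounds on the nonunital part $\alpha(\Lambda)$, and (iii) make sure the $O(m^2 r^2 d^4)$ remainder genuinely absorbs all cross terms from the binomial-type expansion of $P^m - Q^m$ when each factor is itself $\unit + O(dr)$. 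The honest statement in the paper (\emph{the dimensional factor is an artifact of the proof technique}) signals that a loose but correct bound suffices here, so I would not try to optimize the constant; the goal is just a rigorous $\sigma_m^2 \le 4d(d+1)mr + O(m^2 r^2 d^4)$, and the qubit-specific refinements are deferred to the subsequent theorem where a more careful, basis-aware analysis replaces these crude operator-norm estimates.
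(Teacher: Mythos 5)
Your skeleton (telescoping the difference of $m$-th powers and then bounding $\norm{P-Q}$, with $P=(\Lambda\nt{2})^\G$ and $Q=(\Lambda^\G)\nt{2}$) is essentially the paper's perturbative expansion in disguise, but the crucial order-counting step is wrong, and as stated the argument would stall. You claim that after the zeroth-order cancellation at $\Lambda=\unit$ the difference $P-Q$ is \emph{linear} in $\delta\Lambda=\Lambda-\unit$, and you flag as an obstacle the need to control $\norm{\delta\Lambda}$ by $r$ rather than by $\sqrt{r}$. That obstacle cannot be overcome: a unitary over-rotation by angle $\theta$ has $r=\Theta(\theta^2)$ but $\norm{\delta\Lambda}_\infty=\Theta(\theta)=\Theta(\sqrt{r})$, so no norm of $\delta\Lambda$ is generically $O(r)$; Proposition~\ref{prop:bounds} gives only $\norm{\delta\Lambda}_\diamond\leq 2\sqrt{d(d+1)r}$, and the paper explicitly remarks this is tight up to constants. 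If $P-Q$ really were first order in $\delta\Lambda$, your telescoping bound would yield $\sigma_m^2=O(m\sqrt{r})$, far weaker than the theorem.

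The missing idea --- which is the heart of the paper's proof --- is that the linear terms cancel \emph{exactly}, not just the constant terms: since $(\unit\ox\delta\Lambda)^g=\unit\ox\delta\Lambda^g$ (twirling commutes with tensoring by the identity), expanding both $P$ and $Q$ gives identical first-order contributions $\unit\ox(\delta\Lambda)^\G+(\delta\Lambda)^\G\ox\unit$, so that $P-Q=(\delta\Lambda\nt{2})^\G-\bigl((\delta\Lambda)^\G\bigr)\nt{2}$ is purely \emph{quadratic} in $\delta\Lambda$. Only then does the $O(\sqrt{r})$ bound on $\norm{\delta\Lambda}_\diamond$ square to the advertised $O(d^2r)$: $\norm{(\delta\Lambda\nt{2})^\G}_\diamond\leq\norm{\delta\Lambda}_\diamond^2\leq 4d(d+1)r$ by unitary invariance, submultiplicativity, and Proposition~\ref{prop:bounds}, while $\bigl((\delta\Lambda)^\G\bigr)\nt{2}$ is computed exactly from Proposition~\ref{prop:twirling} and contributes only $O(r^2)$. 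The paper implements exactly this by writing $\Lambda=\unit-r\Delta$, observing that all first-order terms and all cross terms with the two $\Delta$'s at different time steps cancel, and bounding the $m$ surviving $\Delta\nt{2}$ terms by the diamond norm; it then separately checks that the third- and fourth-order terms contribute $O(m^2r^2d^4)$. With that cancellation inserted, your telescoping route does deliver the theorem (up to a factor of $2$ in the constant if you use the triangle inequality on both pieces of $P-Q$), and no irrep decomposition of $\varphi\nt{2}$ is needed at this level of precision --- that refinement is indeed reserved for the qubit bound.
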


\begin{proof}
	We write $\Lambda = \unit - r\Delta$, where the first row of $\Delta$ is zero since $\Lambda$ is CPTP. Since $(d^2-1)f =\tr\varphi = \tr\Lambda-1$, we can use Eq.~\eqref{eq:infidelity} to obtain $\tr\Delta = d(d+1)$
	
	We then expand the expression
	\begin{align}
		\sigma_m^2 = (E\nt{2}|\Bigl(\sq{(\Lambda\nt{2})^\G}^m - 
		\sq{\pa{\Lambda^\G}\nt{2}}^m\Bigr)|\rho\nt{2})
	\end{align}
	to second order in $r\Delta$. Note that $(\Delta\otimes\unit)^g = \Delta^g\otimes \unit$ and so all the first-order terms and the second-order terms where the $\Delta$ act at different times will cancel. Therefore the only second-order terms are the $m$ terms with $\Delta\nt{2}$ and so the variance is
	\begin{align}
		\sigma_m^2 = mr^2 (E\nt{2}|\sq{(\Delta\nt{2})^\G - 
			\pa{\Delta^\G}\nt{2}}|\rho\nt{2})	+ O(r^3\Delta^3)	\,.
	\end{align}
	Noting that $\Delta^\G = \frac{d(d+1)}{d^2-1} \unit$, the variance satisfies
	\begin{align}
		\sigma_m^2 &\leq mr^2\av{\G}^{-1}\norm{\sum_{g\in\G} (\Delta\nt{2})^g}_{\diamond} + O(r^3\Delta^3) + O(mr^2) \notag\\
		&\leq mr^2\norm{\Delta\nt{2}}_{\diamond} + O(r^3\Delta^3) + O(mr^2)	\notag\\
		&\leq mr^2\norm{\Delta}_{\diamond}^2 + O(r^3\Delta^3)	\notag\\
		&\leq 4d(d+1)mr + O(r^3\Delta^3) + O(mr^2)	\,,
	\end{align}
	where we have used the triangle inequality, the invariance of the diamond norm under unitary conjugation, the submultiplicativity of the diamond norm [with $\Delta\otimes\Delta = (\Delta\otimes\unit)(\unit\otimes\Delta)$] and Proposition~\ref{prop:bounds}.
	
	Finally, consider terms of $O(r^k\Delta^k)$ for $k>2$. For $k\geq 3$, all $O(m^k)$ such terms are upper-bounded by $r^k\norm{\Delta^k}_{\diamond}$ and so are $O(r^{k/2}d^k)$. Therefore the only contributions of $O(r^2)$ or greater are from $k=3$ and $k=4$.
	
	For $k=3$, the only terms that will not cancel are products whose only nontrivial terms are a $(\Delta\otimes\Delta)^\G$ and a $\Delta^\G\otimes\unit = \frac{d(d+1)}{d^2-1} \unit$. There are only $O(m^2)$ such terms, and applying the diamond norm bound to $\frac{d}{d-1}(\Delta\otimes\Delta)^\G$ shows that such terms contribute at most $O(m^2 r^2 d^2)$.
	
	For $k=4$, the only terms that will be of $O(r^2)$ are those that are products with two $(\Delta\otimes\Delta)^\G$ terms. Again, there are only $O(m^2)$ such terms and so such terms also contribute at most $O(m^2 r^2 d^4)$.
\end{proof}

While the bound in Theorem~\ref{thm:qudit_bound} is promising, it is not sufficiently small to justify the sequence lengths chosen in many experimental implementations of randomized benchmarking for a single qubit, since $mr\approx 10^{-2}$ in many such experiments and so the contribution to standard error of the mean due to sampling random gate sequences is expected to be on the order of $0.1 K^{-1/2}$, where $K$ is the number of random sequences of length $m$ that are sampled. 

One of the loosest approximations in Theorem~\ref{thm:qudit_bound} is the use of the triangle inequality to upper-bound the contribution from terms of the form $(\Delta\nt{2})^\G$. Avoiding this is difficult in general, however, for the case of a single qubit, we can significantly improve the following bound by understanding the irrep structure of the representation $g\otimes g$. This irrep structure will depend on the choice of 2-design, so we now fix the 2-design to be the single qubit Clifford group, $\mc{C}_2$ and work in the Pauli basis $\A=\{\unit,X,Y,Z\}/\sqrt{2}$ (where the factor of $\sqrt{2}$ makes the basis trace-orthonormal). In particular, we will work in the block basis
\begin{align}
	\pa{\begin{array}{c} \unit \\ \unit\otimes\vec{\sigma} \\ \vec{\sigma}\otimes\unit \\ \vec{\sigma}\otimes\vec{\sigma} \end{array}}
\end{align}
where $\vec{\sigma} = \{X,Y,Z\}/\sqrt{2}$. Restricting the Liouville representation to each of the blocks in the above basis will give a rep of $\mc{C_2}$, where the first three reps have already been characterized. We now characterize the final subrep, $(\phi\nt{2},\C^9)$.

\begin{prop}\label{prop:tensor_irreps}
	The representation $(\phi\nt{2},\C^9)$ of $\mc{C}_2$ is the direct sum of four inequivalent irreps.
\end{prop}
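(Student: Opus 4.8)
The plan is to decompose the $9$-dimensional representation $(\phi\nt{2},\C^9)$ of $\mc{C}_2$ explicitly into irreducible pieces and then check that the four pieces obtained are pairwise inequivalent. First I would recall that $(\phi,\C^3)$ is the irrep of $\mc{C}_2$ acting on the traceless Pauli vector $\vec{\sigma}=\{X,Y,Z\}/\sqrt{2}$; it is the standard $3$-dimensional real orthogonal representation in which $\mc{C}_2$ acts as the symmetry group of the octahedron (i.e.\ the signed permutations of the three coordinate axes with an even number of sign changes, together with the axis permutations). Since $\chi_\phi$ is real-valued, Proposition~\ref{lem:trivial_subrep} already tells us the trivial rep appears in $\phi\nt{2}$ exactly once; this corresponds to the $\mc{C}_2$-invariant vector $\sum_j \sigma_j\otimes\sigma_j$ (proportional to the Liouville representation of the SWAP). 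Splitting $\C^9 = \Sym^2(\C^3)\oplus \wedge^2(\C^3)$ into the symmetric and antisymmetric parts (which are each subreps because $g\otimes g$ commutes with the swap on $\C^3\otimes\C^3$), the antisymmetric part $\wedge^2(\C^3)$ is $3$-dimensional and, for the octahedral group, is the ``axial-vector'' rep $\phi'$, which is $\phi$ twisted by the sign character (equivalently, $\wedge^2$ of the permutation-type action); one checks $\phi'\not\cong\phi$ because the characters differ on the reflection-type elements. The symmetric part $\Sym^2(\C^3)$ is $6$-dimensional and contains the trivial rep once (the trace direction just identified), leaving a $5$-dimensional complement, which I claim is irreducible.

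The key computational step, then, is to verify the irreducibility of this $5$-dimensional piece (the traceless symmetric rank-$2$ tensors) and to confirm that the four summands --- trivial, $\phi$, $\phi'$, and the $5$-dimensional rep --- are mutually inequivalent. I would do this via characters: compute $\chi_{\phi\nt{2}}(g) = \chi_\phi(g)^2$ on each conjugacy class of $\mc{C}_2$ (there are few classes; $|\mc{C}_2|=24$), subtract off $\chi_1$, $\chi_\phi$, and $\chi_{\phi'}$, and check that the remaining class function $\chi_5$ satisfies $\av{\G}^{-1}\sum_g \av{\chi_5(g)}^2 = 1$ (irreducibility) and that $\sum_g \chi_5(g)\chi_\rho(g) = 0$ for $\rho$ among the other three irreps (inequivalence). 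Because the characters of $\phi$ and $\phi'$ differ, and the trivial and $5$-dimensional reps have different dimensions from everything else, inequivalence of all four is immediate once irreducibility of $\chi_5$ is confirmed. Dimension bookkeeping $1 + 3 + 3 + \dots = 9$ forces the last dimension to be $5$ and, combined with $\sum (\dim)^2$ against $\av{\G}$, is consistent with the decomposition being complete.

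The main obstacle --- really the only nonroutine part --- is establishing irreducibility of the $5$-dimensional summand. This is not automatic from $\Sym^2$ of an irrep (symmetric squares of irreps are frequently reducible beyond the obvious trivial piece), so the character inner-product computation genuinely has to be carried out, which in turn requires correctly enumerating the conjugacy classes of $\mc{C}_2$ and the values $\chi_\phi$ takes on them. I expect the cleanest route is to use the isomorphism of the relevant quotient of $\mc{C}_2$ with the rotation group of the cube $\cong S_4$: under that identification $\phi$ is the standard $3$-dimensional rep of $S_4$, $\phi'$ is the standard rep tensored by sign, the $5$-dimensional piece is the $\Sym^2$-minus-trivial rep of $S_4$ (which is well known to be irreducible, being the unique $5$-dimensional irrep of $S_4$ realized on traceless symmetric tensors), and the character table of $S_4$ then supplies all the orthogonality checks at once. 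One subtlety to handle carefully is the phase/kernel issue noted in the excerpt: the Liouville representation is projective on $\U(d)$, but the restriction to $\mc{C}_2$ is an honest (linear) representation on $\C^3$ since global phases drop out, so the standard character machinery applies without modification.
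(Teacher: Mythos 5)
Your overall strategy (split $\C^9=\Sym^2(\C^3)\oplus\wedge^2(\C^3)$, pass to $\mc{C}_2/\U(1)\cong S_4$, and run character orthogonality) is viable, but the answer you assert at the decisive step is wrong, and your own proposed computation would have caught it. The group $S_4$ has irreps of dimensions $1,1,2,3,3$ only (since $1+1+4+9+9=24$), so there is no $5$-dimensional irrep of $S_4$; the traceless symmetric square is \emph{not} irreducible. Carrying out the inner product you describe: $\chi_{\Sym^2}(g)=\tfrac{1}{2}[\chi_\phi(g)^2+\chi_\phi(g^2)]$ gives $(6,2,0,2,0)$ on the classes $(e,\ \text{transpositions},\ 3\text{-cycles},\ \text{double transpositions},\ 4\text{-cycles})$, so $\chi_5=(5,1,-1,1,-1)$ and $\av{\G}^{-1}\sum_g\chi_5(g)^2=\tfrac{1}{24}(25+6+8+3+6)=2$, i.e.\ the $5$-dimensional piece splits into \emph{two} inequivalent irreps, of dimensions $2$ and $3$. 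The correct decomposition is $1\oplus 2\oplus 3\oplus 3$, as exhibited explicitly by the bases in Eq.~\eqref{eq:qubit_Schur}. Your dimension bookkeeping is also internally inconsistent: you list four summands of dimensions $1,3,3,5$, which total $12$, not $9$; your actual decomposition $1\oplus 5\oplus 3$ would yield only three summands, contradicting the statement to be proved.

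A secondary error: since $\mc{C}_2$ acts on the Bloch sphere by \emph{proper} rotations of the octahedron (no reflections arise from unitaries), $\wedge^2\phi\cong\phi$ itself — the cross product of two vectors is again a vector under proper rotations — rather than $\phi$ twisted by sign. The sign-twisted copy $\phi\otimes\sgn$ is the $3$-dimensional constituent sitting inside the \emph{symmetric} part. This does not affect the truth of the proposition (the four summands are still pairwise inequivalent), but it does invalidate your claimed verification that "the characters differ on the reflection-type elements." For comparison, the paper avoids identifying the summands at all: it computes $\av{\mc{C}_2}^{-1}\sum_g\chi_\phi(g)^4=\tfrac{1}{24}(3^4+15)=4=\sum_\lambda n_\lambda^2$ directly, rules out the multiplicity-two alternative using the fact that the trivial irrep occurs exactly once (Proposition~\ref{lem:trivial_subrep}), and then displays the four invariant subspaces explicitly. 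If you want to salvage your route, keep the $\Sym^2\oplus\wedge^2$ splitting but finish by subtracting the trivial and $E$ characters from $\chi_{\Sym^2}$ to identify the remaining $T$-type irrep, and check that it is inequivalent to $\wedge^2\phi\cong\phi$.
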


\begin{proof}
	The proof follows from a direct application of Schur's orthogonality relations, which imply
	\begin{align}\label{eq:Clifford_irreps}
		\av{\mc{C}_2}^{-1} \sum_{g\in\mc{C}_2} \chi_{\phi\nt{2}}(g)^*\chi_{\phi\nt{2}}(g) = \sum_{\lambda} n_{\lambda}^2	\,,
	\end{align}
	where $n_{\lambda}$ is the multiplicity of the irrep $\lambda$ in the rep $\phi\nt{2}$.
	
	The character is given by
	\begin{align}
		\chi_{\phi\nt{2}}(g) = \tr g\nt{2} = (\tr g)^2	\,.
	\end{align}
	Since the elements of $\mc{C}_2$ permute Paulis (up to signs), the diagonal elements of $\G$ in the Pauli basis are either $1$ or $-1$ and there are 0, 1 or 3 diagonal elements that can contribute to $\tr g$.
	
	There are eight elements of $\mc{C}_2$ with no diagonal elements, namely, the eight permutations $X\to\pm Y\to \pm Z$ and $X\to\pm Z\to \pm Z$. There is 1 element with all diagonal elements equal, namely, the identity (note that $-\unit$ is antiunitary so is not in the Clifford group). All other 15 elements of the Clifford group have $\chi_{\phi}(g) = \pm 1$ since the diagonal elements cannot sum to any values in $\br{0,\pm2, \pm3}$.
	
	Plugging these character values into Eq.~\eqref{eq:Clifford_irreps} gives
	\begin{align}
		\av{\mc{C}_2}^{-1} \sum_{g\in\mc{C}_2} \chi_{\phi\nt{2}}(g)^*\chi_{\phi\nt{2}}(g) = \frac{1}{24} \sum_{g\in\mc{C}_2} \av{\chi_{\phi}(g)}^4 = \frac{1}{24}(3^4 + 15) =4	\,.
	\end{align}
	Given that the multiplicity of an irrep must be a nonnegative integer, there are two possibilities. Either there are 4 inequivalent irreps or the rep $\phi\nt{2}$ contains two equivalent irreps. By Proposition~\ref{lem:trivial_subrep}, $\phi\nt{2}$ contains a trivial irrep with multiplicity 1 and so cannot contain two equivalent irreps.
	
	The following bases of operators:
	\begin{align}\label{eq:qubit_Schur}
		\A_1 &= \frac{1}{2\sqrt{3}}\pa{XX + YY + ZZ} \,,	\nonumber\\
		\A_2 &= \br{\frac{1}{2\sqrt{2}}\pa{XX - YY}, \frac{1}{2\sqrt{6}}\pa{XX + YY - 
				2ZZ}} \,,	\nonumber\\
		\A_S &= \frac{1}{2\sqrt{2}}\br{XY - YX, XZ - ZX, YZ - ZY} \,,	\nonumber\\
		\A_T &= \frac{1}{2\sqrt{2}}\br{XY + YX, XZ + ZX, YZ + ZY}\,,
	\end{align}
	span the four irreps.
\end{proof}

The fact that $g\otimes g$ is a direct sum of four inequivalent irreps will allow us to use Schur's lemma on the unital block of $\Delta$. To account for the nonunital component, we use the following bound.

\begin{prop}\label{prop:nonunital_qubit}
	For any completely positive and trace-preserving qubit channel $\Lambda:\D_2\to\D_2$ with average gate infidelity $r<1/3$, the nonunital part $\alpha$ obeys the inequality
	\begin{align}
		\norm{\alpha}_2^2 \leq 9r^2\,.
	\end{align}
\end{prop}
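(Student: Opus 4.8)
The plan is to pass to a Kraus decomposition of $\Lambda$ and control $\alpha$ at the level of the individual $2\times 2$ Kraus operators. First I would dispense with $r\ge 1/3$: Proposition~\ref{prop:channel_norm}(iv) already gives $\norm{\alpha}_2\le\sqrt{d-1}=1$ for every qubit channel, so $\norm{\alpha}_2^2\le 1\le 9r^2$ whenever $r\ge 1/3$. (The argument below in fact yields $\norm{\alpha}_2\le 3r$ with no restriction on $r$; the hypothesis $r<1/3$ only singles out the regime in which the statement is nontrivial.) So fix a Kraus decomposition $\Lambda(\rho)=\sum_j K_j\rho K_j\ct$ with $\sum_j K_j\ct K_j=\unit$, expand each Kraus operator in the Pauli basis as $K_j=a_j\unit+\vec{b}_j\cdot\vec{\sigma}$ with $a_j\in\C$, $\vec{b}_j\in\C^3$, and split $\vec{b}_j=\vec{x}_j+i\vec{y}_j$ with $\vec{x}_j,\vec{y}_j\in\R^3$.

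Next I would establish two ``sum rules.'' Taking the trace of $\sum_j K_j\ct K_j=\unit$ and using $\tr(K_j\ct K_j)=2(\av{a_j}^2+\norm{\vec{b}_j}_2^2)$ gives $\sum_j(\av{a_j}^2+\norm{\vec{b}_j}_2^2)=1$. For the second, write $J(\Lambda)=\sum_j(K_j\otimes\unit)\density{\Phi}(K_j\otimes\unit)\ct$; since $\Phi$ is pure, $F[\Phi,J(\Lambda)]=\sum_j\av{\langle\Phi|(K_j\otimes\unit)|\Phi\rangle}^2=\sum_j\av{a_j}^2$ (using $\langle\Phi|(K_j\otimes\unit)|\Phi\rangle=\tfrac12\tr K_j=a_j$), and Lemma~\ref{lem:av_f}, which for $d=2$ reads $3F_{\rm avg}(\Lambda)=2F[\Phi,J(\Lambda)]+1$, combined with $r=1-F_{\rm avg}(\Lambda)$, gives $\sum_j\av{a_j}^2=1-\tfrac32 r$ and hence $\sum_j\norm{\vec{b}_j}_2^2=\tfrac32 r$. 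The decisive step is then to identify $\alpha$ itself: because $\Lambda(\unit/2)=\tfrac12\sum_j K_j K_j\ct$ we have $\sum_j K_j K_j\ct=\unit+\alpha\cdot\vec{\sigma}$, while $\sum_j K_j\ct K_j=\unit$, so $\alpha\cdot\vec{\sigma}=\sum_j[K_j,K_j\ct]$. A short computation with the identity $(\vec{u}\cdot\vec{\sigma})(\vec{w}\cdot\vec{\sigma})=(\vec{u}\cdot\vec{w})\unit+i(\vec{u}\times\vec{w})\cdot\vec{\sigma}$, together with $\vec{b}_j\times\vec{b}_j^*=-2i\,\vec{x}_j\times\vec{y}_j$, shows $[K_j,K_j\ct]=[\vec{b}_j\cdot\vec{\sigma},\vec{b}_j^*\cdot\vec{\sigma}]=4(\vec{x}_j\times\vec{y}_j)\cdot\vec{\sigma}$, so that $\alpha=4\sum_j\vec{x}_j\times\vec{y}_j$.

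The proposition then follows by the triangle inequality and the elementary estimate $\norm{\vec{x}\times\vec{y}}_2\le\norm{\vec{x}}_2\norm{\vec{y}}_2\le\tfrac12(\norm{\vec{x}}_2^2+\norm{\vec{y}}_2^2)$:
\begin{align}
	\norm{\alpha}_2 = 4\norm{\sum_j\vec{x}_j\times\vec{y}_j}_2 \le 4\sum_j\norm{\vec{x}_j}_2\norm{\vec{y}_j}_2 \le 2\sum_j\bigl(\norm{\vec{x}_j}_2^2+\norm{\vec{y}_j}_2^2\bigr) = 2\sum_j\norm{\vec{b}_j}_2^2 = 3r \,,
\end{align}
whence $\norm{\alpha}_2^2\le 9r^2$. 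I expect the one real obstacle to be landing the sharp constant. The naive bound $\norm{\alpha}_2\le 4\sum_j\av{a_j}\norm{\vec{b}_j}_2$ with Cauchy--Schwarz gives only $\norm{\alpha}_2\le 4\sqrt{(1-\tfrac32 r)\tfrac32 r}=O(\sqrt{r})$, which is hopeless in the regime $mr\ll 1$ where the proposition is used; bounding $\norm{\alpha}_2$ diagonal-entry-by-diagonal-entry of the Choi matrix is a factor of two too weak. What makes it work is the cancellation producing $\alpha=4\sum_j\vec{x}_j\times\vec{y}_j$ (the $a_j$-dependence drops out entirely between $\sum_j K_jK_j\ct$ and $\sum_j K_j\ct K_j$), after which $\norm{\vec{x}\times\vec{y}}_2\le\tfrac12(\norm{\vec{x}}_2^2+\norm{\vec{y}}_2^2)$ supplies precisely the factor converting $\sum_j\norm{\vec{b}_j}_2^2=\tfrac32 r$ into the target $3r$; amplitude damping (for which $\norm{\alpha}_2/3r\to 1$ as the damping strength $\to 1$) shows the constant cannot be improved. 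An alternative, less self-contained route goes through the Ruskai--Szarek--Werner canonical form for qubit channels~\cite{Ruskai2002}.
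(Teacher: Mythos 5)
Your proof is correct, and it takes a genuinely different route from the paper's. The paper works in the Ruskai--Szarek--Werner canonical form $\Lambda = (1\oplus U)\,\mathrm{diag\text{-}block}(w_1,w_2,w_3;t)\,(1\oplus U\ct)(1\oplus V)$, uses von Neumann's trace inequality to convert $\tr\varphi = 3-6r$ into a constraint $\sum_j\delta_j\le 6$ on the perturbations $\av{w_j}=1-\delta_j r$, and then invokes the explicit complete-positivity conditions $\av{t}+\av{w_3}\le 1$ and $(w_j\pm w_k)^2\le(1\pm w_l)^2$ to conclude $\norm{\alpha}_2=\av{t}\le\delta_3 r\le 3r$; the hypothesis $r<1/3$ enters there to rule out $w_3<0$. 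You instead stay at the level of an arbitrary Kraus decomposition: the identification $\alpha\cdot\vec{\sigma}=\sum_j[K_j,K_j\ct]=4\sum_j(\vec{x}_j\times\vec{y}_j)\cdot\vec{\sigma}$ is correct (the $a_j$ terms do cancel), the two sum rules $\sum_j(\av{a_j}^2+\norm{\vec{b}_j}_2^2)=1$ and $\sum_j\av{a_j}^2=1-\tfrac{3}{2}r$ follow correctly from trace preservation and Lemma~\ref{lem:av_f}, and the chain $\norm{\alpha}_2\le 4\sum_j\norm{\vec{x}_j}_2\norm{\vec{y}_j}_2\le 2\sum_j\norm{\vec{b}_j}_2^2=3r$ is valid. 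What your approach buys: it avoids the classification of extremal qubit channels entirely, it is self-contained given Lemma~\ref{lem:av_f}, and it establishes $\norm{\alpha}_2\le 3r$ for \emph{all} $r$, so the restriction $r<1/3$ in the statement becomes unnecessary rather than merely dispensable. What it does not buy is generality in $d$: the cross-product step is as two-dimensional as the paper's canonical form. One small correction to your aside on tightness: for the amplitude-damping channel one finds $\norm{\alpha}_2/(3r)=2(1+\sqrt{g})/(3+\sqrt{g})$, which tends to $1$ in the \emph{weak}-damping limit $g\to 1$ and to $2/3$ at full damping, so the constant is saturated as the noise vanishes, not as the damping becomes strong; this does not affect the conclusion that the constant $3$ cannot be improved.
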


\begin{proof}
Any trace-preserving qubit channel as
\begin{align}\label{eq:ruskai_form}
	\Lambda &= (1\oplus U)\pa{\begin{array}{cccc} 
		1 & 0 & 0 & 0 \\ 
		0 & w_1 & 0 & 0 \\
		0 & 0 & w_2 & 0 \\
		t & 0 & 0 & w_3 \\
	\end{array}}(1\oplus U\ct)(1\oplus V)	\,.
\end{align}
for some $U,V\in O(3)$ [corresponding to unitaries $u,v\in U(2)$], where $\av{w_j}$ are the singular values of $\varphi$ with $\av{w_j}\in[0,1]$ for all $j$~\cite{Ruskai2002} and we have added the $(1\oplus U\ct)$ term for convenience. By Von Neumann's trace inequality~\cite{VonNeumannTraceInequality},
\begin{align}
	3 - 6r = \tr \varphi = \av{\tr UWU\ct V}  \leq \sum_j \av{w_j}
\end{align}
where $W = {\rm diag}(w_1,w_2,w_3)$ and we have used the fact that the singular values of $V$ are all one. For notational convenience, we define perturbations $\delta_j$ by $\av{w_j} = 1-\delta_j r$ which then satisfy $\sum_j \delta_j \leq 6$ and $\delta_j\geq 0$ for all $j$.

The conditions for $\Lambda$ to be completely positive are
\begin{align}
	\av{t} + \av{w_3} &\leq 1 \notag\\
	(w_j\pm w_k)^2 &\leq (1\pm w_l)^2
\end{align}
for any permutation $\br{j,k,l}$ of $\br{1,2,3}$. Therefore
\begin{align}
	\av{t} \leq 1 - \av{w_3} = \delta_3 r\,,
\end{align}
and, since $\norm{\alpha}_2$ is invariant under the unitary transformations in Eq.~\eqref{eq:ruskai_form}, $\norm{\alpha}_2 = \av{t}$. Therefore the only remaining problem is to bound $\delta_3$ (note that at this point, we could accept the trivial bound $\delta_3 \leq 6$).

If $w_3<0$, then complete positivity implies
\begin{align}
	(2-\delta_1 r - \delta_2 r)^2 \leq \delta_3^2 r^2	\,,
\end{align}
which cannot be satisfied subject to $\sum_j \delta_j \leq 6$ and $\delta_j\geq 0$ for $r<1/3$. Therefore for all $r<1/3$, $w_3 = 1-\delta_3 r>0$. 

Considering the conditions
\begin{align}
	(\delta_j - \delta_k)^2 &\leq \delta_l^2
\end{align}
for all permutations $\br{j,k,l}$ of $\br{1,2,3}$, we see that $\delta_3\leq \max_j \delta_j \leq 3$ and so $\norm{\alpha}_2 \leq 3r$.
\end{proof}

Combining the irrep structure of $g\otimes g$ and the bound on the nonunital component allows us to improve the bound in Theorem~\ref{thm:qudit_bound} for the special case of one qubit. As discussed in Sec.~\ref{sec:analysis}, the following bound provides a rigorous justification of current experiments and allows values of $K_m$ to be chosen that are substantially smaller then previously justified rigorously, that is, $K_m\approx 145$ as opposed to $K_m \approx 7\times 10^4$ as estimated in Ref.~\cite{Magesan2012a}. 

\begin{thm}\label{thm:var_qubit}
The variance for arbitrary time- and gate-independent noise satisfies
\begin{align}
	\sigma_m^2 \leq  m^2 r^2 + \frac{7}{4}mr^2 + 6\delta mr + O(m^2 r^3) + O(\delta m^2 r^2) \,,
\end{align}
where $\delta = \lvert \vec{\delta}_E \cdot\vec{\delta}_{\rho}\rvert \leq 1/2$ for any choice of
$\vec{w} \in \br{\vec{x},\vec{y},\vec{z}}$ and $\vec{\delta}_{\rho},\vec{\delta}_E\perp \vec{w}$ such that 
\begin{align}\label{eq:def:delta}
	\vec{E}^T = a\vec{w} + \vec{\delta}_E	\notag\\
	\vec{\rho} = b\vec{w} + \vec{\delta}_{\rho}	\,.
\end{align}
\end{thm}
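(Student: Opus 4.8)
The plan is to revisit the expansion of $\sigma_m^2$ from Eq.~\eqref{eq:variance}, but now exploiting the finer irrep decomposition of the qubit Clifford representation $g\otimes g$ given in Proposition~\ref{prop:tensor_irreps} instead of the crude diamond-norm/triangle-inequality bound used in Theorem~\ref{thm:qudit_bound}. Writing $\Lambda = \unit - r\Delta$ and separating $\Delta$ into its unital block (a $3\times 3$ matrix acting on $\vec\sigma$) and its nonunital column $\alpha$, I would expand $\sigma_m^2 = (E\nt{2}|([({\Lambda\nt{2}})^{\G}]^m - [(\Lambda^{\G})\nt{2}]^m)|\rho\nt{2})$ to the order needed to capture all $O(m^2r^2)$ and $O(mr^2)$ contributions. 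As in Theorem~\ref{thm:qudit_bound}, all genuinely first-order terms and the cross-time second-order terms cancel under the twirl; what survives at second order is the single-time $\Delta\nt{2}$ piece (giving $m r^2 (E\nt{2}|[(\Delta\nt{2})^{\G} - (\Delta^{\G})\nt{2}]|\rho\nt{2})$) plus the $O(m^2 r^2)$ terms coming from pairs of $(\Delta\otimes\Delta)^{\G}$ insertions at distinct times. The key new input is that twirling $\Delta\nt{2}$ over $\mc{C}_2$ and projecting onto the nine-dimensional $\vec\sigma\otimes\vec\sigma$ block now produces, by Schur's lemma applied to the \emph{four inequivalent irreps} $\A_1,\A_2,\A_S,\A_T$ of Eq.~\eqref{eq:qubit_Schur}, a diagonal (block-scalar) operator whose four eigenvalues I can read off explicitly in terms of the $w_j$ (equivalently the $\delta_j$) and the nonunital parameter $t$. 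So the bulk of the work is: (i) compute the matrix elements of $\Delta\nt{2}$ on each irrep, (ii) identify the relevant eigenvalue controlling the surviving $O(mr^2)$ term, and (iii) bound it using $\sum_j \delta_j \le 6$, $\delta_j \ge 0$, and the complete-positivity constraints already extracted in Proposition~\ref{prop:nonunital_qubit}.

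Concretely, I expect the $m^2r^2$ term to come out as exactly $(Bmdr/(d-1))^2$ with $d=2$, i.e.\ $m^2 r^2$ up to the stated higher-order corrections — this is just the square of the linear-in-$m$ decay of $\bar F_m$ and should fall out of the $(\Delta^{\G})\nt{2}$-type bookkeeping with $f = 1 - dr/(d-1)$. The genuinely new content is the $\tfrac74 mr^2$ coefficient, which should emerge from maximizing $(E\nt{2}|[(\Delta\nt{2})^{\G} - (\Delta^{\G})\nt{2}]|\rho\nt{2})$ over all valid $E,\rho$ and all CPTP $\Lambda$; here the decomposition $\vec E^T = a\vec w + \vec\delta_E$, $\vec\rho = b\vec w + \vec\delta_\rho$ with $\vec w$ aligned to the singular-value axis of $\varphi$ is exactly the right coordinate choice, since on that axis the twirl acts diagonally and the cross terms are governed by $\delta = |\vec\delta_E\cdot\vec\delta_\rho| \le 1/2$. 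I would push through the eigenvalue computation on $\A_2$ (the two-dimensional irrep, which carries the dominant $\delta_j^2$-type contributions), show that the relevant combination of the four Schur eigenvalues is at most $7/4$ after imposing $\sum\delta_j\le 6$, $(\delta_j-\delta_k)^2\le\delta_l^2$, and collect the $\alpha$-dependent pieces into the $6\delta mr$ term via Proposition~\ref{prop:nonunital_qubit} ($\|\alpha\|_2\le 3r$, $\alpha$ couples to $\vec w$, hence the $mr$ rather than $mr^2$ scaling when $\delta\neq 0$). Finally I would argue, as in Theorem~\ref{thm:qudit_bound}, that all terms with $k\ge 3$ factors of $r\Delta$ contribute only $O(m^2 r^3)$ or $O(\delta m^2 r^2)$, so they can be absorbed into the error terms.

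The main obstacle will be step (ii)–(iii): correctly computing the four Schur eigenvalues of the twirled $\Delta\nt{2}$ on the $\vec\sigma\otimes\vec\sigma$ block as functions of $(w_1,w_2,w_3,t)$ — in particular keeping track of how the off-diagonal (in the $w_j$) structure of $\Delta$ projects onto the symmetric/antisymmetric irreps $\A_2,\A_S,\A_T$ — and then carrying out the constrained optimization that yields the clean coefficient $7/4$. The CP constraints $(w_j\pm w_k)^2 \le (1\pm w_l)^2$ and $|t| + |w_3|\le 1$ are what make the optimum finite and small, and threading them through a multivariate maximization (over $\delta_1,\delta_2,\delta_3$, the overall orthogonal rotations $U,V$, and the SPAM directions) without losing the sharp constant is the delicate part; everything else is the same twirl-expansion bookkeeping already demonstrated in the proof of Theorem~\ref{thm:qudit_bound}, just carried out with the refined irrep data rather than the diamond norm.
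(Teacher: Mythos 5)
Your toolkit is the right one --- the four-irrep decomposition of $(\phi\nt{2},\C^9)$ from Proposition~\ref{prop:tensor_irreps}, Schur's lemma to diagonalize the twirl as $\sum_R\lambda_R P_R$, the nonunitality bound of Proposition~\ref{prop:nonunital_qubit}, and the decomposition of the SPAM vectors along a distinguished axis --- and this matches the paper's strategy. But your power counting in the $r\Delta$ expansion contains an error that would derail the argument. The off-diagonal entries of $\varphi$ are generically of order $\sqrt{r}$ (only the diagonal deficit is of order $r$), so the entries of $\Delta=(\unit-\Lambda)/r$ are of order $1/\sqrt{r}$ and ``$k$ factors of $r\Delta$'' does \emph{not} contribute at $O(r^k)$. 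Two consequences: (i) your claim that all terms with $k\ge 3$ factors of $r\Delta$ are $O(m^2r^3)$ and can be absorbed is false --- those terms are precisely where the leading $m^2r^2$ contribution lives (just as in Theorem~\ref{thm:qudit_bound}, where the $k=3,4$ terms produce the $O(m^2r^2d^4)$ piece); and (ii) the $m^2r^2$ term is \emph{not} the square of the linear-in-$m$ decay of $\bar F_m$. In the paper's (non-perturbative) computation it arises as $\binom{m}{2}\cdot 6y^2$ with $y=\tfrac16\sum_{j\neq k}\varphi_{j,k}^2\le 2r$, coming from the second-order binomial term in $\lambda_R^m-(1-2r)^{2m}$: the individual eigenvalues $\lambda_1=x+2y$ and $\lambda_2=x-y$ each deviate from $(1-2r)^2$ at first order in $r$ through $y$, these $O(r)$ deviations cancel in the weighted combination $\tfrac13\lambda_1^m+\tfrac23\lambda_2^m$ at first binomial order (leaving only $O(mr^2)$), but not at second order. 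A clean sanity check that your proposed mechanism is the wrong one: for noise diagonal in the Pauli basis $y=0$ and the $m^2r^2$ term disappears entirely (the Corollary), whereas ``the square of the mean's decay'' would not.

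Two smaller misattributions. First, the theorem requires $\vec w\in\br{\vec x,\vec y,\vec z}$ because the irreps in Eq.~\eqref{eq:qubit_Schur} are fixed in the Pauli basis of the Clifford twirl; aligning $\vec w$ with the singular-value frame of $\varphi$, as you propose, breaks the block-diagonal structure you need. Second, the $6\delta mr$ term does not come from the nonunital vector $\alpha$ (that contribution is bounded by $\tfrac34 mr^2$ via $\norm{\alpha}_2\le 3r$ and the geometric sum, independently of $\delta$); it comes from SPAM weight leaking onto the irreps where the first-order cancellation above fails, so that each $\av{\lambda_R^m-(1-2r)^{2m}}$ can only be bounded individually by $4mr+O(m^2r^2)$, multiplied by the leaked weight $\sum_R h_R\le \tfrac32\delta$. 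I would recommend abandoning the perturbative expansion in $r\Delta$ altogether and instead writing the variance exactly as $\rho_0^2\vec E\nt{2}P_1\alpha\nt{2}\sum_t\lambda_1^t+\sum_R\lambda_R^m\vec E\nt{2}P_R\vec\rho\nt{2}-(1-2r)^{2m}(\vec E\cdot\vec\rho)^2$, computing the $\lambda_R$ in closed form from $\varphi$, and only then expanding in $m r$; this is what makes the cancellations transparent and the constant $7/4$ attainable.
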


\begin{proof}
To prove the theorem, we will derive an exact expression for the variance and then approximate it in the relevant regimes.

We begin by noting that in the basis $\br{\unit\nt{2},\unit\otimes\A,\A\otimes\unit,\A\otimes\A}$ we have
\begin{align}
	(\Lambda\nt{2})^{\mc{C}_2} = \pa{\begin{array}{cccc} 1 & 0 & 0 & 0 \\ 0 & \varphi^{\mc{C}_2} & 0 & 0\\ 
			0 & 0 & \varphi^{\mc{C}_2} & 0\\	P_1\alpha\nt{2} & \av{\mc{C}_2}^{-1}\sum_{g\in\mc{C}_2} g\alpha \otimes \varphi^{(g)}  & \av{\mc{C}_2}^{-1}\sum_{g\in \mc{C}_2} \varphi^{(g)} \otimes g\alpha & (\varphi\nt{2})^{\mc{C}_2}\end{array}}
\end{align}
where $P_1 = \av{\mc{C}_2}^{-1}\sum_{g\in \mc{C}_2} g\nt{2}$. It can be verified that $\vec{E}\nt{2}$ is in the null space of $\sum_{g\in\mc{C}_2} \varphi^{(g)} \otimes g\alpha$ and $\sum_{g\in\mc{C}_2} g\alpha \otimes \varphi^{(g)}$ for any $\vec{E}$ by, for example, considering a basis for the space of $\varphi$'s. We note in passing that this property is \textit{not} a general property of 2-designs, in that it does not hold for the single-qutrit Clifford group.

By Propositions~\ref{prop:tensor_irreps} and \ref{prop:twirling} $(\varphi\nt{2})^{\mc{C}_2} = \sum_R \lambda_R P_R$ where the $P_R$ are the projectors onto the irreps from Proposition~\ref{prop:twirling} and $\lambda_R = \tr P_R \varphi\nt{2}/\tr P_R$. From Eq.~\eqref{eq:variance}, together with the orthogonality of the projectors $P_R$, we have
\begin{align}
	\sigma_m^2 = \rho_{\unit}^2\vec{E}\nt{2}P_1\vec{\alpha}\nt{2}\sum_{t=0}^{m-1} \lambda_1^t + \sum_R \lambda_R^m \vec{E}\nt{2}P_R\vec{\rho}\nt{2} - (1-2r)^{2m}(\vec{E}\vec{\rho})^2 \,. 
\end{align}
We can bound the first term using
\begin{align}
	\rho_0^2\vec{E}\nt{2}P_1\vec{\alpha}\sum_{t=0}^{m-1} \lambda_1^t &= \frac{1}{3}\rho_0^2\norm{\vec{E}}_2^2\norm{\alpha}_2^2\sum_{t=0}^{m-1} 1 \leq \frac{3mr^2}{4}	
\end{align}
where we have used and the trivial bound $\lambda_1\leq 1$ (for the first term only) and Proposition~\ref{prop:nonunital_qubit} to obtain the final inequality.

Similarly, the eigenvalues can be calculated to be
\begin{align}
	\lambda_1 &= \frac{1}{3}\sum_{j,k} \varphi_{j,k}^2 = 
	\frac{1}{3}\tr\pa{\varphi\ct\varphi} \notag\\
	\lambda_2 &= \frac{1}{3}\sum_{j} \varphi_{j,j}^2 - 
	\frac{1}{6}\sum_{j\neq k} \varphi_{j,k}^2 = \frac{1}{2}\sum_j\varphi_{j,j}^2 - \frac{1}{6}\tr\varphi\ct\varphi 	\notag\\
	\lambda_T &= \frac{1}{6}\sum_{j\neq k} \pa{\varphi_{j,k} \varphi_{k,j} + \varphi_{j,j} \varphi_{k,k}} = \frac{1}{6}\tr\pa{\varphi^2} + \frac{1}{6}\pa{\tr\varphi}^2 - \frac{1}{3}\sum_j\varphi_{j,j}^2	\,,	
\end{align}
where we have omitted $\lambda_S$ since it will not contribute to the variance since any symmetric vector (such as $\vec{E}\nt{2}$) will be orthogonal to $P_S$. 

We now consider general noise with $\vec{E}$ and $\vec{\rho}$ as in Eq.~\eqref{eq:def:delta}, where, without loss of generality, we set $\vec{w} = \vec{z}$. We begin by considering the case $\delta = 0$, for which $\vec{E}\nt{2}P_T = \vec{E}\nt{2}P_S = 0$. Then a simple calculation using Proposition~\ref{prop:nonunital_qubit} gives $\vec{E}\nt{2}P_1 \vec{\rho}\nt{2} = a^2 b^2/3$, $(\vec{E}\vec{\rho})^2 = a^2 b^2$ and $\vec{E}\nt{2}P_2 \vec{\rho}\nt{2} = \tfrac{2}{3}a^2 b^2$.

The eigenvalues $\lambda_1$ and $\lambda_2$ can be written as $x+2y$ and $x-y$ respectively, where $x=\frac{1}{3}\sum_{j} \varphi_{j,j}^2$ and $y=\frac{1}{6}\sum_{j\neq k} \varphi_{j,k}^2$. Writing $\varphi = \unit - \Delta r$, where $\tr\Delta = 6$ (cf.\ the discussion in the proof of Theorem~\ref{thm:qudit_bound}), we have
\begin{align}\label{eq:xbound}
	1-4r+4r^2 \leq x := \frac{1}{3}\sum_{j}\varphi_{jj}^2= 1-4r + \frac{r^2}{3}\sum_j \Delta_{jj}^2 \leq 1-4r + 12r^2	\,,
\end{align}
where the maximum and the minimum are obtained by maximizing and minimizing $\sum_j \Delta_{jj}^2$ subject to $\sum_j \Delta_{jj} = 6$ for real matrices $\Delta$ with nonnegative diagonal entries respectively. The diagonal entries of $\Delta$ must be nonnegative since all entries of $\varphi$ have modulus upper-bounded by 1 [which can be easily verified from the form of extremal channels in Eq.~\eqref{eq:ruskai_form}]. Therefore the variance satisfies
\begin{align}
	\sigma_m^2 \leq \frac{3mr^2}{4} + \frac{a^2 b^2}{3}\sq{(x+2y)^m + 2(x-y)^m - 3(1-2r)^{2m}}	\,.
\end{align}
Since $1\geq \lambda_1 - 2y = x \geq 1 -4r$ by Eq.~\eqref{eq:xbound}, we have $y\leq 2r$ and so, using a binomial expansion to $O(r^3)$ gives
\begin{align}
	(x+2y)^m + 2(x-y)^m - 3(1-2r)^{2m} &\leq 12mr^2 + 12m^2r^2 + O(m^3 r^3)\,.
\end{align}
Noting that $a^2 b^2 \leq 1/4$ gives
\begin{align}
	\sigma_m^2 \leq m^2 r^2 + \frac{7}{4}mr^2 + O(m^2 r^3)\,.
\end{align}

We now consider the correction when $\delta>0$ in Eq.~\eqref{eq:def:delta}, which will realistically always be the case since $\rho$ incorporates a residual noise term. Then we define functions $h_R(\vec{\delta}_1,\vec{\delta}_2)$ by
\begin{align}
	\vec{E}\nt{2}P_R\vec{\rho}\nt{2} = a^2 b^2 \vec{z}\nt{2}P_R \vec{z}\nt{2} + h_R(\vec{\delta}_1,\vec{\delta}_2)	\,,
\end{align}
where we will henceforth omit the arguments of $h_R$. Since $\sum_R \vec{E}\nt{2}P_R\vec{\rho}\nt{2} = (\vec{E}\vec{\rho})^2$, we can write the variance as
\begin{align}
	\sigma_m^2 \leq a^2 b^2 \sigma_{m,z}^2 + \sum_R h_R \sq{\lambda_R^m - (1-2r)^{2m}}	\,.
\end{align}
To $O(r^2)$, the smallest eigenvalue is $\lambda_2$, since
\begin{align}
	\frac{1}{6}\sum_{j\neq k} \varphi_{j,j} \varphi_{k,k} &= \frac{1}{3}\sum_j \varphi_{j,j}^2 + O(r^2)	\notag\\
	\frac{1}{6}\av{\sum_{j\neq k} \varphi_{j,k} \varphi_{k,j}} &\leq \frac{1}{6}\sum_{j\neq k} \varphi_{j,k}^2
\end{align}
where the first line follows by writing $\varphi_{j,j} = 1 - r\Delta_{j,j}$ and the second from the inequality $\varphi_{j,k}^2 + \varphi_{k,j}^2 \geq 2\av{\varphi_{j,k}\varphi_{k,j}}$ and the triangle inequality. Therefore, to $O(r^2)$, $1-8r\leq x-y = \lambda_2 \leq \lambda_R \leq 1$ for all $R$ [where the bounds on $x$ and $y$ are as in Eq.~\eqref{eq:xbound}] and so $\av{\lambda_R^m - (1-2r)^{2m}}\leq 4mr + O(m^2 r^2)$ for all $R$. Therefore
\begin{align}
	\sigma_m^2 &\leq a^2 b^2 \sigma_{m,z}^2 + \sq{4mr + O(m^2 r^2)} \sum_R h_R \notag\\
	&\leq m^2 r^2 + \frac{7}{4}mr^2 + \sq{4mr + O(m^2 r^2)} \sum_R h_R + O(m^2 r^3)	 \notag\\
	&\leq m^2 r^2 + \frac{7}{4}mr^2 + 6\delta mr + O(m^2 r^3) + O(\delta m^2 r^2)
\end{align}
where we have obtained the final inequality using
\begin{align}
	a^2 b^2 + \sum_R h_R = (\vec{E}\vec{\rho})^2 = a^2 b^2 + 2ab(\vec{\delta}_1\cdot\vec{\delta}_2) + (\vec{\delta}_1\cdot\vec{\delta}_2)^2 \leq a^2 b^2 + \frac{3\delta}{2}	\,.
\end{align}
where the final inequality follows since $\delta = \lvert \vec{\delta}_1\cdot\vec{\delta}_2 \rvert, \av{ab}\leq 1/2$.
\end{proof}

It is worth noting that one could in principle fill in the implicit constants given in the big-$O$ notation by following the previous argument with sufficient care. To have a truly rigorous confidence region, one would need to take this into account, but for current parameter regimes of interest, the terms really are negligible, so it hardly seems worth optimizing this concern. 

We also note that $\delta_{\rho}$ will typically have entries of order $\sqrt{r}$ even \textit{without} SPAM, since the off-diagonal terms for generic noise are of order $\sqrt{r}$ and there is a residual noise term that has been incorporated into $\rho$. However, the corresponding entries in $\delta_E$ will generally be smaller (or at least, are determined only by SPAM).

We now show that the variance can be even further improved (by a factor of $m$ and with no dependence on the state and measurement) for noise that is diagonal in the Pauli basis.

\begin{cor}
If the unital block of the noise is diagonal in the Pauli basis, this bound can be improved to
\begin{align}
	\sigma_m^2 \leq \frac{11mr^2}{4} + O(m^2 r^3)\,.
\end{align}
\end{cor}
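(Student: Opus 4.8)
The plan is to specialize the exact variance formula derived in the proof of Theorem~\ref{thm:var_qubit} to the case where the unital block $\varphi$ is diagonal in the Pauli basis, and to observe that several of the terms that caused trouble in the general case now vanish outright. If $\varphi = \mathrm{diag}(w_1,w_2,w_3)$, then the off-diagonal sum $y = \tfrac16\sum_{j\ne k}\varphi_{j,k}^2$ is identically zero, so $\lambda_1 = \lambda_2 = x = \tfrac13\sum_j w_j^2$, and the symmetric/antisymmetric eigenvalues $\lambda_S,\lambda_T$ collapse as well (one checks $\lambda_T = \tfrac16\big((\tr\varphi)^2 - \sum_j w_j^2\big) = x + O(r^2)$ using $\sum_j w_j = 3-6r$). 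First I would record these simplified eigenvalues and note that $x$ still satisfies the bound from Eq.~\eqref{eq:xbound}, namely $1-4r+4r^2 \leq x \leq 1-4r+12r^2$, since that derivation only used $\tr\Delta = 6$ and nonnegativity of the diagonal entries of $\Delta$, both of which still hold.

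Next I would handle the nonunital part. Diagonality of the unital block means that in Eq.~\eqref{eq:ruskai_form} the matrices $U,V$ are trivial (or at worst signed permutations), so $\|\alpha\|_2 = |t|$ and complete positivity gives $|t| \leq 1 - |w_3| = \delta_3 r$ with $\delta_3 \leq 3$ as before; hence $\|\alpha\|_2^2 \leq 9r^2$ still, and more importantly the first (nonunital) contribution to the variance is still bounded by $\tfrac34 m r^2$ exactly as in the proof of Theorem~\ref{thm:var_qubit}. The key new simplification is that, with all the relevant $\lambda_R$ equal to $x$ up to $O(r^2)$, the state/measurement-dependent piece becomes
\begin{align}
	\sum_R \vec{E}\nt{2}P_R\vec{\rho}\nt{2}\,\lambda_R^m - (\vec{E}\vec{\rho})^2(1-2r)^{2m}
	= (\vec{E}\vec{\rho})^2\big[x^m - (1-2r)^{2m}\big] + O(m^2 r^3)\,,
\end{align}
because $\sum_R \vec{E}\nt{2}P_R\vec{\rho}\nt{2} = (\vec{E}\vec{\rho})^2$ and the spread among the $\lambda_R$ is $O(r^2)$, which only shifts individual terms by $O(m^2 r^3)$ after the binomial expansion. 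Using $(1-2r)^{2m} = 1 - 4mr + O(m^2 r^2)$ and $x^m = (1-4r+O(r^2))^m = 1 - 4mr + O(m^2 r^2)$ with the $O(r^2)$ coefficient of $x$ bounded by $12 - 4 = $ an amount controlled by Eq.~\eqref{eq:xbound}, one gets $x^m - (1-2r)^{2m} = O(m^2 r^3) + O(m r^2)$ — in fact the $m^2 r^2$ pieces cancel since both leading behaviors are $1-4mr$, leaving at most a contribution bounded by $2m r^2$ after pinning down the second-order coefficients. Since $(\vec{E}\vec{\rho})^2 \leq 1$, this piece contributes at most $2mr^2 + O(m^2 r^3)$, and adding the $\tfrac34 mr^2$ nonunital term yields $\sigma_m^2 \leq \tfrac{11}{4}mr^2 + O(m^2 r^3)$, with no dependence on $\delta$ because the troublesome $6\delta mr$ term in Theorem~\ref{thm:var_qubit} came precisely from the spread $\lambda_2 \ne \lambda_R$, which is now $O(r^2)$ rather than $O(r)$.

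The main obstacle I anticipate is getting the constant in the second-order bound on $x^m - (1-2r)^{2m}$ exactly right: one must expand both terms to order $r^2$, verify that the $m^2$-scale contributions genuinely cancel (they should, since both equal $1 - 4mr + \binom{m}{2}16r^2 + \dots$ at leading orders and the difference is driven by the $O(r^2)$ gap between the coefficient of $r^2$ in $x$ — which lies in $[4,12]$ — and the coefficient $4$ in $(1-2r)^2$), and then bound the residual $O(mr^2)$ term by something no larger than $2mr^2$ so that the total is $\tfrac34 + 2 = \tfrac{11}{4}$. This is a careful but routine binomial bookkeeping exercise using exactly the inequalities already established in Eq.~\eqref{eq:xbound}; no new structural input is needed beyond the collapse of the irrep eigenvalues under diagonality.
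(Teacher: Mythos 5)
Your overall route is the same as the paper's: specialize the exact variance expression from the proof of Theorem~\ref{thm:var_qubit}, observe that diagonality forces $y=0$ and hence $\lambda_1=\lambda_2=x$, keep the $\tfrac{3}{4}mr^2$ nonunital term, and extract the remaining $2mr^2$ from the gap between the $r^2$ coefficients of $x$ and of $(1-2r)^2$ via Eq.~\eqref{eq:xbound}. However, two of your bookkeeping steps do not deliver the constant $\tfrac{11}{4}$ as written. First, the claim that the $O(r^2)$ spread among the $\lambda_R$ ``only shifts individual terms by $O(m^2r^3)$'' is an order-counting error: for an eigenvalue near $1$, a shift of size $O(r^2)$ changes its $m$-th power by $O(mr^2)$, which is exactly the order whose constant you are trying to pin down, so an unsigned $\lambda_T = x + O(r^2)$ could in principle inflate the bound. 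You need an actual one-sided bound $\lambda_T\le x$; the paper gets this exactly from $2ab\le a^2+b^2$ applied to $\sum_{j\ne k}w_jw_k$, and your own formula $\lambda_T=\tfrac16\bigl[(\tr\varphi)^2-\sum_j w_j^2\bigr]$ together with $\sum_j w_j^2\ge 3(1-2r)^2$ gives the even stronger $\lambda_T\le(1-2r)^2\le x$, so the fix is available but must be made explicit.

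Second, you bound the unital piece by $(\vec E\cdot\vec\rho)^2\bigl[x^m-(1-2r)^{2m}\bigr]$ with $(\vec E\cdot\vec\rho)^2\le 1$ and then assert the result is at most $2mr^2$. By Eq.~\eqref{eq:xbound} the coefficient of $r^2$ in $x$ can be as large as $12$ versus $4$ in $(1-2r)^2$, so $x^m-(1-2r)^{2m}\le 8mr^2+O(m^2r^3)$ (the $\binom{m}{2}16r^2$ terms cancel, as you correctly note); with $(\vec E\cdot\vec\rho)^2\le 1$ this yields $8mr^2$ and a total of $\tfrac{35}{4}mr^2$, not $\tfrac{11}{4}mr^2$. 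The missing ingredient is $(\vec E\cdot\vec\rho)^2\le\norm{\vec E}_2^2\,\norm{\vec\rho}_2^2\le\tfrac14$ (the paper's $a^2b^2\le 1/4$), which converts $8mr^2$ into $2mr^2$ and recovers $\tfrac34+2=\tfrac{11}{4}$. Both repairs are short and use only facts already established in the paper, so the proposal is structurally sound and follows the paper's approach, but is numerically incomplete as stated.
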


\begin{proof}
For noise such that $\varphi$ is diagonal in the Pauli basis, $\lambda_1 = \lambda_2 = x$ and $\lambda_T \leq \lambda_1$, which can be shown using the inequality $2ab \leq a^2 + b^2$ for $a,b\in\mbb{R}$. Therefore, for noise that is diagonal in the Pauli basis, we have
\begin{align}
	\sigma_m^2 \leq \frac{3mr^2}{4} + \frac{1}{4}\sq{\pa{1-4r + 12r^2}^m - (1-2r)^{2m}} \leq \frac{11mr^2}{4} + O(m^2 r^3)	
\end{align}
by Eq.~\eqref{eq:xbound}.
\end{proof}

One consequence of the above corollary is that the variance of the randomized benchmarking distribution will typically depend strongly upon the choice of 2-design even for gate independent noise. This observation follows from the above theorem by noting that the unital block can be perturbed by an arbitrarily small amount to allow it to be unitarily diagonalized. Performing randomized benchmarking in the basis where the unital block is diagonalized (i.e., setting $\G = \mc{C}_2^U$) will give variances of order $mr^2$, while randomized benchmarking in other bases will give variances of order $m^2 r^2$. 

\section{Asymptotic variance of randomized benchmarking}\label{sec:asymptotic}

We now consider the variance $\sigma_m^2$ of the distribution $\br{F_{m,s}}$ as $m\to \infty$. While not directly relevant to current experiments, the asymptotic behavior is nevertheless interesting in that it may provide a method of estimating the amount of nonunitality.

We will prove that, for the class of channels defined below called $n$-contractive channels (which are generic in the space of CPTP channels), $\sigma_m^2$ decays exponentially in $m$ to a constant that quantifies the amount of nonunitality. Unfortunately, we will not be able to provide a bound on the decay rate. In fact, no such bound is possible without further assumptions since the channel $[(1-\epsilon)U + \epsilon \E]^\G$ for any unitary $U$ and $2$-contractive channel $\E$ will have an eigenvalue $1-\epsilon + O(\epsilon)<1$ corresponding to the trivial subrep (this can be seen by following the proof of Proposition~\ref{lem:eigenvalues}). This eigenvalue will result in a variance that decays as $(1-\epsilon)^m$ for arbitrary $\epsilon>0$.

\begin{defn}
	A channel $\Lambda:\D_d \to \D_d$ is \emph{$n$-contractive} with respect to a group $\G\subseteq \U(d)$ if $(\Lambda\nt{n})^\G$ has at most one eigenvalue of modulus 1. 
\end{defn}

We now prove that all unital but nonunitary channels are $2$-contractive with respect to any finite 2-design. We conjecture that \textit{all} nonunitary channels are in fact $2$-contractive with respect to any unitary 2-design. An equivalent statement for trace-preserving channels $\Lambda$ is that $(\Lambda\nt{2})^{\G}$ is strongly irreducible whenever $\Lambda$ is not unitary~\cite{Sanz2009}. As a corollary of the following proposition, this conjecture holds for qubits, since, for qubits, the projection onto the unital part of a CPTP map is also a CPTP map~\cite{Kimmel2013}. However, proving it for higher dimensions remains an open problem.

\begin{prop}\label{lem:eigenvalues}
	Let $\Lambda$ be a completely positive, trace-preserving and unital channel and $\G$ a unitary 2-design. Then $\Lambda$ is $2$-contractive with respect to $\G$ if and only if it is nonunitary.
\end{prop}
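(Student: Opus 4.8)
The plan is to combine the block structure of a unital channel in the Liouville picture with a norm‑equality argument that crucially exploits $\norm{\varphi(\Lambda)}_{\infty}\le 1$.

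\emph{Reduction via the block structure.} Since $\Lambda$ is unital its Liouville matrix is $1\oplus\varphi(\Lambda)$, so in the basis $\br{A_0\nt{2},\,A_0\ox A_j,\,A_j\ox A_0,\,A_j\ox A_k}_{j,k\ge 1}$ the matrix $\Lambda\nt{2}$ is block diagonal with blocks $1$, $\varphi(\Lambda)$, $\varphi(\Lambda)$, and $\varphi(\Lambda)\nt{2}$. The conjugation used in the twirl (Liouville matrix $1\oplus\varphi(g)$) respects this decomposition, so $(\Lambda\nt{2})^{\G}$ is block diagonal with blocks $1$ (by trace preservation), $\varphi(\Lambda)^{\G}=f(\Lambda)\unit$ twice (Proposition~\ref{prop:twirling}), and $\bigl(\varphi(\Lambda)\nt{2}\bigr)^{\G}$. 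For unital $\Lambda$, $\norm{\varphi(\Lambda)}_{\infty}\le 1$ (Proposition~\ref{prop:channel_norm}), so $\av{f(\Lambda)}=1$ forces all singular values and all eigenvalues of $\varphi(\Lambda)$ to have modulus $1$, hence $\varphi(\Lambda)$ orthogonal with a single repeated eigenvalue $\pm1$; ruling out $\varphi(\Lambda)=-\unit$ by complete positivity (its Choi matrix fails to be positive for $d\ge2$), this forces $\Lambda=\unit$. So for nonunitary $\Lambda$ the only possible modulus‑$1$ eigenvalues of $(\Lambda\nt{2})^{\G}$ are the one on $A_0\nt{2}$ and those of $\bigl(\varphi(\Lambda)\nt{2}\bigr)^{\G}$, and the proposition reduces to showing the latter has none. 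For the converse, $s:=\sum_{j\ge1}A_j\ox A_j\;(=\mathrm{SWAP}-\unit_{d^2}/d)$ spans a trivial subrepresentation of $\varphi\nt{2}$, of multiplicity $1$ by Proposition~\ref{lem:trivial_subrep}, so $\bigl(\varphi(\Lambda)\nt{2}\bigr)^{\G}$ acts on $\C s$ as the scalar $(d^2-1)^{-1}\sum_i\sigma_i(\varphi(\Lambda))^{2}$, which equals $1$ exactly when $\Lambda$ is unitary; a unitary channel then has eigenvalue $1$ on both $A_0\nt{2}$ and $\C s$, so it is not $2$‑contractive.

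\emph{The nonunitary case.} It remains to show $\varrho\bigl((\varphi(\Lambda)\nt{2})^{\G}\bigr)<1$ when $\Lambda$ is unital and nonunitary. Suppose $(\varphi(\Lambda)\nt{2})^{\G}Y=\eta Y$ with $\av{\eta}=1$ and $\norm{Y}_2=1$, and put $v_g=(g\ox g)\,\varphi(\Lambda)\nt{2}\,(g\ox g)^{-1}Y$, with $g\equiv\varphi(g)$ orthogonal. Using $\norm{\varphi(\Lambda)\nt{2}}_{\infty}=\norm{\varphi(\Lambda)}_{\infty}^{2}\le 1$,
\begin{align}
1=\norm{Y}_2 &=\norm{\av{\G}^{-1}\textstyle\sum_g v_g}_2\le \av{\G}^{-1}\textstyle\sum_g\norm{v_g}_2 \nonumber\\
&=\av{\G}^{-1}\textstyle\sum_g\norm{\varphi(\Lambda)\nt{2}(g\ox g)^{-1}Y}_2\le 1\,.
\end{align}
Equality forces the $v_g$ to be equal nonnegative multiples of a common unit vector, hence $v_g=\eta Y$ for all $g$, and $\norm{\varphi(\Lambda)\nt{2}(g\ox g)^{-1}Y}_2=\norm{(g\ox g)^{-1}Y}_2$, i.e.\ $(g\ox g)^{-1}Y$ lies in the eigenvalue‑$1$ eigenspace of $\bigl(\varphi(\Lambda)\ct\varphi(\Lambda)\bigr)\nt{2}$. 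As $\varphi(\Lambda)\ct\varphi(\Lambda)$ is a positive contraction, that eigenspace is $\mathcal{P}\ox\mathcal{P}$ with $\mathcal{P}=\bbr{v:\norm{\varphi(\Lambda)v}_2=\norm{v}_2}$, so $Y\in(g\,\mathcal{P})\ox(g\,\mathcal{P})$ for every $g\in\G$. Because $\Lambda$ is nonunitary, $\varphi(\Lambda)$ is not an isometry and $\mathcal{P}$ is a proper subspace; since $(\varphi,\C^{d^2-1})$ is irreducible, $\bigcap_g g\,\mathcal{P}$ is a $\G$‑invariant proper subspace and hence $\br{0}$, and the left support of $Y$ lies in $\bigcap_g g\,\mathcal{P}=\br{0}$, so $Y=0$ --- a contradiction. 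Thus $\bigl(\varphi(\Lambda)\nt{2}\bigr)^{\G}$ has spectral radius strictly below $1$, and with the block analysis $(\Lambda\nt{2})^{\G}$ has exactly one eigenvalue of modulus $1$, i.e.\ $\Lambda$ is $2$‑contractive.

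\emph{Main obstacle.} The delicate step is pinning down the equality case of the averaged triangle inequality together with the identification of the isometric eigenspace of $\varphi(\Lambda)\nt{2}$ as $\mathcal{P}\ox\mathcal{P}$; the ingredient $\bigcap_g g\,\mathcal{P}=\br{0}$ is then a routine consequence of irreducibility, and the first paragraph is bookkeeping. It should be stressed that unitality enters \emph{only} through $\norm{\varphi(\Lambda)}_{\infty}\le 1$ --- this is precisely what makes all the inequalities above collapse --- which is why the statement is restricted to unital $\Lambda$, and why the qubit corollary additionally needs the fact that the unital part of a qubit channel is again a channel.
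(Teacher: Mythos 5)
Your proof is correct, and its central step takes a genuinely different route from the paper's. Both arguments reduce the problem to showing $\varrho\bigl((\varphi\nt{2})^\G\bigr)<1$ for unital nonunitary $\Lambda$, and both lean on the same two ingredients: $\norm{\varphi}_\infty\le 1$ for unital channels and irreducibility of $(\varphi,\C^{d^2-1})$ under a 2-design. But where you run a rigidity (equality-case) analysis --- a peripheral eigenvector $Y$ would force every $(g\ox g)\ct Y$ into the isometric subspace $\mathcal{P}\ox\mathcal{P}$ of $\varphi\nt{2}$, so the left support of $Y$ lies in $\bigcap_g g\,\mathcal{P}$, a $\G$-invariant proper subspace and hence zero --- the paper instead proves the quantitative bound $\norm{(\varphi\nt{2})^\G}_\infty^2\le 1-\av{\G}^{-1}\bigl[1-\tr(\varphi\ct\varphi)/(d^2-1)\bigr]$ by splitting the double sum over $\G\times\G$, bounding the $g\neq h$ terms by $1$ and twirling $\varphi\ct\varphi$ on the diagonal terms. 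The paper's version buys an explicit (if weak, $\av{\G}$-dependent) spectral gap; yours gives no rate but isolates the structural mechanism more cleanly and, unlike the paper, explicitly disposes of the $f\unit$ blocks of $(\Lambda\nt{2})^\G$. Two remarks: first, both proofs silently use that a unital CPTP map with orthogonal $\varphi$ must be unitary (you need it to conclude $\mathcal{P}$ is proper; the paper needs it to conclude $\tr\varphi\ct\varphi<d^2-1$), so this is not a gap relative to the paper but deserves a line; second, in the equality analysis the fact that the $v_g$ are literally \emph{equal} (not merely positively proportional) follows because tightness of the second inequality already forces $\norm{v_g}_2=1$ for every $g$ --- worth stating, since the bare triangle-inequality equality case only gives proportionality.
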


\begin{proof}
First assume $\Lambda$ is unitary. Since $(\varphi,\R^{d^2-1})$ is an orthogonal 
irrep of $\U(d)$, $(\varphi,\R^{d^2-1})\nt{2}$ contains the trivial rep as a 
subrep with multiplicity 1 by Proposition~\ref{lem:trivial_subrep}. Therefore for 
any $U\in \U(d)$ and in a fixed Schur basis (i.e., independent of 
$U$), $\varphi(U)\nt{2} = 1\oplus \mc{T}(U)$ for some 
homomorphism $\mc{T}$. Therefore any vector $v$ in the 
(one-dimensional) trivial representation is a +1-eigenvector 
of $\varphi(U)\nt{2}$ for any $U$ and consequently is a 
+1-eigenvector of $\bigl[\varphi\nt{2}(\Lambda)\bigr]^\G$.

We now show that for all completely positive, trace-preserving and unital $\Lambda$,
\begin{align}
	\norm{\pa{\varphi\nt{2}}^\G}_{\infty}^2 
	&\leq 1 - \av{\G}^{-1}\sq{1 - \frac{\tr \varphi\ct\varphi}{d^2-1}}	\,.
\end{align}
Recall that one of the equivalent definitions of the spectral 
norm is
\begin{align}
	\norm{\pa{\varphi\nt{2}}^\G}_{\infty}^2 = \max_{u:\norm{u}_2=1} u\ct 
	\sq{\pa{\varphi\nt{2}}^\G}\ct \pa{\varphi\nt{2}}^\G u	\,.
\end{align}
Expanding the averages over $\G$ gives
\begin{align}
	\norm{\pa{\varphi\nt{2}}^\G}_{\infty}^2 
	&= \max_{u:\norm{u}_2=1} \av{\G}^{-2}\sum_{g,h\in\G} 
	u\ct \sq{\pa{\varphi\nt{2}}^g}\ct \pa{\varphi\nt{2}}^h u 
	\nonumber\\
	&\leq 1 - \av{\G}^{-1} + \max_{u:\norm{u}_2=1} \av{\G}^{-2}\sum_{g\in\G} 
	u\ct \sq{\pa{\varphi\ct\varphi}\nt{2}}^g u \,.
\end{align}
where in the second line we have used the improved bound for unital channels in Proposition~\ref{prop:channel_norm} to
bound the contribution from the $\av{\G}^2-\av{\G}$ terms with $g\neq h$. 

Now let $u$ be an arbitrary unit vector and write $u = \sum u_{j,k}v_j\otimes v_k$, where $\br{v_j}$ is an orthonormal basis of $\C^d$. Then, since $\pa{\varphi\ct\varphi}^g$ is positive semidefinite with 
eigenvalues upper-bounded by 1 by Proposition~\ref{prop:channel_norm}, we have
\begin{align}
	u\ct \sq{\pa{\varphi\ct\varphi}^g}\nt{2} u 
	&= \sum_{j,k} \av{u_{j,k}}^2 \sq{v_j\ct \pa{\varphi\ct\varphi}^g v_j}
	\times \sq{v_k\ct \pa{\varphi\ct\varphi}^g v_k}	\nonumber\\
	&\leq \sum_{j,k} \av{u_{j,k}}^2 v_j\ct \pa{\varphi\ct\varphi}^g v_j 
	\,,
\end{align}
where we have used $0\leq v_k\ct \pa{\varphi\ct\varphi}^g v_k \leq 1$ 
for all $k$ and $g$ to obtain the second line. By 
Proposition~\ref{prop:twirling},
\begin{align}
	\av{\G}^{-1}\sum_{g\in\G} \pa{\varphi\ct\varphi}^g  
	= \frac{\tr \varphi\ct\varphi}{d^2-1} \unit	\,,
\end{align}
so
\begin{align}
	\av{\G}^{-1}\sum_{g\in\G} u\ct \sq{\pa{\varphi\ct\varphi}^g}\nt{2} u 
	&\leq \frac{\tr \varphi\ct\varphi}{d^2-1} \sum_{j,k} \av{u_{j,k}}^2 
	\nonumber\\
	&\leq \frac{\tr \varphi\ct\varphi}{d^2-1}
\end{align}
for all $u$ such that $\norm{u}_2 = 1$.
Therefore
\begin{align}
	\norm{\pa{\varphi\nt{2}}^\G}_{\infty}^2 
	&\leq 1 - \av{\G}^{-1}\sq{1 - \frac{\tr \varphi\ct\varphi}{d^2-1}}	\,.
\end{align}
\end{proof}

\begin{thm}\label{thm:asymptotic}
	Let $\Lambda$ be a $2$-contractive channel with respect to a group $\G$ that is also a 2-design. Then the variance due to sampling random gate sequences of elements from $\G$ decays exponentially to
	\begin{align}
		\frac{\rho_0^2\vec{E}\nt{2}P_1 \alpha\nt{2}}{1-\lambda_1}	\,,
	\end{align}
	where $P_1 = \av{\G}^{-1}\sum_{g\in\G} g\nt{2}$ is a rank-1 projector and $\lambda_1 = \tr P_1 \varphi\nt{2}$.
\end{thm}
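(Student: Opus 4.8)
The plan is to start from the exact variance formula of Eq.~\eqref{eq:variance}, $\sigma_m^2 = (E\nt{2}|\bigl([(\Lambda\nt{2})^\G]^m - [(\Lambda^\G)\nt{2}]^m\bigr)|\rho\nt{2})$, so that the whole task reduces to computing the $m\to\infty$ limits of the two matrix powers and their difference. I would work throughout in the block basis $\br{\unit\nt{2},\,\unit\otimes\A,\,\A\otimes\unit,\,\A\otimes\A}$ of $\C^{d^2}\otimes\C^{d^2}$, as in the proof of Theorem~\ref{thm:var_qubit} but now for general $d$ and a general unitary $2$-design $\G$. First I would record the block form of $(\Lambda\nt{2})^\G$. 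Because trace preservation forces the first row of $\Lambda$ to be $(1,0)$, the matrix $\Lambda\nt{2}$ is already block lower-triangular in this ordering, with diagonal blocks $1,\ \varphi,\ \varphi,\ \varphi\nt{2}$ and below-diagonal blocks (out of the $\unit\nt{2}$, $\unit\otimes\A$ and $\A\otimes\unit$ slots) built from $\alpha$ and $\varphi$. Twirling by $\G$ and invoking Proposition~\ref{prop:twirling}: $\varphi^\G=f\unit$; the average $\av{\G}^{-1}\sum_g\varphi(g)=0$ annihilates the two below-diagonal blocks that are \emph{linear} in $\alpha$ and emanate from $\unit\nt{2}$; and the only surviving below-diagonal block out of $\unit\nt{2}$ is the one \emph{quadratic} in $\alpha$, which becomes $P_1\alpha\nt{2}$ with $P_1=\av{\G}^{-1}\sum_{g\in\G}g\nt{2}$. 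One is left with a block lower-triangular $(\Lambda\nt{2})^\G$ whose diagonal blocks are $1,\ f\unit,\ f\unit,\ (\varphi\nt{2})^\G$ and whose only nonzero below-diagonal blocks sit in the $\A\otimes\A$ row.

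Next I would pin down $P_1$ and $(\varphi\nt{2})^\G$. Since $(\varphi,\C^{d^2-1})$ is an irrep of $\G$ with real character, Proposition~\ref{lem:trivial_subrep} gives that the trivial rep occurs in $\varphi\nt{2}$ exactly once, so $P_1$ (the group average of the representation $\varphi\nt{2}$) is the rank-$1$ projector onto that trivial subspace; by Schur's lemma $(\varphi\nt{2})^\G$ commutes with $P_1$ and acts on $\mathrm{ran}\,P_1$ as the scalar $\lambda_1=\tr P_1\varphi\nt{2}$, hence $[(\varphi\nt{2})^\G]^jP_1=\lambda_1^jP_1$. Then I would invoke $2$-contractivity: the $\unit\nt{2}$ block already contributes the eigenvalue $1$ to $(\Lambda\nt{2})^\G$, so "at most one eigenvalue of modulus $1$" forces \emph{every other} eigenvalue — the $f$'s from the two middle blocks, and all eigenvalues of $(\varphi\nt{2})^\G$, in particular $\lambda_1$ — to have modulus strictly below $1$. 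Thus $1-\lambda_1\neq 0$ (the claimed limit makes sense), $|f|<1$, and the spectral radius of $(\varphi\nt{2})^\G$ is $<1$, so for some fixed $\mu\in(0,1)$ one has $|f|^m\le C\mu^m$ and $\norm{[(\varphi\nt{2})^\G]^m}\le C\mu^m$.

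Finally I would take limits using the block-triangular power formula. The diagonal blocks of $[(\Lambda\nt{2})^\G]^m$ tend to $\mathrm{diag}(1,0,0,0)$, while its $(\A\otimes\A,\unit\nt{2})$ block is $\sum_{j=0}^{m-1}[(\varphi\nt{2})^\G]^j P_1\alpha\nt{2}$ plus terms routed through the $\unit\otimes\A$ and $\A\otimes\unit$ slots; the first sum equals $\sum_{j=0}^{m-1}\lambda_1^j P_1\alpha\nt{2}\to(1-\lambda_1)^{-1}P_1\alpha\nt{2}$, and each routed term carries an extra contracting factor $f^{m-1-j}$ and is $O(m\mu^m)$. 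Since $(\Lambda^\G)\nt{2}=\mathrm{diag}(1,f\unit,f\unit,f^2\unit)$, likewise $[(\Lambda^\G)\nt{2}]^m\to\mathrm{diag}(1,0,0,0)$. Subtracting cancels the $\unit\nt{2}$ piece and leaves only the single below-diagonal block $(1-\lambda_1)^{-1}P_1\alpha\nt{2}$; sandwiching with $(E\nt{2}|$ (whose $\A\otimes\A$ part is $\vec E\nt{2}$) and $|\rho\nt{2})$ (whose $\unit\nt{2}$ part is $\rho_0^2$) produces the asserted value $\rho_0^2\vec E\nt{2}P_1\alpha\nt{2}/(1-\lambda_1)$, and since every discarded term is $O(m\mu^m)$ the convergence is exponential.

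I expect the main obstacle to be the bookkeeping of the first paragraph: checking exactly which twirled blocks vanish — so that the nonunital part $\alpha$ re-enters only quadratically via $P_1\alpha\nt{2}$ — and verifying that the twirled blocks $\unit\otimes\A\to\A\otimes\A$ and $\A\otimes\unit\to\A\otimes\A$, though generically nonzero, contribute nothing in the limit because they are always flanked in $[(\Lambda\nt{2})^\G]^m$ by contracting powers of $f$. A minor but genuine point is that a spectral-radius bound on $(\varphi\nt{2})^\G$ does not by itself control $\norm{[(\varphi\nt{2})^\G]^m}$ uniformly (the matrix need not be normal); one needs Gelfand's formula, or the elementary fact that for any $\mu>\varrho(M)$ there is $C$ with $\norm{M^m}\le C\mu^m$, to upgrade convergence to genuine exponential decay.
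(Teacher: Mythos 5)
Your proposal is correct and follows essentially the same route as the paper: the same block basis $\br{\unit\nt{2},\unit\otimes\A,\A\otimes\unit,\A\otimes\A}$, the same use of Proposition~\ref{prop:twirling} to kill the blocks linear in $\alpha$ and reduce the surviving $(\A\otimes\A,\unit\nt{2})$ block to the geometric series $\sum_t\lambda_1^t P_1\alpha\nt{2}$, and the same spectral argument (the paper via an explicit Jordan decomposition, you via Gelfand's formula — the same fix for non-normality) to show every other contribution decays exponentially under $2$-contractivity. The only nitpick is that the ``routed'' terms you worry about live in the $(\A\otimes\A,\unit\otimes\A)$ and $(\A\otimes\A,\A\otimes\unit)$ blocks (the paper's $B_m,C_m$), not in the $(\A\otimes\A,\unit\nt{2})$ block, but your bound on them via the contracting factors $f^{m-1-t}$ is exactly what is needed.
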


\begin{proof}
For convenience, we use the block basis 
$\br{\unit\nt{2},\unit\otimes\A,\A\otimes\unit,\A\otimes\A}$ for the matrix 
representation. In this basis, we can write
\begin{align}\label{eq:block_basis}
	(\Lambda^\G)\nt{2} &= \pa{\begin{array}{ccccc} 1 & 0 & 0 & 0 \\ 0 & f\unit 
			& 0 & 0 \\ 0 & 0 & f\unit & 0 \\
			0 & 0 & 0 & f^2 \unit \end{array}}	\notag\\
	(\Lambda\nt{2})^\G &= \pa{\begin{array}{ccccc} 1 & 0 & 0 & 0 \\ 0 & f\unit 
			& 0 & 0 \\ 0 & 0 & f\unit & 0 \\
			P_1 \alpha\nt{2} & b & c & \pa{\varphi\nt{2}}^\G \end{array}}	\,,
\end{align}
where we have used $\sum_{g\in\G} g = 0$ by Proposition~\ref{prop:twirling}, $P_1 = \av{\G}^{-1}\sum_{g\in\G} g\nt{2}$, and
\begin{align}
	b &= \av{\G}^{-1} \sum_{g\in\G} \varphi^{(g)} \otimes 
	\sq{g\alpha} \nonumber\\
	c &= \av{\G}^{-1} \sum_{g\in\G} \sq{g\alpha} \otimes \varphi^{(g)}	\,.
\end{align}
By Propositions~\ref{prop:twirling} and \ref{lem:trivial_subrep}, $P_1$ is a rank-1 projector onto the trivial subrep, which occurs with multiplicity 1. 

It can easily be shown using an inductive step that
\begin{align}
	\sq{(\Lambda\nt{2})^\G}^m = \pa{\begin{array}{ccccc} 1 & 0 & 0 & 0 \\ 0 & f^m\unit 
			& 0 & 0 \\ 0 & 0 & f^m\unit & 0 \\
			A_m & B_m & C_m & \sq{\pa{\varphi\nt{2}}^\G}^m \end{array}}	\,,
\end{align}
where
\begin{align}\label{eq:component}
	A_m &=  \sum_{t=0}^{m-1} \bgsq{(\varphi\nt{2})^\G}^t P_1 \alpha\nt{2} 	\notag\\
	B_m &= \sum_{t=0}^{m-1} f^{m - 1 -t}\sq{(\varphi\nt{2})^\G}^t b	\nonumber\\
	C_m &= \sum_{t=0}^{m-1} f^{m - 1 -t}\sq{(\varphi\nt{2})^\G}^t c  \,.
\end{align}
Since the trivial subrep occurs with multiplicity 1 and $(\varphi\nt{2})^\G$ commutes with $g\nt{2}$ for all $g$, by Proposition~\ref{prop:twirling} we can write $(\varphi\nt{2})^\G = \lambda_1 P_1 + M$ for some matrix $M$ orthogonal to $P_1$, where $\lambda_1 =\tr P_1 \pa{\varphi\nt{2}}^\G$. Therefore we have
\begin{align}
	A_m &=  P_1 \alpha\nt{2}\sum_{t=0}^{m-1} \lambda_1^t  \,.
\end{align}

Substituting these expressions into Eq.~\eqref{eq:variance} gives
\begin{align}
	\sigma_m^2 = \rho_0^2\vec{E}\nt{2}P_1 \alpha\nt{2}\sum_{t=1}^m \lambda_1^t + \rho_0\vec{E}\nt{2}(B_m+C_m)\vec{\rho} + \vec{E}\nt{2}\sq{ \pa{\varphi\nt{2}}^\G}^m \vec{\rho}\nt{2} - f^{2m}\pa{\vec{E}\vec{\rho}}^2	\,.
\end{align}

We now prove that all but the first term decay exponentially for any $2$-contractive channel with respect to $\G$. Let $SJS^{-1}$ be the Jordan decomposition of $\pa{\varphi\nt{2}}^\G$. Then, by the submultiplicativity of the spectral norm and a standard identity,
\begin{align}
	\norm{\sq{\pa{\varphi\nt{2}}^\G}^m}_{\infty}	&= \norm{\pa{SJS^{-1}}^m}_{\infty} \notag\\
	&\leq \norm{J^m}_{\infty}\norm{S}_{\infty}\norm{S^{-1}}_{\infty}	\notag\\
	&\leq (d^2-1)^2 \norm{J^m}_{\max}\norm{S}_{\infty}\norm{S^{-1}}_{\infty}
\end{align}
where $\norm{M}_{\max} = \max_{j,k}\av{M_{j,k}}$ and $J$ is a $(d^2-1)^2\times (d^2-1)^2$ matrix. Note that since $S$ is invertible, both $\norm{S}_{\infty}$ and $\norm{S^{-1}}_{\infty}$ are finite.

By explicit calculation,
\begin{align}
	\norm{J^m}_{\max} = \max_{k,j=1,\ldots,d_k} \av{\eta_k}^{m-j}\binom{m}{j}
\end{align}
where $J_k$ is the $k$th Jordan block of $J$ with eigenvalue $\eta_k$ and dimension $d_k$. By Proposition~\ref{lem:eigenvalues}, $(\Lambda\nt{2})^\G$ has at most one eigenvalue of modulus 1, which can be identified as the top left entry in the expression in Eq.~\eqref{eq:block_basis}. Consequently, all the $\eta_k$ have modulus strictly less than 1 and so $\norm{J^m}_{\max}$ and consequently $\norm{\sq{\pa{\varphi\nt{2}}^\G}^m}_{\infty}$ decay exponentially to zero with $m$.

Therefore the only term in Eq.~\eqref{eq:variance} that does not decay exponentially to zero in $m$ is the first term, namely,
\begin{align}
	\rho_0^2\vec{E}\nt{2}P_1 \alpha\nt{2}\sum_{t=1}^m \lambda_1^t	\,,
\end{align}
which converges exponentially in $m$ to
\begin{align}
	\frac{\rho_0^2\vec{E}\nt{2}P_1 \alpha\nt{2}}{1-\lambda_1}	\,,
\end{align}
provided $\av{\lambda_1}<1$, otherwise it diverges. Since $P_1 = u u\ct$ for some unit vector $u$ and the trivial rep occurs with multiplicity 1, $u$ is an eigenvector of $\pa{\varphi\nt{2}}^\G$ with eigenvalue $\lambda_1$, which must be strictly less than 1 by Proposition~\ref{lem:eigenvalues}.
\end{proof}

\section{Stability under gate-dependent 
	perturbations}\label{sec:perturbations}

In our treatment of randomized benchmarking, we have assumed that the 
noise is independent of the target gate (although the noise may depend on 
time). In a physical implementation, the noise will depend on the target. We 
can account for gate-dependent noise perturbatively by writing
\begin{align}\label{eq:perturbation}
	\Lambda_{t,g} = \Lambda_t + \epsilon\Delta_{t,g}	\,,
\end{align}
where $\Lambda_t = \av{\G}^{-1}\sum_{g\in\G} \Lambda_{t,g}$ is a valid quantum 
channel and $\epsilon$ is scaled such that $\lVert \Delta_{t,g}\rVert_{\infty} \leq 
1$ for all $t$ and $g$.

In Ref.~\cite{Magesan2012a} it was shown that the mean of the benchmarking 
distribution is robust under gate-dependent perturbations. We now show that the 
variance is also stable under gate-dependent perturbations. Let us write $\sigma_{m,0}^2$ for the gate-averaged variance, and $\delta(\sigma_m^2)$ as the correction due gate-dependent perturbations. Then we have the following theorem.

\begin{thm}\label{thm:gate-perturb}
	The gate-dependent correction to the variance satisfies $\delta(\sigma_m^2) \le \delta_0$ whenever the gate-dependent noise in Eq.~(\ref{eq:perturbation}) satisfies 
	\begin{align}
		\epsilon \le \frac{\delta_0}{9dm} \,.
	\end{align}
\end{thm}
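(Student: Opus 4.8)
The plan is to expand the variance $\sigma_m^2$ as a function of the gate-dependent noise channels $\Lambda_{t,g} = \Lambda_t + \epsilon\Delta_{t,g}$ and track how the correction term $\delta(\sigma_m^2)$ grows in $\epsilon$. Recall from Eq.~\eqref{eq:variance} that $\sigma_m^2$ is built from products of twirled channels and their tensor squares. With gate-dependent noise, the relevant quantity is no longer $(\Lambda\nt{2})^\G$ but rather an average over all sequences $\av{\G}^{-m}\sum_s (E\nt{2}|\prod_t (\Lambda_{t,s_t}\nt{2})^{h_t}|\rho\nt{2})$, and similarly for the $\bar F_m^2$ term. First I would substitute $\Lambda_{t,g} = \Lambda_t + \epsilon\Delta_{t,g}$ into both pieces and collect terms by powers of $\epsilon$. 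The zeroth-order term is exactly $\sigma_{m,0}^2$, so $\delta(\sigma_m^2)$ consists of all the terms with at least one factor of $\epsilon\Delta_{t,g}$.

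The key step is to bound each such term uniformly. There are $m$ time steps, and at each step the tensor-square channel $\Lambda_{t,g}\nt{2}$ contributes either $\Lambda_t\otimes\Lambda_t$, a cross term $\Lambda_t\otimes\Delta_{t,g} + \Delta_{t,g}\otimes\Lambda_t$, or $\Delta_{t,g}\nt{2}$; expanding the product of $m$ such factors and subtracting the $m$-fold product of the averages gives a sum of terms each of which is a product of $m$ channel-like operators acting between unit vectors $(E\nt{2}|$ and $|\rho\nt{2})$. I would use $\|\Lambda_t\|_\infty \le \sqrt d$ and $\|\Lambda_t\nt{2}\|_\infty \le d$ from Proposition~\ref{prop:channel_norm}, together with $\|\Delta_{t,g}\|_\infty \le 1$, to bound the operator norm of each factor, and $\| |\rho\nt{2}) \|_2 \le 1$, $\| (E\nt{2}| \|_2 \le 1$ for the boundary vectors. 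The combinatorial count of terms containing at least one $\Delta$ factor, weighted by the appropriate powers of $d$, is what produces the $9dm$ in the denominator: roughly, the leading correction comes from the $m$ ways of inserting a single cross term at one of the $m$ time steps, each contributing a factor of order $d$ (from the remaining $\Lambda\nt{2}$ factors telescoping under the twirl and the $\Delta$-$\Lambda$ cross term carrying one $\sqrt d \cdot \sqrt d$), and the constant $9$ absorbs the $\Lambda\otimes\Delta + \Delta\otimes\Lambda$ doubling together with the analogous expansion of the $\bar F_m^2$ subtraction and some slack. Thus $\delta(\sigma_m^2) \le 9dm\,\epsilon + O(\epsilon^2)$, and demanding this be at most $\delta_0$ gives $\epsilon \le \delta_0/(9dm)$.

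The main obstacle I anticipate is making the cancellation structure precise: many of the first-order terms do not individually vanish but partially cancel against corresponding terms in the expansion of $\bar F_m^2 = \big[(E|\prod_t \Lambda_t^\G|\rho)\big]^2$, and one must verify that the twirl $(\cdot)^\G$ applied to a single $\Delta_{t,g}$ factor (with the remaining factors ideal) does not produce a term that grows faster than linearly in $m$ after the $h_t$'s are averaged. The cleanest way to handle this is to note that $(\Lambda_t\nt{2})^\G$ and $(\Lambda_t^\G)\nt{2}$ agree on the trivial and $\varphi$-type subreps and differ only on the $\varphi\nt{2}$ block, so the difference of the two $m$-fold products is already $O(\epsilon)$ when gate dependence is switched on, and each additional $\Delta$ insertion is controlled by a geometric-type bound in the spectral norm. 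Once one grants that the telescoping works as in the proof of Theorem~\ref{thm:qudit_bound}, the remaining estimate is a routine triangle-inequality count, and I would not belabor the implicit constants beyond confirming that $9dm$ is a valid (if not tight) choice.
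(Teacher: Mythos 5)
Your overall strategy---expand in powers of $\epsilon\Delta_{t,g}$ and count terms with the triangle inequality---is in the right family, but you work at the level of the second moment (the tensor-square channels), and there the norm accounting does not deliver the constant $9dm$ you assert. Concretely: a first-order term with a single insertion $\Lambda_t\otimes\Delta_{t,g}+\Delta_{t,g}\otimes\Lambda_t$ at time $t$ is sandwiched between the product of the unperturbed tensor-square factors before and after time $t$. Bounding each of those $m-1$ factors separately by $\norm{\Lambda\nt{2}}_\infty\le d$ gives $d^{m-1}$, which is useless; the only way to avoid exponential growth is to group consecutive unperturbed factors into a single composite channel and use that a composition of CPTP maps is CPTP, hence has spectral norm at most $\sqrt{d}$ (Proposition~\ref{prop:channel_norm}(ii)). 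Your appeal to the factors ``telescoping under the twirl'' does not do this work: once a gate-dependent $\Delta_{t,g}$ is inserted, the neighbouring gate averages no longer collapse to multiples of projectors, and in any case the averaged blocks are still only bounded as channels. Doing the grouping correctly at the second-moment level yields \emph{two} tensor-square blocks of norm at most $d$ each, times $2\sqrt{d}$ from the cross term, i.e.\ a first-order contribution of order $m d^{5/2}\epsilon$ (plus a comparable amount from expanding $\bar{F}_m^2$). Under the hypothesis $\epsilon\le\delta_0/(9dm)$ this gives $\delta(\sigma_m^2)\lesssim d^{3/2}\delta_0$, which does not establish $\delta(\sigma_m^2)\le\delta_0$ for general $d$ (and only marginally, if at all, for $d=2$). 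The constant $9$ in your write-up is asserted rather than derived, and the $O(\epsilon^2)$ tail is also left uncontrolled, whereas the theorem is a clean non-asymptotic inequality.

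The missing idea is a reduction that avoids the tensor-square norms altogether. The paper first writes $\sigma_m^2=-\bar{F}_m^2+\av{\G}^{-m}\sum_k F_{m,k}^2$ and uses the elementary bound $\av{\delta(M^2)}=\av{2M_0+\delta(M)}\,\av{\delta(M)}\le 2\av{\delta(M)}$, valid because each probability $M_0$ and $M_0+\delta(M)$ lies in $[0,1]$. This reduces the perturbation of the variance to the perturbation of the \emph{first} moment of each individual sequence, where the same channel-composition trick gives $d^{(a+1)/2}$ for a term with $a$ insertions, hence a first-order contribution of $(m+1)\epsilon d$, and the full binomial series $(1+\epsilon\sqrt{d})^{m+1}$ controls all higher orders at once. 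That is where the linear-in-$d$ scaling, and ultimately the $9dm$, comes from. If you want to salvage your second-moment route you would need an analogous mechanism to absorb the extra powers of $d$ from the two tensor-square blocks; as written, the proposal does not prove the theorem as stated.
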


\begin{proof}
The variance can be written in terms of an average over all gate sequences of length $m$ as
\begin{align}\label{eq:var_cor}
	\sigma_m^2 &=  -\bar{F}_m^2 + \av{\G}^{-m} \sum_{k} F_{m,k}^2	\notag\\
	&=  -\bar{F}_{m,0}^2 - \delta(\bar{F}_m^2) + \av{\G}^{-m} \sum_{k} \sq{F_{m,k,0}^2 + \delta(F_{m,k}^2)}	\notag\\
	&\leq \sigma_{m,0}^2 + 2\av{\delta(\bar{F}_m)} + 2\av{\G}^{-m} \sum_{k} \av{\delta(F_{m,k})}\notag\\
	&\leq \sigma_{m,0}^2 + 4 \av{\G}^{-m} \sum_{k} \av{\delta(F_{m,k})} \,.
\end{align}
where $M_0$ and $\delta(M)$ denote the average term and the perturbation from the average of $M$ respectively and we have used 
\begin{align}
	\av{\delta(M^2)} &= \av{[M_0 + \delta(M)]^2 - M_0^2}= \av{2M_0 + \delta(M)}\av{\delta(M)}	\leq 2\av{\delta(M)}	\,.
\end{align}
The second-to-last inequality follows since $M_0$ and $M_0 +\delta(M)$ are in the unit interval. We have also used $\av{\delta(\bar{F}_m)} \leq \av{\G}^{-1} \sum_{g\in\G}   \av{F_{m,k}}$, which follows from the triangle inequality.

With noise written in the form of Eq.~\eqref{eq:perturbation}, the sequence of operators applied in the randomized benchmarking experiment with sequence $S_{m,k}$ is
\begin{align}\label{eq:expansion}
	\mc{S}_{m,k} = \prod_{t=m}^0 g_t\Lambda_{t,g} = \sum_{a=0}^{m+1}\epsilon^a \sum_{b\in\mbb{Z}_2^{m+1}:H(b)=a}\prod_{t=m}^0 g_t M_{t,g,b_t}	\,,
\end{align}
where $H(b)$ is the Hamming weight of the bit string $b$ and
\begin{align}
	M_{t,g,b_t} = \begin{cases}
		\Lambda_t & \mbox{if } b_t = 0 \\
		\Delta_{t,g} & \mbox{if } b_t=1\,.
	\end{cases}
\end{align}
Substituting Eq.~\eqref{eq:expansion} into the expression for the
probability $F_{m,k} = (E|\mc{S}_{m,k}|\rho)$ and using the triangle inequality 
gives
\begin{align}\label{eq:corrections}
	\av{\delta(F_{m,k})} &\leq \sum_{a=1}^{m+1} \epsilon^a 
	\sum_{b}\av{(E|\pa{\prod_{t=m}^0 g_t M_{t,g,b_t}}|\rho)}	\notag\\
	&\leq \sum_{a=1}^{m+1} \epsilon^a 
	\sum_{b}\norm{\prod_{t=m}^0 g_t M_{t,g,b_t}}_{\infty}	\notag\\
	&\leq \sum_{a=1}^{m+1} \epsilon^a \binom{m+1}{a}d^{(a+1)/2}	\notag\\
	&= \sqrt{d} \bigl[ (1+\epsilon \sqrt{d})^{m+1} -1 \bigr] \notag\\
	&\leq \sqrt{d} \bigl[ \e^{\epsilon \sqrt{d} (m+1)} -1 \bigr] \,,
\end{align}
where, to get from the second to the third line, we note that for a fixed order $a$ there are at most $a+1$ quantum channels (i.e., products of $\Lambda_t$'s and the elements of $\G$, which are channels) interleaved by the $a$ different $\Delta_{t,g}$ and we have also used the submultiplicativity of the spectral norm and Proposition~\ref{prop:channel_norm}. 

We can then substitute the above sequence-independent upper bound into Eq.~\eqref{eq:var_cor}, setting $\delta_0 = \av{\delta(\sigma_m^2) - \sigma_{m,0}^2}$ and solving for $\epsilon$ gives the sufficient condition
\begin{align}
	\epsilon \le \frac{\ln(1+\delta_0/(4 \sqrt{d}))}{\sqrt{d} (m+1)} \,.
\end{align}
To extract the slightly weaker but more transparent bound stated in the theorem, we use the simple bounds $m +1 \le 2 m$ (for $m\ge 1$), $\delta_0 \le 1/4$ (because the fidelity is contained in the unit interval), the inequality $x/(1+x) \le \log(1+x)$, and the loose bound $4\sqrt{d} +\delta_0 \le 9/2 \sqrt{d}$ for $d\ge 2$.
\end{proof}

We remark that this result can surely be improved, though we have not attempted to do so. In particular, there should certainly be a factor of at least $r$, the average infidelity, bounding the change in the variance. 

\section{Conclusion}

We have proven that the randomized benchmarking protocol can be applied to experimental scenarios in which the noise is time dependent in an efficient and reliable manner. Moreover, the ability to estimate time-dependent average gate fidelities using randomized benchmarking provides an indicator for non-Markovianity over long timescales. 

In particular, we have proven that the variance is small for short sequences and asymptotically decays exponentially to a (small) constant, that, in the case of unital noise, is zero. The fact that the variance is remarkably small (e.g., on the order of $4\times 10^{-4}$ for currently achievable noise levels) enables experimental realizations of randomized benchmarking to be accurate even when using a small number of random sequences (e.g., 145 sequences compared to the $10^5$ proposed in Ref.~\cite{Magesan2012a}).

Our results show rigorously that randomized benchmarking with arbitrary Markovian noise is generically \textit{almost} as accurate as has previously been estimated in experiments~\cite{Gaebler2012, Magesan2012,Brown2011} and numerics~\cite{Epstein2014}. However, we find that $K_m$ should scale with $m$ so that the variance is independent of $m$. We also find that if near-unitary noise (such as under- and over-rotations) are a predominant noise source, then randomized benchmarking should be conducted in the regime $mr\ll 1$, since the variance due to sampling random sequences with such noise sources will remain large as $m$ increases.

It has recently been suggested that the unexpectedly good accuracy of randomized benchmarking arises because data is simultaneously fit to $\hat{F}_m$ for all sequence lengths~\cite{Epstein2014}. However, this suggestion presupposed a more fundamental fact, which we have now proven, namely, that the variance due to sampling random gate sequences is remarkably small.

Our results also apply directly to interleaved benchmarking~\cite{Magesan2012} since the interleaved gate sequence can be rewritten as a standard randomized benchmarking gate sequence with the interleaved gate and its inversion incorporated into the noise, with a consequent (but small) increase in the error rate. As such, interleaved randomized benchmarking is essentially as accurate as randomized benchmarking, provided the noise is Markovian and approximately gate-independent.

Another possible application of our results is in estimating the nonunitality of a channel by estimating the constant to which the variance asymptotically converges.  While it is not immediately apparent how to guarantee that the variance has (approximately) converged (given that the decay rate of the variance can be arbitrarily small), it may be possible to artificially boost the decay rate using, for example, the technique introduced in Ref.~\cite{Kimmel2013}. 

While our results prove that randomized benchmarking can reliably be performed using the number of sequences currently used in practice for qubits, the weaker bound on the variance for qudits implies that, to obtain results that are currently rigorously justified to a given confidence level, many more sequences are required when benchmarking higher-dimensional systems (or multiple qubits). Consequently, a major open problem is to improve the bound for qudits. Since the primary source of the improvement for qubits arose by considering the irrep structure of the tensor product representation, one route to obtaining an improved bound for qudits is to analyze the general representation structure of tensor product representations.

\textit{Acknowledgements}---JJW acknowledges helpful discussions with Joseph Emerson. The authors were supported by the IARPA MQCO program, by the ARC via EQuS project number CE11001013, and by the US Army Research Office grant numbers W911NF-14-1-0098 and W911NF-14-1-0103. STF also acknowledges support from an ARC Future Fellowship FT130101744.

\bibliography{library}

\begin{thebibliography}{38}%
\makeatletter
\providecommand \@ifxundefined [1]{%
 \@ifx{#1\undefined}
}%
\providecommand \@ifnum [1]{%
 \ifnum #1\expandafter \@firstoftwo
 \else \expandafter \@secondoftwo
 \fi
}%
\providecommand \@ifx [1]{%
 \ifx #1\expandafter \@firstoftwo
 \else \expandafter \@secondoftwo
 \fi
}%
\providecommand \natexlab [1]{#1}%
\providecommand \enquote  [1]{``#1''}%
\providecommand \bibnamefont  [1]{#1}%
\providecommand \bibfnamefont [1]{#1}%
\providecommand \citenamefont [1]{#1}%
\providecommand \href@noop [0]{\@secondoftwo}%
\providecommand \href [0]{\begingroup \@sanitize@url \@href}%
\providecommand \@href[1]{\@@startlink{#1}\@@href}%
\providecommand \@@href[1]{\endgroup#1\@@endlink}%
\providecommand \@sanitize@url [0]{\catcode `\\12\catcode `\$12\catcode
  `\&12\catcode `\#12\catcode `\^12\catcode `\_12\catcode `\%12\relax}%
\providecommand \@@startlink[1]{}%
\providecommand \@@endlink[0]{}%
\providecommand \url  [0]{\begingroup\@sanitize@url \@url }%
\providecommand \@url [1]{\endgroup\@href {#1}{\urlprefix }}%
\providecommand \urlprefix  [0]{URL }%
\providecommand \Eprint [0]{\href }%
\providecommand \doibase [0]{http://dx.doi.org/}%
\providecommand \selectlanguage [0]{\@gobble}%
\providecommand \bibinfo  [0]{\@secondoftwo}%
\providecommand \bibfield  [0]{\@secondoftwo}%
\providecommand \translation [1]{[#1]}%
\providecommand \BibitemOpen [0]{}%
\providecommand \bibitemStop [0]{}%
\providecommand \bibitemNoStop [0]{.\EOS\space}%
\providecommand \EOS [0]{\spacefactor3000\relax}%
\providecommand \BibitemShut  [1]{\csname bibitem#1\endcsname}%
\let\auto@bib@innerbib\@empty
\bibitem [{\citenamefont {Gottesman}(2009)}]{Gottesman2009}%
  \BibitemOpen
  \bibfield  {author} {\bibinfo {author} {\bibfnamefont {Daniel}\ \bibnamefont
  {Gottesman}},\ }\bibfield  {title} {\enquote {\bibinfo {title} {{An
  Introduction to Quantum Error Correction and Fault-Tolerant Quantum
  Computation}},}\ }\href@noop {} {\  (\bibinfo {year} {2009})},\ \Eprint
  {http://arxiv.org/abs/0904.2557v1} {arXiv:0904.2557v1} \BibitemShut {NoStop}%
\bibitem [{\citenamefont {Chuang}\ and\ \citenamefont
  {Nielsen}(1997)}]{Chuang1997}%
  \BibitemOpen
  \bibfield  {author} {\bibinfo {author} {\bibfnamefont {Isaac~L.}\
  \bibnamefont {Chuang}}\ and\ \bibinfo {author} {\bibfnamefont {Michael~A.}\
  \bibnamefont {Nielsen}},\ }\bibfield  {title} {\enquote {\bibinfo {title}
  {{Prescription for experimental determination of the dynamics of a quantum
  black box}},}\ }\href {\doibase 10.1080/09500349708231894} {\bibfield
  {journal} {\bibinfo  {journal} {J. Mod. Opt.}\ }\textbf {\bibinfo {volume}
  {44}},\ \bibinfo {pages} {2455} (\bibinfo {year} {1997})}\BibitemShut
  {NoStop}%
\bibitem [{\citenamefont {Poyatos}\ \emph {et~al.}(1997)\citenamefont
  {Poyatos}, \citenamefont {Cirac},\ and\ \citenamefont
  {Zoller}}]{Poyatos1997}%
  \BibitemOpen
  \bibfield  {author} {\bibinfo {author} {\bibfnamefont {J.}~\bibnamefont
  {Poyatos}}, \bibinfo {author} {\bibfnamefont {J.}~\bibnamefont {Cirac}}, \
  and\ \bibinfo {author} {\bibfnamefont {P.}~\bibnamefont {Zoller}},\
  }\bibfield  {title} {\enquote {\bibinfo {title} {{Complete Characterization
  of a Quantum Process: The Two-Bit Quantum Gate}},}\ }\href {\doibase
  10.1103/PhysRevLett.78.390} {\bibfield  {journal} {\bibinfo  {journal} {Phys.
  Rev. Lett.}\ }\textbf {\bibinfo {volume} {78}},\ \bibinfo {pages} {390}
  (\bibinfo {year} {1997})}\BibitemShut {NoStop}%
\bibitem [{\citenamefont {Flammia}\ \emph {et~al.}(2012)\citenamefont
  {Flammia}, \citenamefont {Gross}, \citenamefont {Liu},\ and\ \citenamefont
  {Eisert}}]{Flammia2012}%
  \BibitemOpen
  \bibfield  {author} {\bibinfo {author} {\bibfnamefont {Steven~T.}\
  \bibnamefont {Flammia}}, \bibinfo {author} {\bibfnamefont {David}\
  \bibnamefont {Gross}}, \bibinfo {author} {\bibfnamefont {Yi-Kai}\
  \bibnamefont {Liu}}, \ and\ \bibinfo {author} {\bibfnamefont {Jens}\
  \bibnamefont {Eisert}},\ }\bibfield  {title} {\enquote {\bibinfo {title}
  {{Quantum tomography via compressed sensing: error bounds, sample complexity
  and efficient estimators}},}\ }\href {\doibase 10.1088/1367-2630/14/9/095022}
  {\bibfield  {journal} {\bibinfo  {journal} {New J. Phys.}\ }\textbf {\bibinfo
  {volume} {14}},\ \bibinfo {pages} {095022} (\bibinfo {year}
  {2012})}\BibitemShut {NoStop}%
\bibitem [{\citenamefont {Gross}\ \emph {et~al.}(2010)\citenamefont {Gross},
  \citenamefont {Liu}, \citenamefont {Flammia}, \citenamefont {Becker},\ and\
  \citenamefont {Eisert}}]{Gross2010a}%
  \BibitemOpen
  \bibfield  {author} {\bibinfo {author} {\bibfnamefont {David}\ \bibnamefont
  {Gross}}, \bibinfo {author} {\bibfnamefont {Yi-Kai}\ \bibnamefont {Liu}},
  \bibinfo {author} {\bibfnamefont {Steven~T.}\ \bibnamefont {Flammia}},
  \bibinfo {author} {\bibfnamefont {Stephen}\ \bibnamefont {Becker}}, \ and\
  \bibinfo {author} {\bibfnamefont {Jens}\ \bibnamefont {Eisert}},\ }\bibfield
  {title} {\enquote {\bibinfo {title} {{Quantum State Tomography via Compressed
  Sensing}},}\ }\href {\doibase 10.1103/PhysRevLett.105.150401} {\bibfield
  {journal} {\bibinfo  {journal} {Phys. Rev. Lett.}\ }\textbf {\bibinfo
  {volume} {105}},\ \bibinfo {pages} {150401} (\bibinfo {year}
  {2010})}\BibitemShut {NoStop}%
\bibitem [{\citenamefont {Merkel}\ \emph {et~al.}(2013)\citenamefont {Merkel},
  \citenamefont {Gambetta}, \citenamefont {Smolin}, \citenamefont {Poletto},
  \citenamefont {C\'{o}rcoles}, \citenamefont {Johnson}, \citenamefont {Ryan},\
  and\ \citenamefont {Steffen}}]{Merkel2012}%
  \BibitemOpen
  \bibfield  {author} {\bibinfo {author} {\bibfnamefont {Seth~T.}\ \bibnamefont
  {Merkel}}, \bibinfo {author} {\bibfnamefont {Jay~M.}\ \bibnamefont
  {Gambetta}}, \bibinfo {author} {\bibfnamefont {John~A.}\ \bibnamefont
  {Smolin}}, \bibinfo {author} {\bibfnamefont {Stefano}\ \bibnamefont
  {Poletto}}, \bibinfo {author} {\bibfnamefont {Antonio~D.}\ \bibnamefont
  {C\'{o}rcoles}}, \bibinfo {author} {\bibfnamefont {Blake~R.}\ \bibnamefont
  {Johnson}}, \bibinfo {author} {\bibfnamefont {Colm~A.}\ \bibnamefont {Ryan}},
  \ and\ \bibinfo {author} {\bibfnamefont {Matthias}\ \bibnamefont {Steffen}},\
  }\bibfield  {title} {\enquote {\bibinfo {title} {{Self-consistent quantum
  process tomography}},}\ }\href {\doibase 10.1103/PhysRevA.87.062119}
  {\bibfield  {journal} {\bibinfo  {journal} {Phys. Rev. A}\ }\textbf {\bibinfo
  {volume} {87}},\ \bibinfo {pages} {062119} (\bibinfo {year}
  {2013})}\BibitemShut {NoStop}%
\bibitem [{\citenamefont {Emerson}\ \emph {et~al.}(2005)\citenamefont
  {Emerson}, \citenamefont {Alicki},\ and\ \citenamefont
  {\.{Z}yczkowski}}]{Emerson2005}%
  \BibitemOpen
  \bibfield  {author} {\bibinfo {author} {\bibfnamefont {Joseph}\ \bibnamefont
  {Emerson}}, \bibinfo {author} {\bibfnamefont {Robert}\ \bibnamefont
  {Alicki}}, \ and\ \bibinfo {author} {\bibfnamefont {Karol}\ \bibnamefont
  {\.{Z}yczkowski}},\ }\bibfield  {title} {\enquote {\bibinfo {title}
  {{Scalable noise estimation with random unitary operators}},}\ }\href
  {\doibase 10.1088/1464-4266/7/10/021} {\bibfield  {journal} {\bibinfo
  {journal} {J. Opt. B Quantum Semiclassical Opt.}\ }\textbf {\bibinfo {volume}
  {7}},\ \bibinfo {pages} {S347} (\bibinfo {year} {2005})}\BibitemShut
  {NoStop}%
\bibitem [{\citenamefont {Knill}\ \emph {et~al.}(2008)\citenamefont {Knill},
  \citenamefont {Leibfried}, \citenamefont {Reichle}, \citenamefont {Britton},
  \citenamefont {Blakestad}, \citenamefont {Jost}, \citenamefont {Langer},
  \citenamefont {Ozeri}, \citenamefont {Seidelin},\ and\ \citenamefont
  {Wineland}}]{Knill2008}%
  \BibitemOpen
  \bibfield  {author} {\bibinfo {author} {\bibfnamefont {E.}~\bibnamefont
  {Knill}}, \bibinfo {author} {\bibfnamefont {D.}~\bibnamefont {Leibfried}},
  \bibinfo {author} {\bibfnamefont {R.}~\bibnamefont {Reichle}}, \bibinfo
  {author} {\bibfnamefont {J.}~\bibnamefont {Britton}}, \bibinfo {author}
  {\bibfnamefont {R.}~\bibnamefont {Blakestad}}, \bibinfo {author}
  {\bibfnamefont {J.~D.}\ \bibnamefont {Jost}}, \bibinfo {author}
  {\bibfnamefont {C.}~\bibnamefont {Langer}}, \bibinfo {author} {\bibfnamefont
  {R.}~\bibnamefont {Ozeri}}, \bibinfo {author} {\bibfnamefont
  {S.}~\bibnamefont {Seidelin}}, \ and\ \bibinfo {author} {\bibfnamefont
  {D.~J.}\ \bibnamefont {Wineland}},\ }\bibfield  {title} {\enquote {\bibinfo
  {title} {{Randomized benchmarking of quantum gates}},}\ }\href {\doibase
  10.1103/PhysRevA.77.012307} {\bibfield  {journal} {\bibinfo  {journal} {Phys.
  Rev. A}\ }\textbf {\bibinfo {volume} {77}},\ \bibinfo {pages} {012307}
  (\bibinfo {year} {2008})}\BibitemShut {NoStop}%
\bibitem [{\citenamefont {Magesan}\ \emph {et~al.}(2011)\citenamefont
  {Magesan}, \citenamefont {Gambetta},\ and\ \citenamefont
  {Emerson}}]{Magesan2011}%
  \BibitemOpen
  \bibfield  {author} {\bibinfo {author} {\bibfnamefont {Easwar}\ \bibnamefont
  {Magesan}}, \bibinfo {author} {\bibfnamefont {Jay~M.}\ \bibnamefont
  {Gambetta}}, \ and\ \bibinfo {author} {\bibfnamefont {Joseph}\ \bibnamefont
  {Emerson}},\ }\bibfield  {title} {\enquote {\bibinfo {title} {{Scalable and
  Robust Randomized Benchmarking of Quantum Processes}},}\ }\href {\doibase
  10.1103/PhysRevLett.106.180504} {\bibfield  {journal} {\bibinfo  {journal}
  {Phys. Rev. Lett.}\ }\textbf {\bibinfo {volume} {106}},\ \bibinfo {pages}
  {180504} (\bibinfo {year} {2011})}\BibitemShut {NoStop}%
\bibitem [{\citenamefont {Magesan}\ \emph
  {et~al.}(2012{\natexlab{a}})\citenamefont {Magesan}, \citenamefont
  {Gambetta},\ and\ \citenamefont {Emerson}}]{Magesan2012a}%
  \BibitemOpen
  \bibfield  {author} {\bibinfo {author} {\bibfnamefont {Easwar}\ \bibnamefont
  {Magesan}}, \bibinfo {author} {\bibfnamefont {Jay~M.}\ \bibnamefont
  {Gambetta}}, \ and\ \bibinfo {author} {\bibfnamefont {Joseph}\ \bibnamefont
  {Emerson}},\ }\bibfield  {title} {\enquote {\bibinfo {title} {{Characterizing
  quantum gates via randomized benchmarking}},}\ }\href {\doibase
  10.1103/PhysRevA.85.042311} {\bibfield  {journal} {\bibinfo  {journal} {Phys.
  Rev. A}\ }\textbf {\bibinfo {volume} {85}},\ \bibinfo {pages} {042311}
  (\bibinfo {year} {2012}{\natexlab{a}})}\BibitemShut {NoStop}%
\bibitem [{\citenamefont {Gaebler}\ \emph {et~al.}(2012)\citenamefont
  {Gaebler}, \citenamefont {Meier}, \citenamefont {Tan}, \citenamefont
  {Bowler}, \citenamefont {Lin}, \citenamefont {Hanneke}, \citenamefont {Jost},
  \citenamefont {Home}, \citenamefont {Knill}, \citenamefont {Leibfried},\ and\
  \citenamefont {Wineland}}]{Gaebler2012}%
  \BibitemOpen
  \bibfield  {author} {\bibinfo {author} {\bibfnamefont {J.~P.}\ \bibnamefont
  {Gaebler}}, \bibinfo {author} {\bibfnamefont {A.~M.}\ \bibnamefont {Meier}},
  \bibinfo {author} {\bibfnamefont {T.~R.}\ \bibnamefont {Tan}}, \bibinfo
  {author} {\bibfnamefont {R.}~\bibnamefont {Bowler}}, \bibinfo {author}
  {\bibfnamefont {Y.}~\bibnamefont {Lin}}, \bibinfo {author} {\bibfnamefont
  {D.}~\bibnamefont {Hanneke}}, \bibinfo {author} {\bibfnamefont {J.~D.}\
  \bibnamefont {Jost}}, \bibinfo {author} {\bibfnamefont {J.~P.}\ \bibnamefont
  {Home}}, \bibinfo {author} {\bibfnamefont {E.}~\bibnamefont {Knill}},
  \bibinfo {author} {\bibfnamefont {D.}~\bibnamefont {Leibfried}}, \ and\
  \bibinfo {author} {\bibfnamefont {D.~J.}\ \bibnamefont {Wineland}},\
  }\bibfield  {title} {\enquote {\bibinfo {title} {{Randomized Benchmarking of
  Multiqubit Gates}},}\ }\href {\doibase 10.1103/PhysRevLett.108.260503}
  {\bibfield  {journal} {\bibinfo  {journal} {Phys. Rev. Lett.}\ }\textbf
  {\bibinfo {volume} {108}},\ \bibinfo {pages} {260503} (\bibinfo {year}
  {2012})}\BibitemShut {NoStop}%
\bibitem [{\citenamefont {Magesan}\ \emph
  {et~al.}(2012{\natexlab{b}})\citenamefont {Magesan}, \citenamefont
  {Gambetta}, \citenamefont {Johnson}, \citenamefont {Ryan}, \citenamefont
  {Chow}, \citenamefont {Merkel}, \citenamefont {da~Silva}, \citenamefont
  {Keefe}, \citenamefont {Rothwell}, \citenamefont {Ohki}, \citenamefont
  {Ketchen},\ and\ \citenamefont {Steffen}}]{Magesan2012}%
  \BibitemOpen
  \bibfield  {author} {\bibinfo {author} {\bibfnamefont {Easwar}\ \bibnamefont
  {Magesan}}, \bibinfo {author} {\bibfnamefont {Jay~M.}\ \bibnamefont
  {Gambetta}}, \bibinfo {author} {\bibfnamefont {Blake~R.}\ \bibnamefont
  {Johnson}}, \bibinfo {author} {\bibfnamefont {Colm~A.}\ \bibnamefont {Ryan}},
  \bibinfo {author} {\bibfnamefont {Jerry~M.}\ \bibnamefont {Chow}}, \bibinfo
  {author} {\bibfnamefont {Seth~T.}\ \bibnamefont {Merkel}}, \bibinfo {author}
  {\bibfnamefont {Marcus~P.}\ \bibnamefont {da~Silva}}, \bibinfo {author}
  {\bibfnamefont {George~A.}\ \bibnamefont {Keefe}}, \bibinfo {author}
  {\bibfnamefont {Mary~B.}\ \bibnamefont {Rothwell}}, \bibinfo {author}
  {\bibfnamefont {Thomas~A.}\ \bibnamefont {Ohki}}, \bibinfo {author}
  {\bibfnamefont {Mark~B.}\ \bibnamefont {Ketchen}}, \ and\ \bibinfo {author}
  {\bibfnamefont {M.}~\bibnamefont {Steffen}},\ }\bibfield  {title} {\enquote
  {\bibinfo {title} {{Efficient Measurement of Quantum Gate Error by
  Interleaved Randomized Benchmarking}},}\ }\href {\doibase
  10.1103/PhysRevLett.109.080505} {\bibfield  {journal} {\bibinfo  {journal}
  {Phys. Rev. Lett.}\ }\textbf {\bibinfo {volume} {109}},\ \bibinfo {pages}
  {080505} (\bibinfo {year} {2012}{\natexlab{b}})}\BibitemShut {NoStop}%
\bibitem [{\citenamefont {da~Silva}\ \emph {et~al.}(2011)\citenamefont
  {da~Silva}, \citenamefont {Landon-Cardinal},\ and\ \citenamefont
  {Poulin}}]{DaSilva2011}%
  \BibitemOpen
  \bibfield  {author} {\bibinfo {author} {\bibfnamefont {Marcus~P.}\
  \bibnamefont {da~Silva}}, \bibinfo {author} {\bibfnamefont {Olivier}\
  \bibnamefont {Landon-Cardinal}}, \ and\ \bibinfo {author} {\bibfnamefont
  {David}\ \bibnamefont {Poulin}},\ }\bibfield  {title} {\enquote {\bibinfo
  {title} {{Practical Characterization of Quantum Devices without
  Tomography}},}\ }\href {\doibase 10.1103/PhysRevLett.107.210404} {\bibfield
  {journal} {\bibinfo  {journal} {Phys. Rev. Lett.}\ }\textbf {\bibinfo
  {volume} {107}},\ \bibinfo {pages} {210404} (\bibinfo {year}
  {2011})}\BibitemShut {NoStop}%
\bibitem [{\citenamefont {Flammia}\ and\ \citenamefont
  {Liu}(2011)}]{Flammia2011}%
  \BibitemOpen
  \bibfield  {author} {\bibinfo {author} {\bibfnamefont {Steven~T.}\
  \bibnamefont {Flammia}}\ and\ \bibinfo {author} {\bibfnamefont {Yi-Kai}\
  \bibnamefont {Liu}},\ }\bibfield  {title} {\enquote {\bibinfo {title}
  {{Direct Fidelity Estimation from Few Pauli Measurements}},}\ }\href
  {\doibase 10.1103/PhysRevLett.106.230501} {\bibfield  {journal} {\bibinfo
  {journal} {Phys. Rev. Lett.}\ }\textbf {\bibinfo {volume} {106}},\ \bibinfo
  {pages} {230501} (\bibinfo {year} {2011})}\BibitemShut {NoStop}%
\bibitem [{\citenamefont {Brown}\ \emph {et~al.}(2011)\citenamefont {Brown},
  \citenamefont {Wilson}, \citenamefont {Colombe}, \citenamefont {Ospelkaus},
  \citenamefont {Meier}, \citenamefont {Knill}, \citenamefont {Leibfried},\
  and\ \citenamefont {Wineland}}]{Brown2011}%
  \BibitemOpen
  \bibfield  {author} {\bibinfo {author} {\bibfnamefont {K.~R.}\ \bibnamefont
  {Brown}}, \bibinfo {author} {\bibfnamefont {A.~C.}\ \bibnamefont {Wilson}},
  \bibinfo {author} {\bibfnamefont {Y.}~\bibnamefont {Colombe}}, \bibinfo
  {author} {\bibfnamefont {C.}~\bibnamefont {Ospelkaus}}, \bibinfo {author}
  {\bibfnamefont {A.~M.}\ \bibnamefont {Meier}}, \bibinfo {author}
  {\bibfnamefont {E.}~\bibnamefont {Knill}}, \bibinfo {author} {\bibfnamefont
  {D.}~\bibnamefont {Leibfried}}, \ and\ \bibinfo {author} {\bibfnamefont
  {D.~J.}\ \bibnamefont {Wineland}},\ }\bibfield  {title} {\enquote {\bibinfo
  {title} {{Single-qubit-gate error below $10^{-4}$ in a trapped ion}},}\
  }\href {\doibase 10.1103/PhysRevA.84.030303} {\bibfield  {journal} {\bibinfo
  {journal} {Phys. Rev. A}\ }\textbf {\bibinfo {volume} {84}},\ \bibinfo
  {pages} {030303} (\bibinfo {year} {2011})}\BibitemShut {NoStop}%
\bibitem [{\citenamefont {Epstein}\ \emph {et~al.}(2014)\citenamefont
  {Epstein}, \citenamefont {Cross}, \citenamefont {Magesan},\ and\
  \citenamefont {Gambetta}}]{Epstein2014}%
  \BibitemOpen
  \bibfield  {author} {\bibinfo {author} {\bibfnamefont {Jeffrey~M}\
  \bibnamefont {Epstein}}, \bibinfo {author} {\bibfnamefont {Andrew~W}\
  \bibnamefont {Cross}}, \bibinfo {author} {\bibfnamefont {Easwar}\
  \bibnamefont {Magesan}}, \ and\ \bibinfo {author} {\bibfnamefont {Jay~M}\
  \bibnamefont {Gambetta}},\ }\bibfield  {title} {\enquote {\bibinfo {title}
  {{Investigating the limits of randomized benchmarking protocols}},}\
  }\href
  {\doibase 10.1103/PhysRevA.89.062321} {\bibfield
  {journal} {\bibinfo  {journal} {Phys. Rev. A}\ }\textbf {\bibinfo {volume}
  {89}},\ \bibinfo {pages} {062321} (\bibinfo {year} {2014})}\BibitemShut
  {NoStop}%
\bibitem [{\citenamefont {Granade}\ \emph {et~al.}(2014)\citenamefont
  {Granade}, \citenamefont {Ferrie},\ and\ \citenamefont {Cory}}]{Granade2014}%
  \BibitemOpen
  \bibfield  {author} {\bibinfo {author} {\bibfnamefont {Christopher}\
  \bibnamefont {Granade}}, \bibinfo {author} {\bibfnamefont {Christopher}\
  \bibnamefont {Ferrie}}, \ and\ \bibinfo {author} {\bibfnamefont {D.~G.}\
  \bibnamefont {Cory}},\ }\href {http://arxiv.org/abs/1404.5275} {\enquote
  {\bibinfo {title} {Accelerated randomized benchmarking},}\ } (\bibinfo {year}
  {2014}),\ \Eprint {http://arxiv.org/abs/1404.5275} {1404.5275 [arXiv]}
  \BibitemShut {NoStop}%
\bibitem [{\citenamefont {Dankert}\ \emph {et~al.}(2009)\citenamefont
  {Dankert}, \citenamefont {Cleve}, \citenamefont {Emerson},\ and\
  \citenamefont {Livine}}]{Dankert2009}%
  \BibitemOpen
  \bibfield  {author} {\bibinfo {author} {\bibfnamefont {Christoph}\
  \bibnamefont {Dankert}}, \bibinfo {author} {\bibfnamefont {Richard}\
  \bibnamefont {Cleve}}, \bibinfo {author} {\bibfnamefont {Joseph}\
  \bibnamefont {Emerson}}, \ and\ \bibinfo {author} {\bibfnamefont {Etera}\
  \bibnamefont {Livine}},\ }\bibfield  {title} {\enquote {\bibinfo {title}
  {{Exact and approximate unitary 2-designs and their application to fidelity
  estimation}},}\ }\href {\doibase 10.1103/PhysRevA.80.012304} {\bibfield
  {journal} {\bibinfo  {journal} {Phys. Rev. A}\ }\textbf {\bibinfo {volume}
  {80}},\ \bibinfo {pages} {012304} (\bibinfo {year} {2009})}\BibitemShut
  {NoStop}%
\bibitem [{\citenamefont {Gross}\ \emph {et~al.}(2007)\citenamefont {Gross},
  \citenamefont {Audenaert},\ and\ \citenamefont {Eisert}}]{Gross2007a}%
  \BibitemOpen
  \bibfield  {author} {\bibinfo {author} {\bibfnamefont {David}\ \bibnamefont
  {Gross}}, \bibinfo {author} {\bibfnamefont {K.}~\bibnamefont {Audenaert}}, \
  and\ \bibinfo {author} {\bibfnamefont {Jens}\ \bibnamefont {Eisert}},\
  }\bibfield  {title} {\enquote {\bibinfo {title} {{Evenly distributed
  unitaries: On the structure of unitary designs}},}\ }\href {\doibase
  10.1063/1.2716992} {\bibfield  {journal} {\bibinfo  {journal} {J. Math.
  Phys.}\ }\textbf {\bibinfo {volume} {48}},\ \bibinfo {pages} {052104}
  (\bibinfo {year} {2007})}\BibitemShut {NoStop}%
\bibitem [{\citenamefont {Nielsen}(2002)}]{Nielsen2002}%
  \BibitemOpen
  \bibfield  {author} {\bibinfo {author} {\bibfnamefont {Michael~A.}\
  \bibnamefont {Nielsen}},\ }\bibfield  {title} {\enquote {\bibinfo {title} {{A
  simple formula for the average gate fidelity of a quantum dynamical
  operation}},}\ }\href {\doibase 10.1016/S0375-9601(02)01272-0} {\bibfield
  {journal} {\bibinfo  {journal} {Phys. Lett. A}\ }\textbf {\bibinfo {volume}
  {303}},\ \bibinfo {pages} {249} (\bibinfo {year} {2002})}\BibitemShut
  {NoStop}%
\bibitem [{\citenamefont {Kimmel}\ \emph {et~al.}(2013)\citenamefont {Kimmel},
  \citenamefont {da~Silva}, \citenamefont {Ryan}, \citenamefont {Johnson},\
  and\ \citenamefont {Ohki}}]{Kimmel2013}%
  \BibitemOpen
  \bibfield  {author} {\bibinfo {author} {\bibfnamefont {Shelby}\ \bibnamefont
  {Kimmel}}, \bibinfo {author} {\bibfnamefont {Marcus~P.}\ \bibnamefont
  {da~Silva}}, \bibinfo {author} {\bibfnamefont {Colm~A.}\ \bibnamefont
  {Ryan}}, \bibinfo {author} {\bibfnamefont {Blake~R.}\ \bibnamefont
  {Johnson}}, \ and\ \bibinfo {author} {\bibfnamefont {Thomas~A.}\ \bibnamefont
  {Ohki}},\ }\bibfield  {title} {\enquote {\bibinfo {title} {{Robust Extraction
  of Tomographic Information via Randomized Benchmarking}},}\ }\href
  {\doibase 10.1103/PhysRevX.4.011050} {\bibfield
  {journal} {\bibinfo  {journal} {Phys. Rev. X}\ }\textbf {\bibinfo {volume}
  {4}},\ \bibinfo {pages} {011050} (\bibinfo {year} {2014})}\BibitemShut
  {NoStop}%
\bibitem [{\citenamefont {Gambetta}\ \emph {et~al.}(2012)\citenamefont
  {Gambetta}, \citenamefont {C\'{o}rcoles}, \citenamefont {Merkel},
  \citenamefont {Johnson}, \citenamefont {Smolin}, \citenamefont {Chow},
  \citenamefont {Ryan}, \citenamefont {Rigetti}, \citenamefont {Poletto},
  \citenamefont {Ohki}, \citenamefont {Ketchen},\ and\ \citenamefont
  {Steffen}}]{Gambetta2012}%
  \BibitemOpen
  \bibfield  {author} {\bibinfo {author} {\bibfnamefont {Jay~M.}\ \bibnamefont
  {Gambetta}}, \bibinfo {author} {\bibfnamefont {A.~D.}\ \bibnamefont
  {C\'{o}rcoles}}, \bibinfo {author} {\bibfnamefont {Seth~T.}\ \bibnamefont
  {Merkel}}, \bibinfo {author} {\bibfnamefont {Blake~R.}\ \bibnamefont
  {Johnson}}, \bibinfo {author} {\bibfnamefont {John~A.}\ \bibnamefont
  {Smolin}}, \bibinfo {author} {\bibfnamefont {Jerry~M.}\ \bibnamefont {Chow}},
  \bibinfo {author} {\bibfnamefont {Colm~A.}\ \bibnamefont {Ryan}}, \bibinfo
  {author} {\bibfnamefont {Chad}\ \bibnamefont {Rigetti}}, \bibinfo {author}
  {\bibfnamefont {S.}~\bibnamefont {Poletto}}, \bibinfo {author} {\bibfnamefont
  {Thomas~a.}\ \bibnamefont {Ohki}}, \bibinfo {author} {\bibfnamefont
  {Mark~B.}\ \bibnamefont {Ketchen}}, \ and\ \bibinfo {author} {\bibfnamefont
  {M.}~\bibnamefont {Steffen}},\ }\bibfield  {title} {\enquote {\bibinfo
  {title} {{Characterization of Addressability by Simultaneous Randomized
  Benchmarking}},}\ }\href {\doibase 10.1103/PhysRevLett.109.240504} {\bibfield
   {journal} {\bibinfo  {journal} {Phys. Rev. Lett.}\ }\textbf {\bibinfo
  {volume} {109}},\ \bibinfo {pages} {240504} (\bibinfo {year}
  {2012})}\BibitemShut {NoStop}%
\bibitem [{\citenamefont {Hoeffding}(1963)}]{Hoeffding1963}%
  \BibitemOpen
  \bibfield  {author} {\bibinfo {author} {\bibfnamefont {Wassily}\ \bibnamefont
  {Hoeffding}},\ }\bibfield  {title} {\enquote {\bibinfo {title} {{Probability
  Inequalities for Sums of Bounded Random Variables}},}\ }\href {\doibase
  10.1080/01621459.1963.10500830} {\bibfield  {journal} {\bibinfo  {journal}
  {J. Am. Stat. Assoc.}\ }\textbf {\bibinfo {volume} {58}},\ \bibinfo {pages}
  {13} (\bibinfo {year} {1963})}\BibitemShut {NoStop}%
\bibitem [{\citenamefont {Ruskai}\ \emph {et~al.}(2002)\citenamefont {Ruskai},
  \citenamefont {Szarek},\ and\ \citenamefont {Werner}}]{Ruskai2002}%
  \BibitemOpen
  \bibfield  {author} {\bibinfo {author} {\bibfnamefont {Mary~Beth}\
  \bibnamefont {Ruskai}}, \bibinfo {author} {\bibfnamefont {Stanislaw}\
  \bibnamefont {Szarek}}, \ and\ \bibinfo {author} {\bibfnamefont {Elisabeth}\
  \bibnamefont {Werner}},\ }\bibfield  {title} {\enquote {\bibinfo {title} {{An
  analysis of completely-positive trace-preserving maps on}},}\ }\href
  {\doibase 10.1016/S0024-3795(01)00547-X} {\bibfield  {journal} {\bibinfo
  {journal} {Linear Algebra Appl.}\ }\textbf {\bibinfo {volume} {347}},\
  \bibinfo {pages} {159} (\bibinfo {year} {2002})}\BibitemShut {NoStop}%
\bibitem [{\citenamefont {Johnson}\ \emph {et~al.}(2014)\citenamefont
  {Johnson}, \citenamefont {Ryan}, \citenamefont {da~Silva}, \citenamefont
  {Kimmel},\ and\ \citenamefont {Ohki}}]{Kimmel_in_prep}%
  \BibitemOpen
  \bibfield  {author} {\bibinfo {author} {\bibfnamefont {B.~R.}\ \bibnamefont
  {Johnson}}, \bibinfo {author} {\bibfnamefont {C.~A.}\ \bibnamefont {Ryan}},
  \bibinfo {author} {\bibfnamefont {M.~P.}\ \bibnamefont {da~Silva}}, \bibinfo
  {author} {\bibfnamefont {S.}~\bibnamefont {Kimmel}}, \ and\ \bibinfo {author}
  {\bibfnamefont {T.}~\bibnamefont {Ohki}},\ }\href@noop {} {}\bibinfo
  {howpublished} {in preparation} (\bibinfo {year} {2014})\BibitemShut
  {NoStop}%
\bibitem [{\citenamefont {Goodman}\ and\ \citenamefont
  {Wallach}(2009)}]{Goodman2009}%
  \BibitemOpen
  \bibfield  {author} {\bibinfo {author} {\bibfnamefont {Roe}\ \bibnamefont
  {Goodman}}\ and\ \bibinfo {author} {\bibfnamefont {Nolan~R.}\ \bibnamefont
  {Wallach}},\ }\href@noop {} {\emph {\bibinfo {title} {{Symmetry,
  Representations, and Invariants}}}},\ Graduate Texts in Mathematics\
  (\bibinfo  {publisher} {Springer},\ \bibinfo {year} {2009})\BibitemShut
  {NoStop}%
\bibitem [{\citenamefont {Wolf}\ and\ \citenamefont {Cirac}(2008)}]{Wolf2008}%
  \BibitemOpen
  \bibfield  {author} {\bibinfo {author} {\bibfnamefont {Michael~M.}\
  \bibnamefont {Wolf}}\ and\ \bibinfo {author} {\bibfnamefont {J.~Ignacio}\
  \bibnamefont {Cirac}},\ }\bibfield  {title} {\enquote {\bibinfo {title}
  {{Dividing Quantum Channels}},}\ }\href {\doibase 10.1007/s00220-008-0411-y}
  {\bibfield  {journal} {\bibinfo  {journal} {Commun. Math. Phys.}\ }\textbf
  {\bibinfo {volume} {279}},\ \bibinfo {pages} {147--168} (\bibinfo {year}
  {2008})}\BibitemShut {NoStop}%
\bibitem [{\citenamefont {P{\'e}rez-Garc{\'\i}a}\ \emph
  {et~al.}(2006)\citenamefont {P{\'e}rez-Garc{\'\i}a}, \citenamefont {Wolf},
  \citenamefont {Petz},\ and\ \citenamefont {Ruskai}}]{Perez-Garcia2006}%
  \BibitemOpen
  \bibfield  {author} {\bibinfo {author} {\bibfnamefont {David}\ \bibnamefont
  {P{\'e}rez-Garc{\'\i}a}}, \bibinfo {author} {\bibfnamefont {Michael~M.}\
  \bibnamefont {Wolf}}, \bibinfo {author} {\bibfnamefont {Denes}\ \bibnamefont
  {Petz}}, \ and\ \bibinfo {author} {\bibfnamefont {Mary~Beth}\ \bibnamefont
  {Ruskai}},\ }\bibfield  {title} {\enquote {\bibinfo {title} {{Contractivity
  of positive and trace-preserving maps under $L_p$ norms}},}\ }\href {\doibase
  10.1063/1.2218675} {\bibfield  {journal} {\bibinfo  {journal} {J. Math.
  Phys.}\ }\textbf {\bibinfo {volume} {47}},\ \bibinfo {pages} {083506}
  (\bibinfo {year} {2006})}\BibitemShut {NoStop}%
\bibitem [{\citenamefont {Evans}\ and\ \citenamefont
  {Hoegh-Krohn}(1978)}]{Evans1978}%
  \BibitemOpen
  \bibfield  {author} {\bibinfo {author} {\bibfnamefont {David~E}\ \bibnamefont
  {Evans}}\ and\ \bibinfo {author} {\bibfnamefont {R.}~\bibnamefont
  {Hoegh-Krohn}},\ }\bibfield  {title} {\enquote {\bibinfo {title} {{Spectral
  Properties of Positive Maps on $C^*$-Algebras}},}\ }\href {\doibase
  10.1112/jlms/s2-17.2.345} {\bibfield  {journal} {\bibinfo  {journal} {J.
  London Math. Soc.}\ }\textbf {\bibinfo {volume} {17}},\ \bibinfo {pages}
  {345--355} (\bibinfo {year} {1978})}\BibitemShut {NoStop}%
\bibitem [{\citenamefont {Fuchs}\ and\ \citenamefont
  {Graaf}(1999)}]{Fuchs1999}%
  \BibitemOpen
  \bibfield  {author} {\bibinfo {author} {\bibfnamefont {Christopher~A.}\
  \bibnamefont {Fuchs}}\ and\ \bibinfo {author} {\bibfnamefont {Jeroen Van~De}\
  \bibnamefont {Graaf}},\ }\bibfield  {title} {\enquote {\bibinfo {title}
  {{Cryptographic Distinguishability Measures for Quantum-Mechanical
  States}},}\ }\href {\doibase 10.1109/18.761271} {\bibfield  {journal}
  {\bibinfo  {journal} {Inf. Theory, IEEE Trans.}\ }\textbf {\bibinfo {volume}
  {45}},\ \bibinfo {pages} {1216} (\bibinfo {year} {1999})}\BibitemShut
  {NoStop}%
\bibitem [{\citenamefont {Gilchrist}\ \emph {et~al.}(2005)\citenamefont
  {Gilchrist}, \citenamefont {Langford},\ and\ \citenamefont
  {Nielsen}}]{Gilchrist2005}%
  \BibitemOpen
  \bibfield  {author} {\bibinfo {author} {\bibfnamefont {Alexei}\ \bibnamefont
  {Gilchrist}}, \bibinfo {author} {\bibfnamefont {Nathan}\ \bibnamefont
  {Langford}}, \ and\ \bibinfo {author} {\bibfnamefont {Michael}\ \bibnamefont
  {Nielsen}},\ }\bibfield  {title} {\enquote {\bibinfo {title} {{Distance
  measures to compare real and ideal quantum processes}},}\ }\href {\doibase
  10.1103/PhysRevA.71.062310} {\bibfield  {journal} {\bibinfo  {journal} {Phys.
  Rev. A}\ }\textbf {\bibinfo {volume} {71}},\ \bibinfo {pages} {062310}
  (\bibinfo {year} {2005})}\BibitemShut {NoStop}%
\bibitem [{\citenamefont {Kitaev}(1997)}]{Kitaev1997}%
  \BibitemOpen
  \bibfield  {author} {\bibinfo {author} {\bibfnamefont {A~Yu}\ \bibnamefont
  {Kitaev}},\ }\bibfield  {title} {\enquote {\bibinfo {title} {{Quantum
  computations: algorithms and error correction}},}\ }\href {\doibase
  10.1070/RM1997v052n06ABEH002155} {\bibfield  {journal} {\bibinfo  {journal}
  {Russ. Math. Surv.}\ }\textbf {\bibinfo {volume} {52}},\ \bibinfo {pages}
  {1191--1249} (\bibinfo {year} {1997})}\BibitemShut {NoStop}%
\bibitem [{\citenamefont {Horodecki}\ \emph {et~al.}(1999)\citenamefont
  {Horodecki}, \citenamefont {Horodecki},\ and\ \citenamefont
  {Horodecki}}]{Horodecki1999}%
  \BibitemOpen
  \bibfield  {author} {\bibinfo {author} {\bibfnamefont {Micha{\l}}\
  \bibnamefont {Horodecki}}, \bibinfo {author} {\bibfnamefont {Pawe{\l}}\
  \bibnamefont {Horodecki}}, \ and\ \bibinfo {author} {\bibfnamefont {Ryszard}\
  \bibnamefont {Horodecki}},\ }\bibfield  {title} {\enquote {\bibinfo {title}
  {{General teleportation channel, singlet fraction, and quasidistillation}},}\
  }\href {\doibase 10.1103/PhysRevA.60.1888} {\bibfield  {journal} {\bibinfo
  {journal} {Phys. Rev. A}\ }\textbf {\bibinfo {volume} {60}},\ \bibinfo
  {pages} {1888} (\bibinfo {year} {1999})}\BibitemShut {NoStop}%
\bibitem [{\citenamefont {Watrous}(2012)}]{Watrous2012}%
  \BibitemOpen
  \bibfield  {author} {\bibinfo {author} {\bibfnamefont {John}\ \bibnamefont
  {Watrous}},\ }\bibfield  {title} {\enquote {\bibinfo {title} {{Simpler
  semidefinite programs for completely bounded norms}},}\ }\href@noop {} {\
  (\bibinfo {year} {2012})},\ \Eprint {http://arxiv.org/abs/1207.5726v2}
  {arXiv:1207.5726v2} \BibitemShut {NoStop}%
\bibitem [{\citenamefont {Bhatia}(1997)}]{Bhatia1997}%
  \BibitemOpen
  \bibfield  {author} {\bibinfo {author} {\bibfnamefont {Rajendra}\
  \bibnamefont {Bhatia}},\ }\href@noop {} {\emph {\bibinfo {title} {{Matrix
  Analysis}}}}\ (\bibinfo  {publisher} {Springer},\ \bibinfo {year}
  {1997})\BibitemShut {NoStop}%
\bibitem [{\citenamefont {Blume-Kohout}\ \emph {et~al.}(2013)\citenamefont
  {Blume-Kohout}, \citenamefont {Gamble}, \citenamefont {Nielsen},
  \citenamefont {Mizrahi}, \citenamefont {Sterk},\ and\ \citenamefont
  {Maunz}}]{gatesettomography}%
  \BibitemOpen
  \bibfield  {author} {\bibinfo {author} {\bibfnamefont {Robin}\ \bibnamefont
  {Blume-Kohout}}, \bibinfo {author} {\bibfnamefont {John~King}\ \bibnamefont
  {Gamble}}, \bibinfo {author} {\bibfnamefont {Erik}\ \bibnamefont {Nielsen}},
  \bibinfo {author} {\bibfnamefont {Jonathan}\ \bibnamefont {Mizrahi}},
  \bibinfo {author} {\bibfnamefont {Jonathan~D}\ \bibnamefont {Sterk}}, \ and\
  \bibinfo {author} {\bibfnamefont {Peter}\ \bibnamefont {Maunz}},\ }\bibfield
  {title} {\enquote {\bibinfo {title} {{Robust, self-consistent, closed-form
  tomography of quantum logic gates on a trapped ion qubit}},}\ }\href@noop {}
  {\  (\bibinfo {year} {2013})},\ \Eprint {http://arxiv.org/abs/1310.4492v1}
  {arXiv:1310.4492v1} \BibitemShut {NoStop}%
\bibitem [{\citenamefont {Mirsky}(1975)}]{VonNeumannTraceInequality}%
  \BibitemOpen
  \bibfield  {author} {\bibinfo {author} {\bibfnamefont {L}~\bibnamefont
  {Mirsky}},\ }\bibfield  {title} {\enquote {\bibinfo {title} {{A trace
  inequality of John von Neumann}},}\ }\href {\doibase 10.1007/BF01647331}
  {\bibfield  {journal} {\bibinfo  {journal} {Monatshefte f\"{u}r Math.}\
  }\textbf {\bibinfo {volume} {79}},\ \bibinfo {pages} {303--306} (\bibinfo
  {year} {1975})}\BibitemShut {NoStop}%
\bibitem [{\citenamefont {Sanz}\ \emph {et~al.}(2010)\citenamefont {Sanz},
  \citenamefont {P\'{e}rez-Garc\'{i}a}, \citenamefont {Wolf},\ and\
  \citenamefont {Cirac}}]{Sanz2009}%
  \BibitemOpen
  \bibfield  {author} {\bibinfo {author} {\bibfnamefont {M.}~\bibnamefont
  {Sanz}}, \bibinfo {author} {\bibfnamefont {D.}~\bibnamefont
  {P\'{e}rez-Garc\'{i}a}}, \bibinfo {author} {\bibfnamefont {M.~M.}\
  \bibnamefont {Wolf}}, \ and\ \bibinfo {author} {\bibfnamefont {J.~I.}\
  \bibnamefont {Cirac}},\ }\bibfield  {title} {\enquote {\bibinfo {title} {A
  quantum version of {W}ielandt's inequality},}\ }\href {\doibase
  10.1109/TIT.2010.2054552} {\bibfield  {journal} {\bibinfo  {journal} {IEEE
  Trans. Info. Theory}\ }\textbf {\bibinfo {volume} {56}},\ \bibinfo {pages}
  {4668--4673} (\bibinfo {year} {2010})}\BibitemShut {NoStop}%
\end{thebibliography}%

\end{document}